\newcommand{\qed}{\hspace*{\fill}$\square$}
 \newtheorem{thm}{Theorem}
 \newtheorem{lem}[thm]{Lemma}
 \newtheorem{cor}[thm]{Corolary}
 \newtheorem{defn}[thm]{Definition}
 \newtheorem{prop}[thm]{Proposition}
 \newenvironment{proof}{\noindent \emph{Proof.}}{\qed}
 \newcommand{\C}{\mathbf{C}}
 \newcommand{\GL}[1]{\mathbf{GL}({#1})}
 \newcommand{\funcion}[3]{#1:\,#2\longrightarrow #3}
 \newcommand{\vect}[1]{\boldsymbol{\mathrm{#1}}}
 \newcommand{\set}[2]{ \{\,#1\,|\,#2\,\}}
 \newcommand{\sset}[1]{ \{#1\} }
 \newcommand{\genset}[2]{ (\,#1\,|\,#2\,)}
 \newcommand{\lin}{\mathrm{Lin}\,}
 \newcommand{\traza}{\mathrm{Tr}}
 \newcommand{\inv}{^{-1}}
 \newcommand{\inver}[1]{\bar #1}
 \newcommand{\prima}{^\prime}
 \newcommand{\primas}{^{\prime\prime}}
 \newcommand{\nin}{ \not\in}
 \newcommand{\ket}[1]{|#1\rangle}
 \newcommand{\bra}[1]{\langle #1|}
 \newcommand{\braket}[2]{\langle #1|#2\rangle}
 \newcommand{\vket}[1]{|\vect{#1}\rangle}
 \newcommand{\spc}{\hspace{0.5mm}}
 \newcommand{\D} {\mathcal D}
 \newcommand{\Hilb} {\mathcal H}
 \newcommand{\Alg} {\mathcal A}
 \newcommand{\Ribb} {\mathcal F}
 \newcommand{\Ribbclosed} {\mathcal K}
 \newcommand{\Conj}[1] {(#1)_{\mathrm{cj}}}
 \newcommand{\Irr}[1] {(#1)_{\mathrm {ir}}}
 \newcommand{\Ind}[1] {{#1}_{\mathrm {Ind}}}
 \newcommand{\Norm}[1] {\mathbf{N}_{#1}}
 \newcommand{\comp} {\bar}
 \newcommand{\complong}[1] {\overline {#1}}
 \newcommand{\alledges} {E_{\mathrm{ext}}}
 \newcommand{\Edual} {E^\bigtriangleup}
 \newcommand{\Edirect} {E^\bigtriangledown}
 \newcommand{\Edualcomp} {{\comp E}^\bigtriangleup}
 \newcommand{\Edirectcomp} {{\comp E}^\bigtriangledown}
 \newcommand{\xpctd}[1]{\langle#1\rangle}%_{\mathrm{GS}}}
\newcommand{\Ribbopen} {\mathcal J}
\begin{document}

\title[Short Title]{
A Family of Non-Abelian Kitaev Models on a Lattice: \\
Topological Condensation and confinement}

\author{H. Bombin and M.A. Martin-Delgado}
\affiliation{
Departamento de F\'{\i}sica Te\'orica I, Universidad Complutense,
28040. Madrid, Spain.
}

\begin{abstract}
We study a family of non-Abelian topological models in a lattice
that arise by modifying the Kitaev model through the introduction of
single-qudit terms. The effect of these terms amounts to a reduction
of the discrete gauge symmetry with respect to the original systems,
which corresponds to a generalized mechanism of explicit symmetry
breaking. The topological order is either partially lost or
completely destroyed throughout the various models. The new systems
display condensation and confinement of the topological charges
present in the standard non-Abelian Kitaev models, which we study in
terms of ribbon operator algebras.

\end{abstract}

\pacs{71.10.-w, 11.15.-q, 03.67.Pp,  71.27.+a}

\maketitle

\section{Introduction}
\label{sect_I}

The subject of topological orders poses new challenges in the
understanding of new phases of  matter due to novel effects in
quantum many-body physics \cite{wenbook04}. There is by now a good
deal of examples in condensed matter, like in fractional Hall effect
systems \cite{wenniu90}, \cite{wen90}, \cite{wen92}, \cite{wen91},
\cite{frohlichkerler91}, short range RVB (Resonating Valence Bond)
models \cite{roksharkivelson88}, \cite{readchakraborty89},
\cite{moessnersondhi01}, \cite{ardonne_etal04}. or in quantum spin
liquids \cite{kalmeyerlaughlin87}, \cite{wenwilczekzee89},
\cite{readsachdev91}, \cite{wen91}, \cite{senthilfisher00},
\cite{wen02}, \cite{sachdevparks02}, \cite{balentsfishergirvin02}.
There exists also exactly solvable models \cite{kitaev},
\cite{levinwen03},
\cite{levinwen05}, \cite{topodistill}, \cite{tetraUQC}, that are
paradigmatic examples for exhibiting topological properties that can
be addressed in full detail since the whole spectrum of those models
is known. Although topological orders typically arise in the quantum
physics of two spatial dimensions, it is possible to construct
exactly solvable models in three spatial dimensions and beyond
\cite{topo3D}. There is yet another field in which topological
orders appear naturally. It corresponds to discrete gauge theories
that arise as a consequence of a spontaneous symmetry breaking
mechanism from a continuous gauge group down to a discrete gauge
group \cite{Bais_80}, \cite{Bais_81}, \cite{Krauss_Wilczek_89},
\cite{Preskill_Krauss_90}, \cite{Wild_Bais_98}, \cite{BSS_02a},
\cite{BSS_02b}. In these two-dimensional topological quantum field
theories, the standard algebraic language to describe the residual
gauge invariant properties of the excitations is that of
quasitriangular Hopf algebras (quantum groups) \cite{SB_01}.

At the same time, quantum systems with topological order
provide new expectations for finding alternative ways of robust quantum
computation \cite{kitaev},  \cite{dennis_etal02},
\cite{bravyikitaev98}.
In fact, there are several forms to set up schemes for topological
quantum computation, some of them based on the braiding of quasiparticles
\cite{kitaev},  \cite{Ogburn99}, \cite{freedman_etal00a},
\cite{freedman_etal00b}, \cite{freedman_etal01}, \cite{mochon_04},
\cite{georgiev_06}, \cite{rmp_topo_07}, some of them based solely
on the topological entangled properties of the degenerate ground states,
without selective addressing of the physical qubits and without resorting
to braiding of excitations \cite{topodistill}, \cite{tetraUQC},
and others based on cluster states \cite{RHG_07}.

Topological orders can be thought of as new forms of long range
entanglement and they are at the crossroads of condensed matter and
quantum information \cite{rmp},  \cite{homologicalerror},
\cite{surfaceVsColor}, \cite{martindelgado04},
\cite{Dur_Briegel_07}, \cite{rico_briegel_07}, \cite{korepin_07},
 \cite{mosaic_spin_models_07}, \cite{lidar_06}.
Some forms of hidden topological orders in
quantum spin chains can be detected with string order parameters,
which in turn can be interpreted in the light of
quantum information techniques,
and their long-range entanglement detected with them \cite{martindelgado04}
using matrix product states from condensed matter.

There are experimental proposals based on optical lattices
\cite{duandemlerlukin03}, \cite{zoller05} to implement models with
Abelian topological orders \cite{exp_prop_07}, and in particular,
the study of the string order parameter mentioned above can also be proposed
by means of these techniques \cite{martindelgado04b}. There are also
proposals for non-Abelian models based on Josephson junction
arrays \cite{Ioffe_02}, \cite{doucot_04}, \cite{AKTB_07},
in addition to the largely studied case of the fractional quantum Hall effect
\cite{rmp_topo_07}.

One of the emblematic examples of exactly solvable models to study topological
orders on a lattice is the Kitaev model \cite{kitaev}, both in its Abelian and
non-Abelian versions. It captures the algebraic properties exhibited by the
discrete gauge theories mentioned above. In addition,
it provides us with an explicit realization of a Hamiltonian on a lattice,
with the bonus that it allows for a model of topological quantum computation.

Comparatively, there are much less works on the non-Abelian Kitaev model
than in the Abelian case (toric code). This is due, to some extent,
to the additional mathematical technical difficulties presented by the
non-Abelian case which is traditionally introduced with the language
of quasi-triangular Hopf algebras and their representations
\cite{DPR_90}, \cite{Bais_Driel_Wild_92}.
Here we have made an effort to explain its contents in full detail
and clarity with simpler algebraic tools based on group theory and
their representations. Our goal is twofold: to make the model
more accessible to a broader audience with a previous knowledge on
the Abelian toric code, and to use that simpler presentation as a
starting point for considering more general models.

In this paper we introduce a family of non-Abelian topological
models on a lattice, such that the standard Kitaev model corresponds
to a particular case. More specifically, we study a two-parameter
family labeled by a pair of subgroups $N\subset M\subset G$, $N$
normal in $G$, where $G$ is a discrete non-Abelian gauge group. The
particular case $N=1$, $M=G$ correspond to the original Kitaev
models. The Hamiltonians of the family, denoted $H_G^{N,M}$, are
explicitly constructed in eq. \eqref{Hamiltoniano_NM}. The standard
vertex ('electric') operators are modified according to the subgroup
$M$, while the face ('magnetic') operators change in accordance with
$N$. In addition, there are new terms entering in the Hamiltonians
which act on the edges of the lattice. Since there is a qudit
attached to each edge these are single-qudit terms. Depending on the
choice of the pair of subgroups $(M,N)$ with respect to $G$, the
non-Abelian discrete gauge group of the whole Hamiltonian
$H_G^{N,M}$ may range from $G$ down to the trivial group when $M=N$.
This is so because the gauge group for these models turns out to be
given by $G'=M/N$. Therefore, the new family of non-Abelian models
provides us with a mechanism of explicit symmetry breaking of an
original Hamiltonian with large discrete gauge symmetry group. In
other words, this mechanism can also be seen as a symmetry-reduction
mechanism, since we may have still a smaller gauge symmetry present
in the Hamiltonian.

The new edge terms do not commute with the vertex and face terms of
the original Hamiltonian, but this can be compensated by slightly
changing these vertex and face terms. This change corresponds to
studying the regimen in which the single-qudit terms have a higher
coupling constant. Choosing the models this way, we can study their
ground state and also the charge condensation phenomena. At least in
some cases, single-qudit terms can be understood as a mechanism for
introducing string tension, or more appropriately 'ribbon tension',
to some of the quasiparticle excitations which thus get confined. In
those cases, a complete characterization of the charge types and
domain wall fluxes will be given.

In order to facilitate both the exposition of the results and the
readability of the manuscript, throughout the main text we will be
giving the main constructions and results omitting many auxiliary
details or proofs. However, all these can be found in a well-ordered
form in a complete set of appendices.

We hereby summarize briefly some of our main results:

\noindent i/ we introduce a family of Hamiltonians defined on
two-dimensional spatial lattices of arbitrary topology which exhibit
a variety of discrete non-Abelian gauge group symmetry and
topological orders;

\noindent ii/ the ground state of the models can be exactly given
and characterized in terms of open a boundary ribbon operators. In
many interesting cases the spectrum of excitations can be
characterized accordingly;

\noindent iii/ the new models show condensation and confinement of
the charges in the original models with Hamiltonian $H_G$;

\noindent iv/ in order to facilitate and complement the study of the
family of models, we have carried out a thorough clarification of
the main properties of the standard non-Abelian Kitaev model. In
particular:

\noindent iv.a/ The ribbon operator algebra is introduced in an
intrinsic way, with the motivation to find operators that describe
excitations.

\noindent iv.b/ We study in detail and generalize the concept of
ribbon. In particular, closed ribbons and a related algebra are
defined, and their transformation properties described.

\noindent iv.c/ The vertex and face operators that appear in the
Hamiltonian are related to elementary closed ribbon operators,
showing that everything in the models can be translated to the
language of ribbons.

\noindent iv.d/ We give a detailed account of two-particle states,
giving explicitly a basis for the states that clarifies the meaning
of the labels for topological charge.

\noindent v/ A description of the ground state in terms of boundary
ribbon operators is given.

This paper is organized as follows: in Sect.\ref{sect_II} we treat
the standard non-abelian Kitaev model. We start explaining the terms
appearing in the Hamiltonian and go on characterizing the ground
state and quasiparticle excitations by means of closed ribbon
operators. We also present an explicit characterization of the
topological charges of the model and study when single-quasiparticle
states are possible. In Sect.\ref{sect_III} we motivate the new
family of non-Abelian model Hamiltonians and present their generic
properties. Then, we show how these models exhibit topological
condensation and confinement described by domain walls. To this end
we make use of closed and open ribbon operator algebras.
Sect.\ref{sect_conclusions} is devoted to conclusions.

Appendices deserve special attention since they contain the detailed
and basic explanations of all the constructions used throughout the
text. Specifically, Appendix \ref{appendix_A} contains a brief
summary of representation theory for group algebras, their centers
and induced characters. In Appendix \ref{appendix_B} we perform an
extensive treatment of ribbon operators, which are necessary to
describe the whole spectrum of the models. We define ribbons as
geometrical objects and then construct and characterize a series of
ribbon operator algebras. In Appendix \ref{appendix_C} we study the
relationship between certain ribbon transformations and the action
of ribbon operator algebras on suitable subspaces, which is a key
ingredient in describing the topological properties of the models.
In Appendix \ref{appendix_D} we give some details about the local
degrees of freedom that appear in the Hilbert space of two-particle
excitations. In Appendix \ref{apendice_excitaciones_solitarias} we
explain why single-quasiparticle states exist in non-abelian models
on surfaces of nontrivial topology. Finally, in Appendix
\ref{apendice_condensacion} we show several results needed for
condensation and ground state characterization.

\section{Non-Abelian Kitaev  Model}
\label{sect_II}

\subsection{Hamiltonian}

\begin{figure}
\includegraphics[width=7 cm]{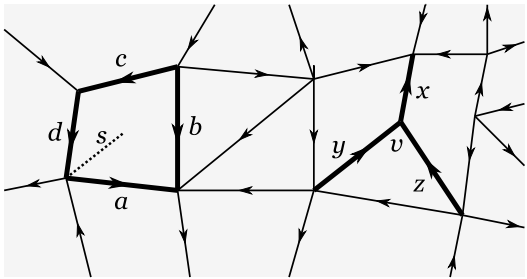}
\caption{
 The two dimensional lattice that we consider are arbitrary in shape and have oriented edges. Thick lines
 display the support of a face operator (left) and a vertex operator
 (right).
}\label{figura_red}
\end{figure}

The data necessary for building up the model, as introduced by
Kitaev\cite{kitaev}, are any given finite group $G$ and a lattice
embedded in an orientable surface. The edges of the lattice must be
oriented, as shown in Fig.~\ref{figura_red}. At every edge of the
lattice we place a qudit, that is, a $|G|$-dimensional quantum
system, with Hilbert space $\Hilb\prima_G$ with orthonormal basis
$\set {\ket g}{g \in G}$. This way, we identify $\Hilb\prima_G$ with
the group algebra $\C [G]$. The Hilbert space for the whole system
is then $\Hilb_G := {\Hilb_G\prima}^{\otimes n}$, with $n$ the
number of edges in the lattice. For notational convenience, we will
denote the inverse of elements of $G$ as $\bar g$ instead of the
usual $g\inv$. For completeness, we give a recollection of some
basic properties of the group algebra $\C[G]$ in Appendix
\ref{appendix_A}.

Usually, when we talk about sites in a lattice we mean its vertices.
However, here we will say that a site $s$ is a pair $s=(v,f)$ with
$f$ a face and $v$ one of its vertices\cite{kitaev}. The need to
consider sites will be clarified later, when we discuss the
excitations of the model in terms of strips associated to ribbon
operators. This is in contrast with the Abelian case where one only
needs to consider strings both in the direct and dual lattices. As
it happens, to obtain a non-Abelian generalization we need to
consider vertices and faces (plaquettes) in an unified manner
through the concepts of sites, and strings and dual strings in an
unified manner through the concept of ribbons.

The Hamiltonian of interest, as introduced in \cite{kitaev}, is
\begin{equation}\label{Hamiltoniano_kitaev}
H_G = - \sum_v A_v - \sum_f B_f,
\end{equation}
where the sums run over vertices $v$ and faces $f$. The terms $A_v$
and $B_f$ are projectors, called respectively vertex and face
operators, or electric and magnetic operators. They commute with
each other\eqref{conmutacion AB}. In what follows, we give their
explicit form.

First, we need a group of local operators at each vertex. We label
its elements as $A_v^g$, $g\in G$, with
$A_v^gA_v^{g\prime}=A_v^{gg\prime}$ so that they form a
representation of $G$ on $\Hilb_G$. The operators $A_v$ act only on
those edges that meet at $v$, and this action depends on the
orientation of the edge, inwards or outwards $v$. For example, for
the vertex $v$ of figure \eqref{figura_red} we have
\begin{equation}\label{definicion_vertex_op_g}
A_v^g \ket {x, y, z,\cdots} := \ket{gx, y\inver g, z\inver
g,\cdots},
\end{equation}
where the dots represent other qudits, which do not change. These
are the ``local gauge transformation''\cite{kitaev} operators. The
vertex operators $A_v$ that appear in the Hamiltonian are projectors
onto the trivial sector of the representation of $G$ at $v$, that is
\begin{equation}\label{definicion_vertex_op}
A_v := \frac 1 {|G|}\sum_{h\in G}A_v^h.
\end{equation}

Now let $s=(v,f)$ be a site and $p_s$ denote the closed path with
its endpoints in $v$ and running once and counterclockwise through
the border of $f$. That is, $p_s$ is related to an elementary
plaquette. We can then consider operators $B_s^g$, $g\in G$, that
project onto those states with value $g$ for the `product along
$p_s$'. For example, for the site $s$ of figure \eqref{figura_red}
we have
\begin{equation}\label{definicion_face_op_g}
B_s^g \ket {a, b, c, d, \dots} := \delta_{g,a\inver b cd}\ket {a, b,
c, d, \dots}.
\end{equation}
These are the ``magnetic charge''\cite{kitaev} operators. Note that
the orientation of the edges respect to the path is relevant. The
face operators $B_f$ that appear in the Hamiltonian are projectors
onto the trivial flux, that is
\begin{equation}\label{definicion_face_op}
B_f := B_s^1,
\end{equation}
where $s$ is any site with $s=(v,f)$ and $1$ is the unit of $G$. The
operator $B_f$ can be labeled just with the face, not with the
particular site, because if the flux is trivial for a site then it
is so for any other in the same face.

Since the Hamiltonian is a sum of projector operators, the ground
state subspace contains those states $\ket\xi$ which are left
invariant by the action of the vertex and face operators, namely,
\begin{equation}\label{propiedades_GS}
A_v \ket \xi = B_f \ket \xi = \ket \xi,
\end{equation}
for every $v$ and $f$. That is, the projector onto the ground state
is
\begin{equation}
P_{\text {GS}} = \prod_v A_v \prod_f B_f.
\end{equation}
In the sphere or the plane, there is no ground state
degeneracy\cite{kitaev}. In particular, the ground state can be
obtained easily
\begin{equation}\label{ground_state_G}
\ket{\psi_G} = P_{\text {GS}} \spc \vket 1 = \prod_v A_v \spc \vket
1,
\end{equation}
where $\vket 1$ is the state with all the qudits in the state $\ket
1$.

If an eigenstate violates some of the conditions
\eqref{propiedades_GS} it is an excited state. Note that there is an
energy gap from the ground state to excited states and that
excitations are localized. If $A_v\ket \xi = 0$, then we say that
there is an electric quasiparticle at vertex $v$. If $B_f\ket \xi =
0$, then we say that there is a magnetic quasiparticle at face $f$.
In general electric and magnetic charges are interrelated, as we
will see, and one says that quasiparticles are dyons that live at
sites.

The excitations of these models carry topological charge. Let us
explain what this means. First, consider a configuration with
several excitations, far apart from each other. Each of these
excitations has a type, a property that can be measured locally and
does not change\cite{kitaev}. It is this type what we refer as a
topological charge. The point is that there exist certain degrees of
freedom with a global, topological nature. In particular, there
exists a subsystem which depends on the value of the charges and
such that no local measurement is able to distinguish its
states\cite{kitaev}. This subsystem is thus protected and a good
place to store quantum information. When two quasiparticles get
close, some degrees of freedom of the protected subsystem become
local. This operation, called fusion, allows to perform
measurements. Finally, one can perform unitary operations on the
protected subsystem by suitably 'braiding' the excitations.

We will not be concerned with the particular rules that govern the
processes of fusion and braiding. Instead, we only want to be able
to label the topological charges. But for this, as we shall see, it
is enough to study certain ribbon operator algebras, which are
introduced next.

\subsection{Ribbon operators}
\label{seccion_ribbons}

This section is devoted to ribbon operators\cite{kitaev}, which will
be extensively employed throughout the paper. The main motivation is
that ribbon operators describe quasiparticle excitations above the
ground state in the Non-Abelian Kitaev model, much like string
operators describe the corresponding excitations in the Abelian
case. A full account of the properties and definitions for ribbon
operators used in this section is presented in Appendix
\ref{apendice_todo_ribbons}, specially in \ref{apendice_ribb_rho}
where  a basic characterization theorem for ribbon operators is
proven.

The basic idea behind ribbon
operators is the following. First, ribbons are certain `paths' that
connect sites (not vertices),
as shown in Fig.~\ref{figura_ribbons}. Suppose that for every
pair of sites $s$ and $s\prima$ and for every ribbon $\rho$
connecting them we have at our disposal certain family of operators
$\sset {O^i_\rho}_i$ with support in the ribbon $\rho$. In particular, suppose
that any state $\ket\psi$ with no excitations along $\rho$ except
possibly at $s$ and $s\prima$ can be written as
\begin{equation}\label{eliminar_excitaciones}
\ket\psi = \sum_i O^i_\rho \ket{\psi_i}
\end{equation}
in terms of certain states ${\ket{\psi_i}}$ which have no
excitations along $\rho$ except possibly at $s\prima$, but
\emph{not} at $s$. Then, any state can be obtained from states with
one excitation less by application of such ribbon operators. In the
sphere, where as we will see there are no states with one
excitation, this means that any configuration of excited sites can
be obtained from the GS by application of ribbon operators
connecting these sites. Thus, we are addressing a situation for
quasiparticle excitations which clearly resembles that of the
Abelian Kitaev model,
where strings in the dual and direct lattice have operators attached
to them that create excitations at their endpoints.

\begin{figure}
\includegraphics[width=8 cm]{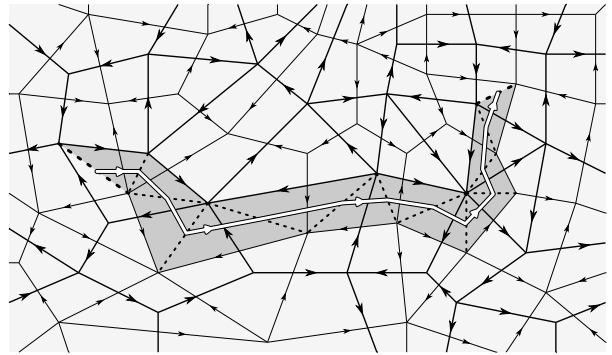}
\caption{
 Thick lines correspond to the lattice and thin lines to the dual lattice.
 Arrows show the orientation of edges and dual edges. Note that
 dual edges are oriented in agreement with edges (see explanation
in main text). The shaded area
 is a ribbon. All the sites that form the ribbon are displayed as
 dashed lines, thicker in the case of the two sites in the ends. The
 arrowed thick white line shows the orientation of the ribbon.
}\label{figura_ribbons}
\end{figure}

Before ribbons can be further considered, we need to give more
structure to our lattice. In particular, we will have to deal with a
`merged' lattice in which the lattice and its dual play a
simultaneous role. The reason to consider this merged lattice is
that the excitations, as commented above, are related to sites,
i.e.,  pairs $s=(v,f)$ of a vertex and a face. Since the dual of a
face is a vertex in the dual lattice, we could equally well say that
a site is a pair of a vertex $v$ and a neighboring dual vertex
$v\prima=f^\ast$. Thus, a site is best visualized as a line
connecting these two vertices, as the dashed lines shown in
Fig.~\ref{figura_ribbons}.

\begin{figure}
\includegraphics[width=6
 cm]{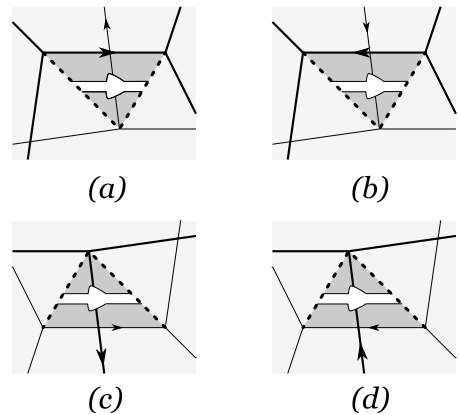}
\caption{
 Each figure represents a triangle $\tau$ (shaded area) that connects two
 sites (dashed lines): $\partial_0 \tau$ to the left and $\partial_1 \tau$ to the
 right. Thick lines correspond to the lattice and thin lines to the dual lattice.
 Arrows show the orientation of edges and dual edges. (a) A direct
 triangle with an edge which matches its direction. (b) A direct
 triangle with an edge which does not match its direction. (c) A
 dual triangle with a dual edge which matches its direction. (d) A
 dual triangle with a dual edge which does not match its direction.}
 \label{figura_triangulos}
\end{figure}

In order to have an oriented merged lattice, we orient the edges of
the dual lattice in such a way that a dual edge $e^\ast$ crosses the
edge $e$ `from right to left', as in Fig.~\ref{figura_ribbons}. This
can be done because we are considering orientable surfaces only.
Just as edges connect vertices in a normal lattice, we need
something that connects sites in the merged lattice. These
connectors turn out to be certain oriented triangles that come into
two types: direct and dual triangles. A direct triangle $\tau$ is
formed with two sites and an edge, as shown in
Fig.~\ref{figura_triangulos}(a,b). The idea is that $\tau$ points
from a site $\partial_0 \tau$ (dashed side to the left) to a site
$\partial_1\tau$ (dashed side to the right) through an edge $e_\tau$
in the direct lattice. Note that the directions of $\tau$ and
$e_\tau$ can either match or not, as the figure shows. A dual
triangle $\tau\prima$ is formed with two sites and a dual edge, see
Fig.~\ref{figura_triangulos}(c,d). Again, it points from a site
$\partial_0 \tau\prima$ to a site $\partial_1\tau\prima$ through an
edge $e_{\tau\prima}^\ast$, which now belongs to the dual lattice.
Again, the directions of $\tau\prima$ and $e_{\tau\prima}^\ast$ can
either match or not, as the figure shows.

Just as in a usual lattice a list of composable edges forms a path,
in the merged lattice a list of composable triangles forms a
triangle strip. So a strip is a sequence of triangles $\rho =
(\tau_1, \dots, \tau_n)$ with the end of a triangle being
the beginning of the next one, $\partial_1 \tau_i =
\partial_0 \tau_{i+1}$. The ends of a strip are $\partial_0 \rho=\partial_0 \tau_1$ and
$\partial_1 \rho=\partial_1 \tau_n$. A triangle strip is called a
ribbon when it does not self-overlap, except possibly on its ends. A
generic example of ribbon is shown in Fig.~\ref{figura_ribbons}. For
a detailed description of triangles, strips and ribbons on a
lattice, we refer to Appendix \ref{apendice_strips}.

Our next task is to attach to each triangle an algebra of operators
which is enough to move quasiparticles between its two ends, in the
sense of \eqref{eliminar_excitaciones}. With this aim in mind, we
first define triangle operators, which are single qudit operators
acting on the edge $e_\tau$ of a triangle $\tau$. these operators
depend on whether the triangle is direct or dual and on the relative
orientation of $e_\tau$. The four possibilities are depicted in
Fig.~\ref{figura_triangulos}. The corresponding operators are
\begin{alignat}{3}
\text{(a) }\,\,&T_\tau^g \ket k = \delta_{g,k}\ket k, \qquad
&\text{(b) }\,\,&T_\tau^g \ket k = \delta_{\inver g,k}\ket k,\\
\text{(c) } \,\, &L_\tau^g \ket k = \ket {gk}, \qquad &\text{(d) }
\,\, &L_\tau^g \ket k = \ket {k\inver g},
\end{alignat}
where $\ket k$ is the state of the qudit at the edge $e_\tau$.
Thus, the triangle operators $T_\tau^g$ of direct triangles are
projectors, like the $B_s^g$, and the triangle operators $L_\tau^g$
of dual triangles form a representation of $G$, like the $A_v^g$.

We start considering a direct triangle $\tau$. Since direct
triangles connect sites with the same face but different vertices,
triangle operators for direct triangles must be able to move
electric, or vertex, excitations.
Let $v,v\prima$ be the two vertices
of $\tau$.
 Then, as a special case of \eqref{transporte_electrico},
\begin{equation}\label{transporte_edge_directo}
|G|\sum_{g\in G} T_\tau^g A_v T_\tau^g = 1.
\end{equation}
Thus, any state $\ket\psi$ can be expressed as
\begin{equation}
\ket\psi=\sum_{g\in G} T_\tau^g \ket {\psi_g}
\end{equation}
with $\ket {\psi_g} = |G| A_v T_\tau^g \ket\psi$ an state with no
excitation at $v$ because $A_v$ projects out electric excitations.
Moreover, $T_\tau^g$ commutes with all face operators and all vertex
operators apart from those in the ends of $\tau$, so that $\ket
{\psi_g}$ has no excited spots which are not already in $\ket
{\psi}$,
except possibly at $v\prima$.
These are the properties we were looking for and thus we
define the algebra $\Alg_\tau$ as that with basis
$\sset{T_\tau^g}_{g\in G}$.

Next we consider a dual triangle $\tau$. Since dual triangles
connect sites with the same vertex but different face, triangle
operators for dual triangles must be able to move magnetic, or face,
excitations.
Let $f$, $f\prima$ be the two faces of $\tau$.
Then, as a
special case of \eqref{transporte_magnetico},
\begin{equation}\label{transporte_edge_dual}
\sum_{g\in G} L_\tau^{\inver g} B_f L_\tau^g = 1.
\end{equation}
Thus, any state $\ket\psi$ can be expressed as
\begin{equation}
\ket\psi=\sum_{g\in G} L_\tau^{\inver g} \ket {\psi_g}
\end{equation}
with $\ket {\psi_g} = B_f L_\tau^g \ket\psi$ an state with no
excitation at $f$, because $B_f$ projects out magnetic excitations.
Moreover, $L_\tau^g$ commutes with all vertex operators apart from
those in the only vertex of $\tau$ and all face operators except
those from the two faces connected by $\tau$, so that $\ket
{\psi_g}$ has no excited sites which are not already in $\ket
{\psi}$,
except possibly at $f\prima$.
These are the properties we were looking for and thus we
define the algebra $\Alg_\tau$ as that with basis
$\sset{L_\tau^g}_{g\in G}$.

Now that we have triangle operators at our disposal, we can move
quasiparticles at will, in the sense of
\eqref{eliminar_excitaciones}. In particular, if we want to move an
excitation from one end of a ribbon $\rho = (\tau_1, \dots, \tau_n)$
to the other end, we just proceed triangle by triangle. In other
words, we can introduce an algebra $\Alg_\rho:=\bigotimes_i
\Ribb_{\tau_i}$ which contains a family of operators
$\sset{O_\rho^i}$ with the properties related to
\eqref{eliminar_excitaciones}. $\Alg_\rho$ can be thought of as the
algebra of all quasiparticle processes along $\rho$. Note that it is
closed under the adjoint operator, $\Alg_\rho^\dagger=\Alg_\rho$.

However, if we are just interested in processes were no
quasiparticles are created or destroyed but in the ends of $\rho$,
as is the case for \eqref{eliminar_excitaciones}, then $\Alg_\rho$
is just too general. Instead, we consider the ribbon operator
algebra $\Ribb_\rho\subset\Alg_\rho$, which contains those operators
that do not create or destroy excitations along $\rho$. In other
words $F\in\Ribb_\rho$ if $[F,A_v]=[F,B_f]=0$ for any vertex $v$ and
face $f$ which do not lie in the ends of $\rho$. Note that
$\Ribb_\rho$ is closed under the adjoint operator because
$A_v=A_v^\dagger$, $B_f=B_f^\dagger$. These are the operators we
were searching for in \eqref{eliminar_excitaciones}: a basis of
$\Ribb_\rho$ gives the desired operators $O_\rho^i$,
see (\ref{transporte_electrico}, \ref{transporte_magnetico}).
$\Ribb_\rho$ can be thought of as the algebra of processes in which
a pair of quasiparticles is created in one end of the ribbon and
then one of them is moved to the other end. In these terms, it is
clear why excited states are expressible by means of ribbon
operators acting on ground states.

A particularly meaningful basis for $\Ribb_\rho$, explicitly given
in \eqref{base_ribb_particulas}, consists of certain operators
$F_\rho^{RC;\vect u\vect v}$, labeled by $C$, a conjugacy class of
the group $G$, $R$, an irreducible representation
of certain group $\Norm C$ defined below,
and the indices $\vect u = (i,j)$, $\vect v
= (i\prima, j\prima)$ with $i,i\prima =1,\dots,|C|$,
$j,j\prima=1,\dots,n_R$. Here $|C|$ is the cardinality of $C$ and
$n_R$ is the degree of the representation $R$. The group $\Norm C$
is defined as that with elements $g\in G$ with $gr_C=r_Cg$ for some
chosen representative $r_C\in C$. In order to construct the
operators $F_\rho^{RC;\vect u\vect v}$, one also has to choose a
particular unitary matrix representation $\Gamma_R$ for $R$ and
enumerate the elements of the conjugacy class as $C=\sset{c_i}$,
together with a suitable subset $\sset{q_i}_{i=1}^{|C|}\subset G$
such that $c_i=q_i r_C \inver q_i$. Later we will relate the labels
$R,C$ to the topological charges of the model and show how the
indices $\vect u,\vect v$ are related to local degrees of freedom at
both ends of the ribbon. We will use the following notation to
denote linear combinations of ribbon operators with the same
topological charge label $R,C$
\begin{equation}\label{op_creacion}
F_\rho^{RC}(\vect \alpha) := \sum_{\vect u,\vect v} \alpha^{\vect u,
\vect v} F_\rho^{RC;\vect u\vect v},
\end{equation}
where $\alpha^{\vect u\vect v}\in\C$.

In the case of abelian groups there are no local degrees of freedom
and the elements of the basis are $F_\rho^{RC}=F_\rho^{\chi,g}$ with
$g\in G$ and $\chi$ an element of the character group of $G$. These
operators are unitary and form a group:
\begin{equation}
     F_\rho^{\chi,g}F_\rho^{\chi\prima,g\prima}=F_\rho^{\chi\chi\prima,gg\prima},
     \qquad {F_\rho^{\chi,g}}^\dagger =F_\rho^{\comp \chi,\inver g}.
\end{equation}
Indeed, $T_\rho^\chi:=F_\rho^{\chi,1}$ are the string operators of
abelian models, and $L_\rho^g:=F_\rho^{e,g}$ the co-string
operators, with $e$ the identity character.

An essential property of ribbon operators, which reflects the
topological nature of the model, is that in the absence of
excitations the particular shape of the ribbon is unimportant: We
can deform the ribbon while keeping the action of the ribbon
operator invariant. More exactly, if the state $\ket\psi$ is such
that the ribbon $\rho$ can be deformed, with its ends fixed, to
obtain another ribbon $\rho\prima$ without crossing any excitation,
then
\begin{equation}
F_\rho^{RC}(\vect \alpha)\ket\psi =F_{\rho\prima}^{RC}(\vect
\alpha)\ket\psi.
\end{equation}
This is illustrated in Fig.~\ref{figura_deformacion}.

\begin{figure}
\includegraphics[width=8.3 cm]{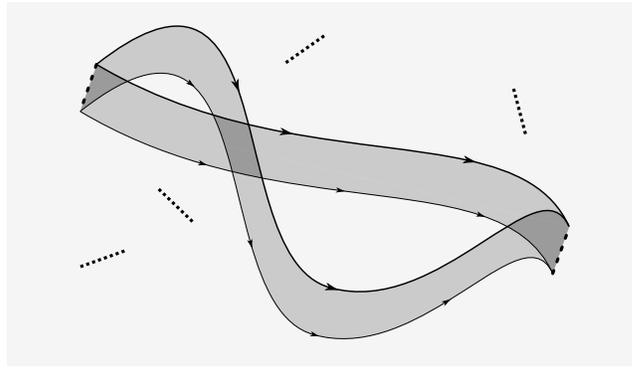}
\caption{
 An example of a deformation of a ribbon. The endpoints are fixed,
 and the area in between the two ribbons does not contain any
 excited site, which are represented with dotted lines.
}\label{figura_deformacion}
\end{figure}

\subsection{Closed ribbons}

For a closed ribbon $\sigma$ we mean one for which both ends
coincide, so that we can set $\partial \sigma := \partial_0
\sigma=\partial_1 \sigma$. In view of the definition of
$\Ribb_\rho$, in the case of closed ribbons it is natural to
consider a subalgebra $\Ribbclosed_\sigma\subset\Alg_\sigma$ such
that it forgets the single end $\partial \sigma$. With this goal in
mind, we let $\Ribbclosed_\sigma \subset \Alg_\sigma$ contain those
operators in $\Alg_\sigma$ that commute with all vertex and face
operators $A_v$, $B_f$. In terms of quasiparticle processes, such
closed ribbon operators are related to processes in which a pair of
quasiparticles is created and one end of them is moved along the
ribbon till they meet again to fuse into vacuum. Closed ribbon
operators play a fundamental role in characterizing the ground state
of the model in a similar fashion as how closed strings are the
building blocks for the ground state in the Abelian case (toric
code). A detailed analysis of closed ribbon operators is performed
in Appendix \ref{apendice_ribbclosed_sigma}.

We first consider the smallest examples of closed ribbons, i.e.,
dual and direct closed ribbons. We say that a ribbon is direct
(dual) if it consists only of direct (dual) triangles. A dual ribbon
like $\alpha$ in Fig.~\ref{figura_cerradas} encloses a single vertex
$v$, and $\Ribbclosed_{\alpha}$ has as basis the operators $A_v^h$,
$h\in G$. A direct ribbon like $\beta$ in Fig.~\ref{figura_cerradas}
encloses a single face $f$, and $\Ribbclosed_{\beta}$ has as basis
the operators $B_f^C$. These are labeled by the conjugacy classes
$C$ of $G$ and take the form $B_s^C=\sum_{g\in C} B_s^g$ for any
$s=(v,f)$. Thus, after defining ribbon operators by means of vertex
and face operators, we now see that vertex and face operators are
themselves ribbon operators.

\begin{figure}
\includegraphics[width=8.3 cm]{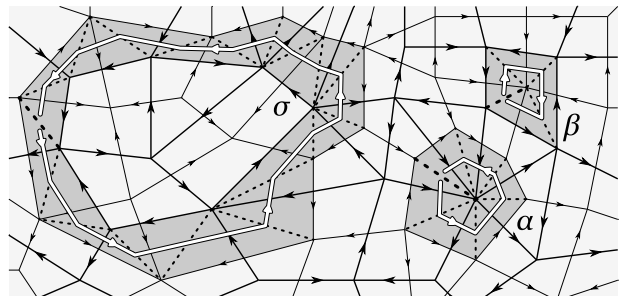}
\caption{Three examples of closed ribbons. $\sigma$ is a proper
closed ribbon, containing both dual and direct triangles. It is also
a boundary ribbon, as it encloses an area with the topology of a
disc. $\alpha$ is a dual closed ribbon and thus encloses a single
vertex. $\beta$ is a direct closed ribbon and thus encloses a single
face.}\label{figura_cerradas}
\end{figure}

As for the rest of closed ribbons $\sigma$, which we call proper
closed ribbons, it turns out that $\Ribbclosed_\sigma$ has as basis
certain orthogonal projectors $K_\sigma^{RC}$ that form a resolution
of the identity, as shown in proposition
\ref{prop_generadores_ribbclosed}. The labels $R,C$ of these
projectors are the same appearing in the basis for $\Ribb_\rho$. In
fact, in the next section we will characterize excitations in terms
of closed ribbon operators.

The algebra $\Ribbclosed_\sigma$ does not see the ends of $\sigma$.
Because of this, unlike $\Ribb_\sigma$, it can stand deformations in
which the end $\partial\sigma$ is not fixed or, for that matter,
rotations of the ribbon. More exactly, if the state $\ket\psi$ is
such that the closed ribbon $\sigma$ can be deformed to obtain
another ribbon $\sigma\prima$ without crossing any excitation then
\begin{equation}
K^{RC}_\sigma\ket\psi = K^{RC}_{\sigma\prima}\ket\psi,
\end{equation}
see appendix \ref{apendice_deformaciones_algebras}. This is
illustrated in Fig.~\ref{figura_deformacion_closed}. Another kind of
transformation is possible for closed ribbons. In particular, we can
consider deformations plus inversions of the orientation of the
ribbon, as shown in Fig.~\ref{figura_deformacion_closed}. When
$\sigma\prima$ is a transformation of $\sigma$ which includes an
inversion we have
\begin{equation}\label{inversion carga}
K^{\bar R^C \bar C}_\sigma\ket\psi = K^{RC}_{\sigma\prima}\ket\psi.
\end{equation}
where $\bar C$ is the inverse conjugacy class of $C$, $\bar R^C$ is
the conjugate representation of $R^C$ and $R^C$ is an irreducible
representation of $\Norm {\inver C}$ defined by $R^C(\cdot):=R(g
\cdot \inver g)$ if $\inver r_C = g r_{\inver C}\inver g$ for some
$g\in G$. In the next section we relate this to inversion of
topological charge.

\begin{figure}
\includegraphics[width=7.3 cm]{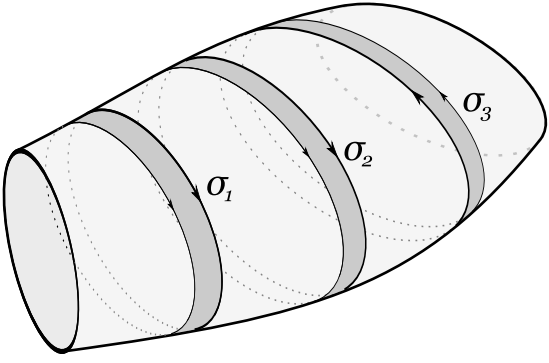}
\caption{
 Examples of closed ribbon transformations. A tubular piece of surface is displayed.
 The closed ribbon $\sigma_1$ is a deformation of $\sigma_2$ as long as
 there are no excitations between them. The ribbon $\sigma_3$ has an
 inverse orientation, and thus to obtain it from $\sigma_2$ we
 have to consider a deformation plus an inversion.
}\label{figura_deformacion_closed}
\end{figure}

\subsection{Topological charges}\label{seccion_kitaev_cargas}

Let $s_0$,$s_1$ be two non-adjacent sites in a lattice embedded in
the sphere. From the discussion on ribbon operators it follows that
the states
\begin{equation}\label{definicion_estado_RCiijj}
\ket {RC; \vect u\vect v} := F_\rho^{RC;\vect u\vect v} \ket
{\psi_G}
\end{equation}
form a basis for the subspace with excitations only at $s_0$ and
$s_1$. Here $\ket{\psi_G}$ is the ground state
\eqref{ground_state_G} and $\rho$ is any ribbon with $\partial_i
\rho = s_i$.

For each site $s=(v,f)$, we introduce the algebra $\mathcal D_s$
with basis $\sset{D_s^{hg}:=A_v^hB_s^g}_{h,g\in G}$. The reason to
introduce it is that its action on an excitation at $s$ gives all
possible local action on the excitation \cite{kitaev}. In other
words, $\mathcal D_s$ is useful to show why $\vect u, \vect v$ are
just local degrees of freedom. The action of the algebras $\mathcal
D_{s_i}$ on the states \eqref{definicion_estado_RCiijj} is
\begin{align} \label{accion_D_en_F}
D_{s_0}^{h,g} \ket {RC;\vect u\vect v} &= \delta_{g, c_i}
\sum_{s=1}^{n_R} \Gamma^{sj}_R(n(h q_i)) \spc\ket {RC;\vect u(s)\vect v}, \nonumber \\
D_{s_1}^{h,g} \ket {RC;\vect u\vect v} &= \delta_{g, \inver
c_{i\prima}} \sum_{s=1}^{n_R} \comp \Gamma^{sj\prima}_R (n(h
q_{i\prima})) \spc \ket {RC;\vect u\vect v(s)},
\end{align}
where $\vect u = (i,j)$, $\vect v = (i\prima,j\prima)$, $\vect u(s)
= (i(h q_i),s)$, $\vect v(s) = (i(h q_{i\prima}),s)$ and we set for
any $g\in G$ $g=: q_{i(g)} n(g)$ with $n(g)\in\Norm C$. Equations \eqref{accion_D_en_F} are a consequence of (\ref{conmutacion_D_F}, \ref{propiedades_GS}).

As shown in detail in appendix \ref{appendix_D}, it is possible to
find operators $d^{\vect u\prima}_{\vect u\primas}\in \D_{s_0}$ and
$d^{\vect v\prima}_{\vect v\primas}\in \D_{s_1}$ with
\begin{equation}\label{cambio_arbitrario}
d^{\vect u\prima}_{\vect u\primas} \spc  d^{\vect v\prima}_{\vect
v\primas} \spc \ket {RC; \vect u\vect v} = \delta_{\vect u, \vect
u\primas}\delta_{\vect v, \vect v\primas} \ket {RC; \vect
u\prima\vect v\prima}.
\end{equation}
Thus we see that a state with particular labels $\vect u$, $\vect v$
can be transformed with local operators into one with any other
labels $\vect u\prima$, $\vect v\prima$. Roughly speaking, for local
operators we mean operators which act on a neighborhood of the
excitations. More exactly, local operators should have a support
which does not connect excitations.

\begin{figure}
\includegraphics[width=7.3 cm]{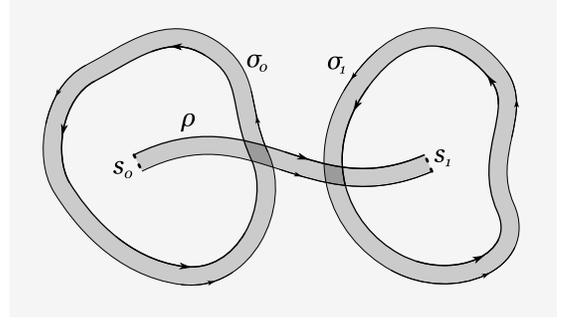}
\caption{An open ribbon $\rho$ that connects two sites $s_0$ and
$s_1$ and two closed ribbons $\sigma_0$ and $\sigma_1$ that surround
counterclockwise $s_0$ and $s_1$, respectively. The ribbon operators
$F^{h,g}_\rho$ of the open ribbon change the excitations at $s_0$,
$s_1$. The ribbon operators $K_{\sigma_0}^{RC}$, $K_{\sigma_1}^{RC}$
of the closed ribbons project the system onto states with a given
topological charge at $s_0$, $s_1$.}\label{figura_detectar_carga1}
\end{figure}

What about the degrees of freedom related to the labels $R$ and $C$?
They can certainly be measured locally, because there exists a set
of projectors $D^{RC}_{s_0}\in\D_{s_0}$ with
\begin{equation}\label{medir_carga_2particle}
D^{R C}_{s_0} \spc \ket {R\prima C\prima; \vect u\vect v} =
\delta_{R,R\prima}\delta_{C,C\prima} \ket {R C; \vect u\vect v}.
\end{equation}
However, $R$ and $C$ cannot be changed locally, in the sense that an
operator with a support not connecting both sites and which creates
no additional excitations will not change their values. To see this,
consider two closed ribbons $\sigma_0$ and $\sigma_1$ that enclose
respectively the sites $s_0$ and $s_1$ counterclockwise, as in
Fig.~\ref{figura_detectar_carga1}.
From the discussion in appendix \ref{apendice_charge_types} it follows that
\begin{equation}\label{medir_cargas}
K_{\sigma_0}^{RC} \spc \ket {R\prima C\prima; \vect u\vect v} =
K_{\sigma_1}^{\inver R^C \inver C} \spc \ket {R\prima C\prima; \vect
u\vect v} = \delta_{R,R\prima}\delta_{C,C\prima}\ket {R C; \vect
u\vect v}.
\end{equation}
Any operator with no common support with $\sigma$ will commute with
the projectors $K_\sigma^{RC}$, and thus cannot change the value of
$R$ and $C$. In particular, any operator which changes $R$ and $C$
must have a support that connects the sites $s_0$ and $s_1$.

Indeed, the preceding discussion shows that $R$ and $C$ are the
labels of the topological charges of the model. Thus the charge of
an excitation is the pair $(R,C)$, with $C$ a conjugacy class of $G$
and $R$ an irreducible representation of $\Norm C$. If a closed
ribbon $\sigma$ encloses certain amount of excitations, as in
Fig.~\ref{figura_detectar_carga2}, the projectors $K_\sigma^{RC}$
correspond to sectors with different total topological charge in the
region surrounded. If $\ket\xi$ is a state with no excitations in
the area enclosed by $\sigma$, we have
\begin{equation}\label{carga_unidad}
K^{e \spc 1}_\sigma\ket\xi=\ket\xi,
\end{equation}
with $e$ the identity representation, see appendix
\ref{apendice_condensacion}. Thus, $(e,1)$ is the trivial charge.
This offers a way to describe the ground state of
\eqref{Hamiltoniano_kitaev} as the space of states for which
\eqref{carga_unidad} holds for any boundary ribbon, that is, any
closed ribbon enclosing a disc or simply connected region.

\begin{figure}
\includegraphics[width=6.3 cm]{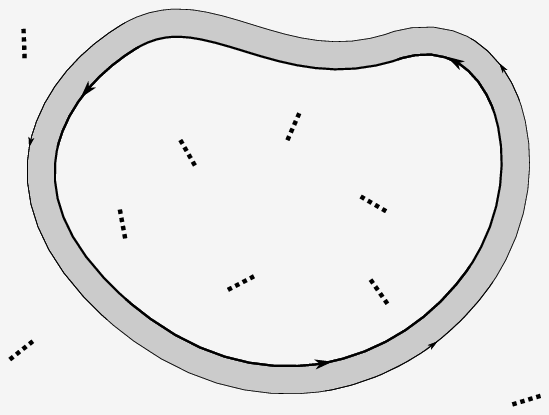}
\caption{
 A boundary ribbon $\sigma$ that encloses several excitations counterclockwise. The
 corresponding operators $K_\sigma^{RC}$ are projectors onto the
 sector with total topological charge $(R,C)$ inside the ribbon.
}\label{figura_detectar_carga2}
\end{figure}

In a region with no excitations, quasiparticles can only be locally
created in pairs, so that the two excitations have opposite charges
and the total charge in the region remains trivial. From
\eqref{inversion carga} or \eqref{medir_cargas} it follows that the
opposite of the charge $(R,C)$ is $(\bar R^C, \bar C)$,

\subsection{Single-quasiparticle states}

In a sphere there do not exist states with a single excitation. The
reason, as shown in Fig.~\ref{figura_esfera}, is that any closed
ribbon $\sigma$ divides the sphere in two regions, both of them
simply connected. The ribbon $\sigma$ surrounds one of this region
counterclockwise, call it $R_1$, and the other one clockwise, call
it $R_2$. Then the operator $K_\sigma^{RC}$ is a projector onto the
subspace with total charge $(R,C)$ in $R_1$, but also a projector
onto the subspace with total charge $(\bar R^C, \bar C)$ in $R_2$.
Thus, if there are no excitations in $R_1$, we have a total charge
$(e,1)$ in $R_1$ and also a total charge $(e,1)$ in $R_2$. But a
single excited site cannot have trivial charge, and thus $R_2$
contains either zero or more than one excitation

What about surfaces with non-trivial topology, such as a torus? In
the case of Abelian groups, the situation is the same as in the
sphere: there are no states with a single excitation. In the case of
vertex excitations, that is, electric charges, this follows from the
fact that
\begin{equation}
\prod_{v\in V} A_v^g = 1.
\end{equation}
For face excitations, that is, magnetic charges, an analogous result
holds. For any character $\chi$ of $G$, let $B_f^\chi := \sum_{g\in
G} \chi(g) B_s^g$ for $s=(v,f)$. Then
\begin{equation}
\prod_{f\in F} B_f^\chi = 1.
\end{equation}
For non-Abelian groups, the situation is very different. In fact,
examples of single-quasiparticle states can be constructed, see
appendix \eqref{apendice_excitaciones_solitarias}.

\begin{figure}
\includegraphics[width=4 cm]{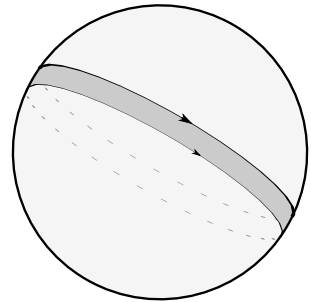}
\caption{A closed ribbon in a sphere. Its ribbon operators
$K_\sigma^{RC}$ project onto states with topological charge $(R,C)$
in the upper side of the sphere and $(\bar R^C, \bar C)$ in the lower
side.} \label{figura_esfera}
\end{figure}

\section{Condensation and confinement}
\label{sect_III}

\subsection{The models}

We want to modify the Hamiltonian $H_G$ by introducing single qudit
terms. In particular, we propose to consider projectors of the form
\begin{equation}\label{definicion_edge_op}
L_\tau^N := \frac 1 {|N|} \sum_{n\in N} L_\tau^n, \qquad T_\tau^M:=
\sum_{m\in M} T_\tau^m,
\end{equation}
where $\tau$ is a dual or direct triangle and $N$, $M$ are subgroups
of $G$. Thus $L_\tau^N$ projects out the trivial representation of
$N$ and $T_\tau^M$ selects those states within $M$. We want to have
single qudit operators that do not depend on the orientation of the
edge $e=e_\tau$. This is automatic for $T_e^M:=T_\tau^M$, but in the
case of dual triangles this is true if and only if $N$ is normal, so
that we can set $L_e^N:=L_\tau^N$. That is, if $n\in N$ and $g\in
G$, then $gn\inver g\in N$. Moreover, we want these two kinds of
single-qudit terms to commute
\begin{equation}
[L_e^N,T_e^M]=0,
\end{equation}
which is true if and only if $N\subset M$.

Now consider a Hamiltonian of the form
\begin{equation}\label{Hamiltoniano_intermedio}
H=H_G-\mu \sum_e \left ( L_e^N + T_e^M \right )
\end{equation}
where $\mu$ is a positive coupling constant and the sum runs over
edges $e$. The problem with this Hamiltonian is that the new terms
do not commute with $H_G$. However, as we show now, we can still
consider the limit of large $\mu$. In this limit, the low energy
sector is projected out by
\begin{equation}\label{proyector_sin_paredes}
P := \bigotimes_e T^M_e L^N_e.
\end{equation}
Let us define the following vertex and face projectors
\begin{equation}\label{definicion_vertex_face_op_NM}
A_v^M := \frac 1 {|M|} \sum_{m\in M} A_v^m, \qquad B_f^N:=B_s^N:=
\sum_{n\in N} B_s^n,
\end{equation}
where $s=(v,f)$ is a site. Note that $B_s^N$ only depends on $f$
 because $N$ is normal. We now make the
following observation
\begin{align}
|M| P \spc A_v^M \spc P &= |G| P \spc A_v \spc
P, \nonumber\\%
P \spc B_f^N \spc P &= |N|\spc P \spc B_f \spc P.
\end{align}
Thus, studying the low energy sector of
\eqref{Hamiltoniano_intermedio} for large $\mu$ amounts to study the
sector with no edge excitations of the Hamiltonian
\begin{equation}\label{Hamiltoniano_NM}
H_G^{N,M}:=-\sum_v A^M_v - \sum_f B^N_f - \sum_e \left (T_e^M +
L_e^N \right).
\end{equation}
The point of these Hamiltonians is that all its vertex, face and
edge terms commute and thus the ground state of the system can be
exactly given. It turns out that it is related to that of
\eqref{Hamiltoniano_kitaev} but for the group $G\prima:= M/N$, as we
will see in the next section. Note that $H_G^{1,G}$ is just the
original Hamiltonian \eqref{Hamiltoniano_kitaev}, up to a constant.
Although we have motivated the introduction of
\eqref{Hamiltoniano_NM} through \eqref{Hamiltoniano_intermedio}, our
aim is to study the models $H_G^{NM}$ in their own right, for
arbitrary subgroups $N\subset M\subset G$ with $N$ normal.

\subsection{Ground state}\label{seccion_kitaev_NM_GS}

The ground state of Hamiltonian \eqref{Hamiltoniano_NM} is described by the conditions
\begin{equation}\label{condiciones_GS_locales}
A_v^M \ket{\psi} = B_f^N \ket{\psi} = L_e^N \ket{\psi} = T_e^M \ket{\psi} =\ket{\psi}
\end{equation}
where $v$ is any vertex, $f$ any face  and $e$ any edge. Violations of these conditions amount to vertex, face or edge excitations.
Let $V$ be the subspace of states with no edge excitations, which is projected out by the
projector $P$ of \eqref{proyector_sin_paredes}. $V$ is a tensor
product of single qudit subspaces $V := \bigotimes_e V_{M/N}$, with
$V_{M/N}\subset\Hilb\prima_G$ the subspace with orthonormal basis:
\begin{equation}\label{base_MN}
\ket {\tilde m} := {|N|}^{-\frac 1 2} \sum_{n\in N} \ket {mn}, \qquad
{\tilde m\in M/N}.
\end{equation}
Thus $V\simeq\Hilb_{M/N}$, that is, within the subspace $V$ we are
effectively dealing with qudits of dimension $|M/N|$ which are
naturally labeled through the group quotient. We denote the
corresponding isomorphism by
\begin{equation}
\funcion p{\Hilb_{M/N}}{V}.
\end{equation}

Let us write
\begin{equation}\label{isomorfismo_Hamiltonianos}
H\prima_{M/N}:= p \ H_{M/N} \ p\inv,
\end{equation}
that is, $H_{M/N}\prima$ is the Hamiltonian
\eqref{Hamiltoniano_kitaev}, for the group $M/N$, applied to the
subspace $V_{M/N}$. We have
\begin{equation}\label{relacion_Hamiltonianos}
H\prima_{M/N}P = \left(H_G^{N,M} + 2|E|\right)P.
\end{equation}
Thus, within the sector with no edge excitations we are effectively
 dealing with the Hamiltonian $H_{M/N}$
\eqref{Hamiltoniano_kitaev}. Moreover, the ground state of
$H_G^{N,M}$ in $\Hilb_G$ is that of $H\prima_{M/N}$ in $V$. The
projector onto the ground state is
\begin{equation}
P_{\text {GS}}^{N,M} := P \prod_v A_v^M \prod_f B_f^N = P \prod_v
A_v\prima \prod_f B_f\prima
\end{equation}
where $A_v\prima := p A_v p\inv$, $B_f\prima := p B_f p\inv$ with
$A_v$ and $B_f$ acting in $\Hilb_{M/N}$. In the sphere, the
normalized ground state is
\begin{equation}\label{ground_state_N_M}
\ket{\psi_G^{N,M}} \propto P_{\text {GS}}^{N,M} \spc \vket 1 \propto
\prod_v A\prima_v \spc \vket {\tilde  1}.
\end{equation}

Thus the new edge terms in the Hamiltonian, which can be thought of
as a sort of generalized `Zeeman terms'\cite{abelian_confinement},
have the role of selecting a particular sector of the Hilbert space
in which a new non-Abelian discrete gauge symmetry appears, namely
$G\prima = M/N$. Thus, edge terms amount to an explicit symmetry
breaking mechanism, since in general the gauge symmetry is reduced,
even to a trivial one if $M=N$. Alternatively, we can say that they
provide a symmetry-reduction mechanism. Thus, the sector with no
edge excitations is completely understood. In the remaining sections
we study the meaning of edge excitations.

\subsection{An example}\label{seccion_ejemplo}

Before we go on with the general case and its details, let us first
give a flavor of what is going on by considering a family of
examples. We take $N=1$ and $M$ normal in $G$, so that the new gauge
group is $G\prima = M$.
Note that in this case we can forget about the $L_e^N$ terms because $L_e^1=1$.
Our aim is to study the result of applying quasiparticle creation operators \eqref{op_creacion} on a ground
state of the Hamiltonian \eqref{Hamiltoniano_NM}:
\begin{equation}\label{estado_ribbon_NM}
 \ket{\psi}:=F_\rho^{RC}(\vect \alpha) \ket{\psi_G^{NM}}.
\end{equation}

We first consider purely magnetic quasiparticle creation operators, fixing $R$ as the
identity representation. For simplicity we set $\alpha^{\vect u\vect v}=c\in \C$. Then from (\ref{conmutacion_D_F}) it follows that $[A_v^M,F^{RC}(\vect\alpha)]=0$ for every vertex $v$ and from \eqref{conmutacion_Ln_Tm_F_inside} it follows that $[T_e^M,F^{RC}(\vect\alpha)]=0$ for any edge $e$ not in a dual triangle of $\rho$.
Then due to \eqref{condiciones_GS_locales} the state $\ket\psi$ can have,
at most, face excitations on the ends of $\rho$ and edge
excitations on dual triangles of $\rho$.
In particular, from (\ref{conmutacion_D_F},  \ref{conmutacion_Ln_Tm_F_proyeccion},\ref{condiciones_GS_locales}) it follows that
if $f$ is an
end face of $\rho$ and $e$ is \textit{any} dual edge of $\rho$ we
have
\begin{equation}
 B_f^1 \ket \psi = u \ket \psi, \qquad T_e^M \ket \psi =u\prima \ket \psi,
\end{equation}
with $u,u\prima=0,1$. As long as $C\neq 1$, we have $u=0$ and thus
$\ket\psi$ contains a pair of face excitations. On the other
hand, $u\prima=1$ iff $C\subset M$, which means that $\ket\psi$
contains a chain of edge excitations along $\rho$ if we try to
create magnetic charges which do not belong to the new gauge group
$G\prima$.
%%%
Therefore, we find out that some face excitations are confined, in particular those created with $C\not\subset M$.
%%%
By this, we mean that the energy of $\ket\psi$ increases linearly with the length of $\rho$ in terms of dual triangles.

Next, we consider  purely electric quasiparticle creation operators, that is, we set $C=1$.
%%%
Reasoning in the same way as in the previous case, one finds out that the state $\ket\psi$ can have, at most, vertex excitations on the ends of $\rho$, but no face or edge excitations.
%%%
In particular, if $v$ is an end
vertex of $\rho$ we have
\begin{equation}
 A_v^M \ket \psi = u \ket \psi,
\end{equation}
with $u=1$ if the restriction of $R$ to $M$ is an identity
representation and $u=0$ otherwise.
%%% beg
That is, in some cases
$\ket\psi$ is a ground state although $R$ is not trivial. Since there is no local degeneracy in the ground state we know that $\ket\psi=c\ket{\psi_G^{NM}}$ for some $c\in\C$. Moreover, $c$ can be nonzero, because, as we will see below, for $\alpha^{\vect u\vect v}=\delta_{\vect u,\vect v}$ and $R$ trivial in $M$
\begin{equation}\label{condensacionEj1}
     \xpctd{F_\rho^{R1}(\vect \alpha)}_{\psi_G^{NM}} = 1.
\end{equation}
%%% end
Moreover, if $\sigma$ is any
boundary ribbon we have for $R$ trivial in $M$
\begin{equation}\label{condensacionEj2}
     \xpctd{K_\sigma^{R1}}_{\psi_G^{NM}} = \frac{n_R\spc |M|}{|G|}.
\end{equation}
Thus, those electric charges with trivial restriction of $R$ to $M$
are condensed: they are part of the ground state.

\subsection{Condensation} \label{subseccion_condensacion}

Let $\sigma$ be a boundary ribbon, that is, a ribbon that encloses
some region $r$. Motivated by the previous example, we want to study
the expectation value in the ground state of $H_G^{NM}$ of the
operators $K_\sigma^{RC}$. Recall that these operators project onto
the space with total topological charge $(R,C)$ in systems with
Hamiltonian $H_G$. Then if for a particular charge type we have
\begin{equation}
    \xpctd {K_\sigma^{RC}} := \bra{\psi_G^{N,M}}
    K_\sigma^{RC}\ket{\psi_G^{N,M}}> 0
\end{equation}
we say that the charges $(R,C)$ of the original Hamiltonian $H_G$
get condensed in the system with Hamiltonian $H^{NM}_G$: if one
measures the charge of the region $r$ in the ground state of
$H_G^{NM}$ there exists some probability of finding the charge
$(R,C)$.

%%% beg
As we show in appendix \ref{apendice_condensacion}
\begin{equation}\label{condensacion}
\xpctd {K_\sigma^{RC}} = \frac{n_R\spc |M|}{|G||N|}\spc
(\chi_R,\chi_{e_M\!\uparrow})_{\Norm C} \spc |C\cap N|,
\end{equation}
where the product $(\cdot, \cdot)_{M}$ is defined in
\eqref{producto_funciones_clase} and ${e_M\!\uparrow}$ is the induced representation in $G$ of the
identity representation of $M$, see appendix \ref{apendice_induced_representations}.
%%% end
Another way to write the product is
\begin{equation}\label{condensacion_auxiliar}
(\chi_R,\chi_{e_M\!\uparrow})_{\Norm C} = \frac {1}{|\Norm
C||M|}\sum_{g\in G} |M_C^g| (\chi_R,1)_{M_C^g},
\end{equation}
where $M_C^g := \Norm C\cap \inver g M g$. Note in particular that
for $M$ normal the sum has a single term, simplifying the form of
\eqref{condensacion}.  The result \eqref{condensacion} not only
shows that some of the charges are condensed, but also that the
expectation value is independent of the shape or size of the ribbon,
a feature that underlines the topological nature of the
condensation. Such behavior for a perimeter expectation is called a
zero law\cite{hastings_wen05}.

Let us consider several examples. First, under
%%%
Kitaev's original Hamiltonian $H_G^{1,G}$ we have
\begin{equation}
    \xpctd {K_\sigma^{RC}} =  \delta_{C,1}, \delta_{R,e_G}
\end{equation}
where $e_G$ is the identity representation of $G$.
%%%
Thus none of the nontrivial charges is condensed, as expected. In the case
$N=M=G$ we have
\begin{equation}
    \xpctd {K_\sigma^{RC}} = \frac{|C|}{|G|}\
    \delta_{R,e_{{\Norm C}}},
\end{equation}
which means that the purely magnetic charges are condensed. On the
contrary, in the case $N=M=1$ we have
\begin{equation}
    \xpctd {K_\sigma^{RC}} = \frac{n_R^2}{|G|}\
    \delta_{C,1}
\end{equation}
which means that the purely electric charges are condensed. Another
illustrative case is that of an Abelian group $G$. In that case we
can label the projectors as $K_\sigma^{\chi,g}$ with $g\in G$ and
$\chi$ is an element of the character group of $G$. Then
\begin{equation}\label{condensacion_abeliano}
    \xpctd {K_\sigma^{\chi,g}} = \frac{|M|}{|G|\spc|N|}\
    \delta_{gN,N} \ \delta_{\chi_M, e_M}
\end{equation}
where $\chi_M$ is the restriction of $\chi$ to $M$.

It is possible to show which charges condense using another kind of
expectation values, namely those for operators \eqref{op_creacion},
which create a particle-antiparticle pair in the original model
\eqref{Hamiltoniano_kitaev}.
%%%
From the discussion in appendix \ref{apendice_condensacion} it follows that
%%%
if $|C\cap N|=\emptyset$ or
$(\chi_R,1)_{M_C^ g}=0$ for all $g$, we have
\begin{equation}\label{condensacion2auxiliar}
     \xpctd{F_\rho^{RC}(\vect \alpha)}=0.
\end{equation}
In the case of $M$ normal, for a charge $(R,C)$ that is not
condensed according to \eqref{condensacion} the operator
$F_\rho^{RC}(\vect \alpha)$ always has expectation value zero. Even
if $M$ is not normal, if we trace out the local degrees of freedom
and set $F_\rho^{RC} := F_\rho^{RC}(\vect \alpha)$ with
$\alpha^{\vect u\vect v}=
\delta_{\vect u,\vect v}$ we get
\begin{equation}\label{condensacion2}
     \xpctd{F_\rho^{RC}}=\frac 1 {|C|}\spc(\chi_R,\chi_{e_M\!\uparrow})_{\Norm C}\spc|C\cap N|.
\end{equation}
Therefore, for that particular choice we get an expectation value
which vanishes if an only if \eqref{condensacion} does, showing that
both approaches agree. Again topology makes its appearance in the
fact that the length or shape of the ribbon $\rho$ are not relevant.
For Abelian groups \eqref{condensacion2} reads
\begin{equation}
     \xpctd{F_\rho^{\chi,g}}=\delta_{gN,N}\delta_{\chi_M,e_M}.
\end{equation}

\subsection{Confinement}

The example studied in \eqref{seccion_ejemplo} suggests that the
edge terms $L_e^N$ and $T_e^M$ could be interpreted as string
tension terms, which in turn would confine some of the charges of
the original model $H_G$. However, one has to be a bit cautious with
such a viewpoint in general. Certainly, in those cases in which $N$
is central in $M$ and $M$ is normal (from now on, case I) such a
viewpoint makes sense. Only in those cases do certain properties
hold, see appendix \ref{apendice_edge_operators}. In case I we can
write the relations,
%%%
see \eqref{conmutacion_Ln_Tm_F_proyeccion},
%%%
\begin{equation}\label{string_tension}
P_{e,e\prima}^{NM}\spc F_\rho^{RC;\vect u\vect v} \spc
P_{e,e\prima}^{NM}= d_{RC}^{NM} \spc F_\rho^{RC;\vect u\vect v} \spc
P_{e,e\prima}^{NM}
\end{equation}
where $d_{RC}^{NM}$ equals one (zero) if $C\subset M$ and the
restriction of $R$ to $N\subset N_C$ is trivial (in other case),
$P_{e,e\prima}^{NM}=L_e^NT_{e\prima}^M$ and
$e=e_\tau,e\prima=e_{\tau\prima}$ with $\tau$, $\tau\prima$ direct
and dual triangles in an open ribbon $\rho$, respectively. The
relation \eqref{string_tension} show that edge operators project out
certain states among those which were created by applying a string
operator to a ground state. Note that the projection only takes into
account the quasiparticle labels $R$, $C$ of the string operators.
Moreover, it is not important which the particular edges are.
Outside of case I such nice properties, reasonable for string
tension terms, do not hold. As a consequence of
\eqref{string_tension}, we now that a state of the form
\eqref{estado_ribbon_NM} will have a chain of edge excitations along
$\rho$ unless $C\subset M$ and $R$ is trivial in $N$: all other
charges get confined when moving from $H_G$ to $H_G^{NM}$, which
means that they exist at the end of chains of edge excitations. We
shall refer to these chains of excitations as domain walls. They can
be labeled just as we labeled topological charges in $H_G$,
something that we will do in the next section.

Unfortunately, as soon as any of the mentioned conditions for case I
fails many nice properties of the models are lost. Indeed, only for
those systems that fall in that class will we be able to classify
domain walls and confined charges in terms of open an closed ribbon
operator algebras, in the fashion of what we already did for
topological charges in $H_G$. On the other hand, in certain more
general cases it is still possible to classify domain wall types. In
particular, we will show that this can be done in all models with
$N$ abelian (from now on, the case II). Interestingly enough, the
domain wall fluxes in case II are qualitatively richer than in case
I. Such fluxes have in general a non-abelian character for case II
systems, whereas for case I they have always an abelian nature.

\subsection{Case I systems}

This section is devoted to those models $H_G^{N,M}$ with $N$ central
in $M$ and $M$ normal in $G$. We will show that edge excitations
appear in the form of domain walls that terminate in certain site
excitations which are therefore confined. With this goal in mind, we
start introducing the excitations which will turn out to be
confined. Consider the projectors
\begin{equation}\label{proyectores_confinadas}
A_v^N := \frac 1 {|N|} \sum_{n\in N} A_v^n, \qquad B_f^M := B_s^M=
\sum_{m\in M} B_{s}^{m}.
\end{equation}
They commute among each other and with the terms of Hamiltonian
\eqref{Hamiltoniano_NM}, so that we could choose the energy
eigenstates to be eigenstates of the projectors
\eqref{proyectores_confinadas}. We say that the state $\ket\psi$ has
a confined excitation at a site $s=(v,f)$ whenever
$A_v^NB_f^M\ket\psi = 0$. That these are really excitations follows
from
\begin{equation}\label{operadores_A_B_confinadas}
A_v^NB_f^M\ket\psi = 0 \quad \Longrightarrow\quad A_v^MB_f^N\ket\psi
= 0,
\end{equation}
where we have used $A_v^NA_v^M=A_v^M$ and $B_f^MB_f^N=B_f^N$. That
they are really confined will be revealed later, but we can already
give a clue: a state with a confined excitation at the site $s$ must
have an edge excitation at least at one of the edges $e$ meeting at
$v$ or in the border of $f$. Thus, confined excitations cannot
appear isolated: there must be edge excitations around them.
Conversely, it is also true that a chain of edge excitations cannot
terminate without a confined excitation in its end. However, this is
not enough to demonstrate confinement, but we can do it better by
introducing suitable ribbon operator algebras.

\subsubsection{Open ribbon operators.}

Domain walls have a type, and this type behaves as a flux in the
absence of confined excitations. We thus need a family of projectors
that distinguish between the domain wall fluxes that cross a
particular line, analogous to the projectors $K_\sigma^{RC}$ that
distinguished the topological charge in a the area enclosed by
$\sigma$. So we consider as our starting point the ribbon operator
algebra $\Ribb_\rho$ for an open ribbon $\rho$. Since the flux
should be the same if we move the ends of our flux-measuring ribbon
in an area with no edge excitations, we need a ribbon algebra which
to some extent forgets the ends of the ribbon. In particular, we
should choose ribbon operators that do not create or destroy
excitations. Thus, we define $\Ribbopen_\rho \subset \Ribb_\rho$ as
the subalgebra containing those operators $F\in\Ribb_\rho$ which
commute with all vertex operators $A_v^M$ and face operators
$B_f^N$.  Such operators also commute with all edge terms $L_e^N$
and $T_e^M$,
%%%
see colollary \ref{cor_ribbopen_LT}.
%%%

%%%
As we show in proposition \ref{prop_generadores_ribbopen}, $\Ribbopen_\rho$ is linearly generated by certain
orthogonal projectors $J_\rho^{\chi t}$ that form a resolution of the identity.
%%%
The labels $(\chi,t)$ of these projectors are $\chi$, an element of
the character group of $N$, and $t$, an element of the quotient
group $G/M$. If $\ket\psi$ has no edge excitations along $\rho$ then
\begin{equation}\label{domain_wall_trivial}
J_\rho^{e1} \ket\psi =\ket\psi,
\end{equation}
where $e$ and $1$ are both identity elements,
%%%
see (\ref{GS_open_ribbon}, \ref{bases_ribbopen_abeliano}).
%%%

The algebra $\Ribbopen_\rho$ can stand deformations in which the
ends of $\rho$ are not fixed. More exactly, if the state $\ket\psi$
is such that the open ribbon $\rho$ can be deformed,
%%%
without fixing its ends,
%%%
to obtain another ribbon $\rho\prima$ in such a way that no confined
excitations are crossed and the ends do not touch edge excitations,
then
\begin{equation}\label{domain_wall_deformacion}
J^{\chi t}_\rho\ket\psi = J^{\chi t}_{\rho\prima}\ket\psi,
\end{equation}
see appendix \ref{apendice_deformaciones_algebras}. This is
illustrated in Fig.~\ref{figura_deformacion_open}. As in the case of
closed ribbons in $H_G$, here we can consider inversions in the
orientation of the ribbon, as shown in
Fig.~\ref{figura_deformacion_open}. When $\rho\prima$ is a
transformation of $\rho$ which includes an inversion we have
\begin{equation}\label{domain_wall_inversion}
J^{\chi t}_\rho\ket\psi = J^{\inver \chi^t \inver
t}_{\rho\prima}\ket\psi,
\end{equation}
where for any $n\in N$ we set $\chi^t(n):=\chi(tn\inver t)$,
%%%
see appendix \ref{apendice_deformaciones_algebras}.
%%%

When a ribbon $\rho_2$ crosses two domain walls which are
respectively crossed by two other ribbons $\rho_3$, $\rho_4$ with
the same orientation as $\rho$, as in
Fig.~\ref{figura_domain_wall_vs_charge}, we have
\begin{equation}\label{domain_wall_coproducto}
J_{\rho_2}^{\chi t} =\sum_{\xi}\sum_{k\in G/M} J_{\rho_3}^{\inver
\xi^{k} \inver k}J_{\rho_4}^{\xi\chi\,  k t},
\end{equation}
where $\xi$ runs over the group of characters, see appendix
\ref{apendice_ribb_open}.

\begin{figure}
\includegraphics[width=7.3 cm]{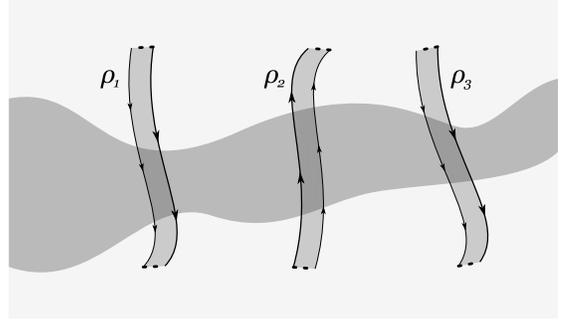}
\caption{ Examples of open ribbon transformations for
$\Ribbopen_\rho$. Edge excitations are present only in the shaded
area. No confined quasiparticle excitations are present. The ribbon
$\rho_1$ is a deformation of $\rho_3$, and $\rho_2$ has an
 inverse orientation with respect to them.}\label{figura_deformacion_open}
\end{figure}

\subsubsection{Closed ribbon operators.}

Before we can further analyze the consequences of the properties of
$\Ribbopen_\rho$, as we shall do in the next section, we have to
introduce a family of projectors that distinguish the charges in our
models. The situation is different to the one we found in the models
$H_G$, because now some charges are confined, and thus the rules for
deforming closed ribbons will change qualitatively to reflect this
fact. Our starting point is the ribbon operator algebra
$\Ribb_\sigma$ for a closed ribbon $\sigma$, from which we want to
select certain suitable operators, just as we did for other
projector algebras. In the case of open ribbons just considered, we
made this selection requiring that the operators commuted with all
vertex and face terms. Here that will not be enough, because since
$\sigma$ is closed we would be considering operators that create a
closed domain wall, with no ends and no confined excitations.
Therefore, we define $\Ribbclosed\prima_\sigma \subset \Ribb_\sigma$
as the subalgebra containing those operators $K\in\Ribb_\sigma$
which commute with all vertex operators $A_v^M$, face operators
$B_f^N$ and edge operators $L_e^N,T_e^M$.

%%%
As we show in proposition \ref{prop_generadores_ribbclosed_prima}, $\Ribbclosed\prima_\sigma$ is linearly generated by certain orthogonal projectors $K_\sigma^{RC}$ that form a resolution of the identity.
%%%
The labels $(R,C)$ of these projectors are $C$, a set of
the form $\set{mg\inver m}{m\in M}$ for some $g$ in $G$, and $R$, an
irreducible representation of the group $\Norm C\prima:=\set{m\in
M}{mr_C\inver m\inver r_C\in N}$ for some fixed $r_C\in C$. If
$\sigma$ is a boundary ribbon surrounding an area with no vertex or
face excitations in the state $\ket\psi$, then
\begin{equation}\label{carga_prima_trivial}
K_\sigma^{e1} \ket\psi =\ket\psi,
\end{equation}
where $e$ and $1$ are both identity elements, see (\ref{GS_boundary_ribbon}, \ref{base_ribbclosed_prima}, \ref{base_ribbclosed_2}).

The algebra $\Ribbclosed\prima_\sigma$ can stand deformations in
which the end $\partial\sigma$ is not fixed, as long as it crosses
no domain walls. If the state $\ket\psi$ is such that the open
ribbon $\sigma$ can be deformed to obtain another ribbon
$\sigma\prima$ in such a way that no vertex or face excitations are
crossed and the end of $\sigma$ touches no edge excitations, then
\begin{equation}\label{carga_prima_deformacion}
K^{R C}_\sigma\ket\psi = K^{R C}_{\sigma\prima}\ket\psi,
\end{equation}
see appendix \ref{apendice_deformaciones_algebras}. This is
illustrated in Fig.~\ref{figura_deformacion_closed_prima}. As in the
case of closed ribbons in $H_G$, here we can consider inversions in
the orientation of the ribbon, as shown in
Fig.~\ref{figura_deformacion_closed_prima}. When $\sigma\prima$ is a
transformation of $\sigma$ which includes an inversion we have
\begin{equation}\label{carga_prima_inversion}
K^{R C}_\sigma\ket\psi = K^{\inver R^C \inver
C}_{\sigma\prima}\ket\psi,
\end{equation}
where $R^C$ is an irreducible representation of $\Norm {\inver
C}\prima$ defined by $R^C(\cdot):=R(m \cdot \inver m)$ if $\inver
r_C = m r_{\inver C}\inver m$ for some $m\in M$,
%%%
see appendix \ref{apendice_deformaciones_algebras}.
%%%

\begin{figure}
\includegraphics[width=7.3 cm]{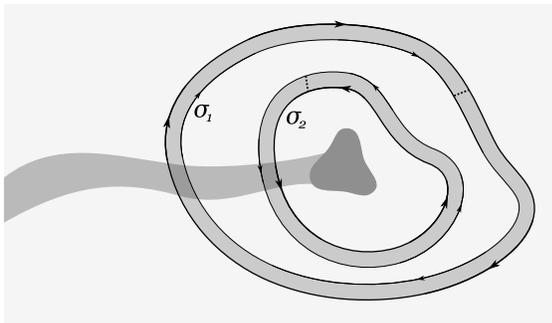}
\caption{ Examples of closed ribbon transformations for
$\Ribbclosed\prima_\rho$. The light shaded area represents edge or
domain wall excitations, and the dark shaded area confined
quasiparticle excitations. The ribbon $\sigma_1$ is a deformation of
$\sigma_2$ which includes an
inversion.}\label{figura_deformacion_closed_prima}
\end{figure}

\subsubsection{Domain walls and charges.}

The deformation properties of the projectors in $\Ribbopen_\rho$
that we have introduced indicate that edge excitations appear in the
form of domain walls to which a flux can be attached. Branching
points are possible in these walls. Each value of the flux
corresponds to a projector $J_\rho^{\chi t}$, and we know that it is
preserved along a domain wall due to the deformation property
\eqref{domain_wall_deformacion}. The trivial flux is given by
\eqref{domain_wall_trivial} and the inverse flux by
\eqref{domain_wall_inversion}. Since deformations of $J_\rho^{\chi
t}$ only require that no confined excitations are crossed, a domain
wall can only end in the presence of confined excitations. Domain
wall fluxes have an abelian nature: as indicated by
\eqref{domain_wall_coproducto}, the addition of two given fluxes
always produces the same combined flux. All these ideas are
reflected in Fig.~\ref{figura_domain_wall_vs_charge}.

Regarding charge labeling, the properties of the projectors in
$\Ribbclosed_\sigma\prima$ very much resemble those already found in
the study of $\Ribbclosed_\sigma$ in systems with Hamiltonian $H_G$.
The trivial charge is given by \eqref{carga_prima_trivial} and the
inverse charge by \eqref{carga_prima_inversion}. Indeed, the new
element that appears is that now the deformation properties
\eqref{carga_prima_deformacion} take into account that the charge
could be attached to a domain wall.

But if we want to neatly describe confinement, we have to establish
the relationship between domain walls and charges. To this end,
consider Fig.~\ref{figura_domain_wall_vs_charge}. We can deform each
ribbon $\rho_i$, without changing the flux it measures, till
$\rho_i$ is equal to $\sigma_i$, the boundary ribbon enclosing the
charge at the end of the domain wall. At that point we can compare
both projector algebras, with the following result: there exists a
function $f$ onto the group character of $N$ such that
\begin{equation}\label{domain_wall_vs_charge}
J_\sigma^{\chi t} = \sum_{C \subset tM} \spc \sum_R
\delta_{\chi,f(R)} K_\sigma^{R C},
\end{equation}
where the sum on $R$ runs over irreducible representations in $\Norm
C\prima$,
%%%
see \eqref{ribbopen_en_ribbclosedprima2}.
%%%
Equation \eqref{domain_wall_vs_charge} tells us at the end
of which domain walls can each charge exist. In other words, the
projectors in $\Ribbopen_\sigma$ classify charge types from
$\Ribbclosed_\sigma\prima$ in different compartments or sectors:
each of these sectors gives a confined charge type. The different
charge labels within a sector give us the topological part of the
charge. In particular, for the trivial confined charge we recover
the topological charge types for a system with Hamiltonian
$H_{M/N}$, in accordance with the study of the ground state of
section \ref{seccion_kitaev_NM_GS}. Finally, all charges which do
not belong to the trivial confined charge sector are indeed
confined, because if we take any circle surrounding them we must
always have a domain wall crossing it. When excitations are
localized in a single site, it turns out that confined excitations
are exactly described, as expected, by the projectors
\eqref{proyectores_confinadas}.

As we already did in the particular case of $H_G$ models, the ground
state can be described in terms of ribbon operators of arbitrary
size. In this case, we have to impose that no region should contain
a nontrivial charge \emph{and} no line should be crossed by a
nontrivial domain wall flux. That is, a state $\xi$ is a ground
state if and only if
\begin{equation}\label{ground_state_N_M_descripcion}
K_\sigma^{e 1} \ket \xi = \ket \xi,\qquad J_\rho^{e 1} \ket \xi =
\ket \xi,
\end{equation}
for all boundary ribbons $\sigma$ and proper ribbons $\rho$. This
conditions generalize to arbitrary $N$ and $M$, see appendix
\ref{apendice_condensacion}.

\subsection{Case II systems}

This section is devoted to those models $H_G^{N,M}$ with $N$
abelian. As we have already commented, in this case we will only be
able to describe, using ribbon projector algebras, domain wall
fluxes, but not charge types, except those in the sector with no
edge excitations which are already classified through the mapping to
$H_{M/N}$. Thus we proceed to describe the algebra $\Ribbopen_\rho$,
which is defined exactly as in case I.
%%%
As discussed in appendix \ref{apendice_ribb_open},
%%%
in this case the projectors
are $J_\rho^{RT}$ with $T$ an element of the double coset
$M\backslash G/M$ and $R$ an induced representation in $\Norm T$ of
an irreducible representation in $N$, with $\Norm T$ the group
$\set{m\in M}{mr_TM=r_TM}$ for some fixed $r_T\in T$. We recall that
each element of the double coset takes the form
$T=\set{mr_Tm\prima}{m,m\prima\in M}$. From the double coset
structure, we can obtain a subalgebra of the group algebra $\C(G)$.
Indeed, for $T,T\prima\in M\backslash G/M$ we have
\begin{equation}\label{algebra_doble_coset}
TT\prima = \sum_{T\primas\in M\backslash G/M}
C^{T\primas}_{TT\prima}\, T\primas,
\end{equation}
where $C^{T\primas}_{TT\prima}$ are positive integers. The rules for
deformations of $J_\rho^{RT}$,
%%%
discussed in appendix \ref{apendice_deformaciones_algebras},
%%%
state that the ribbon $\rho$ can move across regions in which
$B_f^N\ket\psi=\ket\psi$ and $A_v^N\ket\psi=\ket\psi$ are satisfied,
and the ends can be displaced as long as they do not touch edge
excitations, with the result that the flux measured by $J_\rho^{RT}$
does not change. The inverse flux of $(R,T)$ is $(\inver R^T,\inver
T)$, where $\inver M$ contains the inverses of the elements of $M$
and $R^T$ is defined by $R^T(\cdot):=R(r_T m \cdot \inver m \inver
r_T)$ if $m\in M$ is such that $M = r_T m r_{\inver T}M$. Finally,
there is no analog of \eqref{domain_wall_coproducto}, in the sense
that the knowledge of two fluxes is not enough to determine the
total combined flux. This can be seen for example in
\eqref{algebra_doble_coset}. When $M$ is normal (case I), the sum is
reduced to a single term, so that combining a flux $(e,T)$ with a
flux $(e,T\prima)$ will give certain determinate total flux
$(e,T\primas)$. This is no longer true when $M$ is not normal (case
II), giving a non-abelian nature to the domain wall fluxes, which
disappears in case I.

\begin{figure}
\includegraphics[width=7.3 cm]{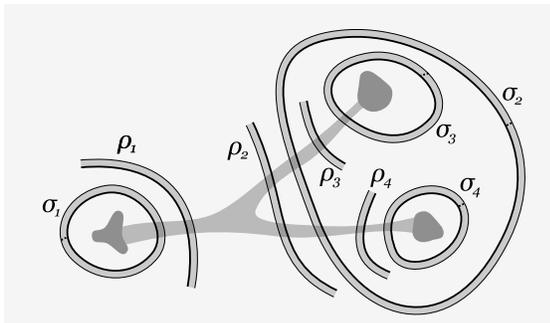}
\caption{An illustration of the relationship between domain wall
fluxes and quasiparticle charges. The light shaded area represents
edge or domain wall excitations, and the dark shaded area
quasiparticle excitations. Each of the open ribbons $\rho_i$ can be
deformed to the boundary ribbon $\sigma_i$, so that the domain wall
flux measured by $\rho_i$ corresponds to the confined charge
measured by $\sigma_i$. Since $\rho_1$ and $\rho_2$ are equivalent
up to an inversion, the confined charges measured by $\sigma_1$ and
$\sigma_2$ are inverses. The total flux in $\rho_2$ is the
combination of that in $\rho_3$ and $\rho_4$, and thus the confined
charge measured by $\sigma_2$ is the combination of that measured by
$\sigma_3$ and $\sigma_4$.}\label{figura_domain_wall_vs_charge}
\end{figure}

\section{Conclusions}
\label{sect_conclusions}

In this paper we have introduced a family of quantum lattice
Hamiltonians with a discrete non-Abelian gauge symmetry such that
the standard Kitaev model for topological quantum computation is a
particular case of this class. The ground state of the models can be
exactly given and, in many cases, quasiparticles or at least domain
wall excitations can be classified. They can be characterized by
operator algebras corresponding to closed and open ribbon operators.
This is done in full generality for arbitrary topologies. The models
can be understood in terms of topological charge condensation and
confinement with respect to the standard Kitaev models.

We have given a detailed account of the quasiparticle excitations in
the standard non-Abelian Kitaev model. In particular, we have seen
that for orientable closed surfaces other than the sphere, like the
torus, excitations may show up in the form of single quasiparticles.

One of the features exhibited by the family of non-Abelian models
considered in this paper is the existence of a string tension for
the motion of quasiparticle excitations. It would be interesting to
study the role of such tensions when the action of external source
of decoherence such as local external fields or thermal effects are
studied \cite{dennis_etal02}, \cite{AFH_07}. Another aspect that
deserves further study is the properties of these models for
topological quantum computation, since here we have only focused on
their properties as far as topological order is concerned.

\noindent {\em Acknowledgements} We acknowledge financial support
from a PFI grant of the EJ-GV (H.B.), DGS grants  under contracts
BFM 2003-05316-C02-01, FIS2006-04885 (H.B., M.A.M.D,), and the ESF
Science Programme INSTANS 2005-2010 (M.A.M.D.).

%%%%%%%%%%%%%%%%%%%%%%%%%%%%%%%%%%%%%%%%%%%%%%%%%%%%%%%%%%%%%%%%%%%%%
%%%%%%%%%%%%%%%%%%%%%%%%%%%%%%%%%%%%%%%%%%%%%%%%%%%%%%%%%%%%%%%%%%%%%
%%%%%%%%%%%%%%%%%%%%%%%%%%%%%%%%%%%%%%%%%%%%%%%%%%%%%%%%%%%%%%%%%%%%%
%%%%%%          APPENDIX            %%%%%%%%%%%%%%%%%%%%%%%%%%%%%%%%%
%%%%%%%%%%%%%%%%%%%%%%%%%%%%%%%%%%%%%%%%%%%%%%%%%%%%%%%%%%%%%%%%%%%%%
%%%%%%%%%%%%%%%%%%%%%%%%%%%%%%%%%%%%%%%%%%%%%%%%%%%%%%%%%%%%%%%%%%%%%
%%%%%%%%%%%%%%%%%%%%%%%%%%%%%%%%%%%%%%%%%%%%%%%%%%%%%%%%%%%%%%%%%%%%%

\appendix

\section{Group algebras}
\label{appendix_A}

We present some basic properties of the algebras of finite groups. We will find them useful for establishing several results on ribbon operator algebras in the next section. In particular, we need them to label the charges and domain walls of our models.

\subsection{Representations and classes}

Given a finite group $G$, let $\Conj G$ be the set of conjugacy
classes of $G$ and $\Irr G$ be the set of irreducible
representations of $G$, up to isomorphisms\cite{serre}. For
$R\in\Irr G$, $g\in G$, we denote by $R_g$ the image of $g$ under
$R$ and by $\Gamma_R(g)$ the unitary matrix of $R_g$ in a particular
basis. The character of a representation $R$ is $\chi_R(g) := \sum_i
\Gamma_R^{ii}(g)$. Characters are examples of class functions
$\funcion {\phi, \psi} {\Conj G}\C$, for which we introduce the
product \cite{serre}
\begin{equation}\label{producto_funciones_clase}
    (\phi, \psi)_G := \frac 1 {|G|} \sum_{C\in \Conj G} |C| \spc \phi(C)\spc\bar
    \psi(C),
\end{equation}
where the bar denotes complex conjugation.

The following are the well-known orthogonality relations for irreducible
representations, characters and conjugacy classes \cite{serre}
\begin{align}
\sum_{g\in G} \Gamma_R^{ij}(g) \bar \Gamma_{R\prima}^{i\prima
j\prima} (g) &= \frac {|G|}{n_R}
\delta_{R,R\prima}\delta_{i,i\prima}\delta_{j,j\prima},\label{ortogonalidad_representaciones}\\
\sum_{C\in \Conj G} |C|\spc \chi_R (C)\spc  \bar \chi_{R\prima} (C) &=
|G| \spc \delta_{R,R\prima}, \label{ortogonalidad_caracteres}\\
\sum_{R \in \Irr {G}}  \chi_R(C) \bar \chi_R(C\prima) &= \frac {|G|} {|C|}\spc\delta_{C,C\prima} ,\label{ortogonalidad_clases}
\end{align}
where $R,R\prima \in \Irr G$, $C,C\prima \in \Conj G$ and
$n_R=\chi_R(1)$ is the degree of $R$.
(\ref{ortogonalidad_caracteres},\ref{ortogonalidad_clases}) imply
that $|\Conj G|=|\Irr G|$. The identities
\eqref{ortogonalidad_caracteres}, which are just a particular case
of \eqref{ortogonalidad_representaciones}, can be written more
concisely as $(\chi_R, \chi_{R\prima})_G = \delta_{R,R\prima}$.

\subsection{Induced representations} \label{apendice_induced_representations}

Given a finite group $G$ and a normal subgroup $H\subset G$, let
$\Conj {H,G}$ be the set of conjugacy classes of $G$ contained in
$H$ and $\Irr {H,G}$ be the set of induced representations in $G$ of
irreducible representations in $H$, up to isomorphisms\cite{serre}.
Recall that for each representation $R$ of $H$ there exists a
representation $\Ind R$, called the induced representation of $R$ in
$G$, such that for $g\in G$
\begin{equation}\label{caracter_inducido}
\chi_{\Ind R} (g) :=
\begin{cases}
\sum_{r\in G/H} \chi_R (rg\inver r), &g\in H, \\%
0, &g\nin H.
\end{cases}
\end{equation}
The Frobenius reciprocity formula asserts that for $\funcion \phi {\Conj G}\C$ and $R\in \Irr{H}$
\begin{equation}\label{Frobenius}
(\phi|_H, \chi_R)_H = (\phi, \chi_{\Ind R})_G,
\end{equation}
where $\phi|_H$ is the restriction of $\phi$ to $H$.

Let us introduce an equivalence relation in $\Conj H$. For
$D,D\prima\in \Conj H$, we set $D\sim D\prima $ if $D\prima=gD\inver
g$ for some $g\in G/H$.  Each $C\in\Conj{H,G}$ is related to a
unique equivalence class $\tilde C$ in the following way
\begin{equation}\label{descomposicion_clases_HG}
C = \bigcup_{D\in \tilde C} D.
\end{equation}

Given $S\in \Irr H$ and $r\in G/H$, define a representation
$S^r\in\Irr H$ setting $S^r_h := S_{rh\inver r}$. We introduce an
equivalence relation in $\Irr H$. For $S,S\prima\in \Irr H$, we set
$S\sim S\prima $ if $S=S^g$ for some $g\in G/H$.  Each
$R\in\Irr{H,G}$ is related to a unique equivalence class $\tilde R$
in the following way
\begin{equation}\label{descomposicion_indreps_HG}
\chi_{R} (g)=
\begin{cases}
\frac {|G|}{|H||\tilde R|}\sum_{S\in \tilde R} \chi_S(g) , &g\in H, \\%
0, &g\nin H,
\end{cases}
\end{equation}
and $S\in \tilde R$ iff $\Ind S = R$.

As a generalization of (\ref{ortogonalidad_caracteres},\ref{ortogonalidad_clases}) we have the following orthogonality relations
\begin{align}
\sum_{C\in \Conj {H,G}} |C|\spc \chi_R (C)\spc  \bar \chi_{R\prima} (C) =
\frac {|G|^2}{|H|\spc|\tilde R|} \spc \delta_{R,R\prima},\label{ortogonalidad_caracteres_HG}\\
\sum_{R \in \Irr {H,G}} |\tilde R|\spc \chi_R(C) \spc \bar \chi_R(C\prima) = \frac {|G|^2} {|H|\spc|C|}\spc\delta_{C,C\prima},\label{ortogonalidad_clases_HG}
\end{align}
where $R,R\prima \in \Irr {H,G}$ and $C,C\prima \in \Conj {H,G}$.
(\ref{ortogonalidad_caracteres_HG},\ref{ortogonalidad_clases_HG})
imply that $|\Conj {H,G}|=|\Irr {H,G}|$. Note that
\eqref{ortogonalidad_caracteres_HG} can be rewritten in terms of the
product $(\cdot,\cdot)_G$ and derived from \eqref{Frobenius} as
follows:
\begin{equation}
(\chi_R, \chi_{R\prima})_G = (\chi_{S_0}, \chi_{R\prima})_H= \frac {|G|}{|H|\spc|\tilde R|} \spc \delta_{R,R\prima},
\end{equation}
where $S_0\in \tilde R$. As for \eqref{ortogonalidad_clases_HG}, it follows from (\ref{descomposicion_indreps_HG}, \ref{ortogonalidad_clases})
\begin{multline}
\sum_{R \in \Irr {H,G}} |\tilde R|\spc \chi_R(C) \spc \bar \chi_R(C\prima) =\\
= \sum_{R \in \Irr {H,G}} \frac {|G|^2} {|H|^2|\tilde R|}\sum_{S,S\prima \in \tilde R} \chi_S(D_0) \spc \bar \chi_{S\prima}(D_0\prima) =\\
=  \sum_{g\in G/H}  \sum_{R \in \Irr {H,G}} \frac {|G|} {|H|}\sum_{S \in \tilde R} \chi_S(D_0) \spc \bar \chi_{S}(g D_0\prima \inver g) =\\
= \frac {|G|} {|D_0|} \sum_{g\in G/H} \delta_{D_0, g D_0\prima
\inver g} = \frac {|G|^2} {|H|\spc|C|}\spc\delta_{C,C\prima},
\end{multline}
where $D_0\in \tilde C$, $D_0\prima\in \tilde C\prima$.

\subsection{The group algebra $\C[G]$}

Given a finite group $G$, the group algebra $\C[G]$ consists of
formal sums $\sum_{g\in G} c_g g$, $c_g\in \C$. We are interested in
certain representation $\funcion {\mathcal R}{G\times
G}{\GL{\C[G]}}$. Let us denote by $\mathcal R_{g_1,g_2}$ the image
of $(g_1,g_2)\in G\times G$. Then $\mathcal R$ is defined by:
\begin{equation}
\mathcal R_{g_1,g_2}(g) := g_1 g \inver g_2, \qquad g\in G,
\end{equation}
where $\inver g$ denotes the inverse of $g$. It turns out that the
following isomorphism holds, as we shall check below,
\begin{equation}\label{descomposicion_GC}
\C[G]\simeq \sum_{R\in \Irr G} V_{R}\otimes V_{\bar R}
\end{equation}
where $V_R$ is the representation space of $R$.

With the aim of checking \eqref{descomposicion_GC} explicitly, let
us consider the following elements of $\C[G]$:
\begin{equation}\label{definicion_E_R_ij}
e_{R}^{ij} := \frac {n_R} {|G|} \sum_{g\in G} \bar \Gamma_{R}^{ij}
(g) \ g
\end{equation}
where $R$ is a representation of $G$ and $i,j=1,\dots,n_R$. For irreducible $R$, there are $|G|$ such elements, because $\sum_{R\in\Irr G} n_R^2=|G|$ due to \eqref{descomposicion_GC}. In fact, they give a new basis for $\C[G]$:
\begin{equation}
g = \sum_{R\in \Irr G} \sum_{i,j=1}^{n_R}\Gamma_R^{ij} (g) \
e_{R}^{ij},
\end{equation}
which can be checked using \eqref{ortogonalidad_clases}. In this basis
\begin{equation}
\mathcal R_{g_1,g_2}\ e_R^{ij} = \sum_{k,l=1}^{n_R}
\Gamma^{ki}_R(g_1) \comp \Gamma^{lj}_{R}(g_2)\ e_R^{kl},
\end{equation}
which gives explicitly the isomorphism
\eqref{descomposicion_GC}, as desired. Let us
define
\begin{equation}\label{adjunto}
\overline{\left( \sum_g c_g \spc g \right)} := \sum_g \bar c_g \spc
\inver g.
\end{equation}
Then, we have
\begin{equation}\label{propiedades_e_R_ij}
e_{R}^{ij}e_{R\prima}^{i\prima j\prima} = \delta_{R,R\prima}
\delta_{j,i\prima} \ e_R^{ij\prima}, \qquad \inver e_{R}^{ij} =
e_{R}^{ji}.
\end{equation}
which follow from \eqref{ortogonalidad_representaciones}.

\subsection{The algebra $\mathcal Z_G$}\label{appendix_A_center}

The center of a group algebra $\C[G]$, denoted $\mathcal Z_G$, is
the subalgebra of elements that commute with all the elements of
$\C[G]$. A basis for the center of $\C[G]$ is the following:
\begin{equation}\label{base_clases_Z_G}
e_C := \sum_{g\in C} g \qquad\qquad C\in\Conj G.
\end{equation}
We have
\begin{equation}\label{propiedades_e_C}
e_C e_{C\prima} =: \sum_{C\primas\in\Conj G}
N_{C,C\prima}^{C\primas} e_{C\primas}, \qquad \inver e_{C} = e_{\bar
C},
\end{equation}
where $\inver C$ denotes the inverse class of $C$ and
$N_{C,C\prima}^{C\primas}\geq 0$ are integers.

With the aim of finding an alternative basis for $\mathcal Z_G$, we
define
\begin{equation}\label{base_proyectores_Z_G}
e_R := \sum_i e_R^{ii}, \qquad R\in\Irr G,
\end{equation}
which are a nice set of projectors:
\begin{align}\label{propiedades e_R}
e_{R}e_{R\prima} &= \delta_{R,R\prima} e_R, \nonumber\\%
\qquad \inver e_{R} &= e_{R}, \nonumber\\%
\sum_{R\in \Irr G} e_R &= 1,
\end{align}
as follows from \eqref{ortogonalidad_clases}. They provide us with a
new basis for $\mathcal Z_G$ since
\begin{align}\label{cambio_base_C_a_R}
e_R &= \frac {n_R} {|G|} \sum_{C\in \Conj G} \bar \chi_R(C) \
e_C,\\
\label{cambio_base_R_a_C}
e_C &= \sum_{R\in \Irr G} \frac {|C|} {n_R} \chi_R(C) \ e_R,
\end{align}
as can be checked using \eqref{ortogonalidad_clases}.
%Since $e_R$ only depends on $\chi_R$, we can set for $\chi=\chi_R$
%\begin{equation}
%e_\chi := e_R.
%\end{equation}

\subsection{The algebra $\mathcal Z_{H,G}$}

Let $H$ be a normal subgroup of $G$. There is a natural inclusion
$\C[H]\subset \C[G]$. We are interested in the intersection of
their centers $\mathcal Z_{H,G} := \mathcal Z_H\cap \mathcal Z_G$.
A basis for the algebra $\mathcal Z_{H,G}$ is
\begin{equation}
e^G_C := \sum_{g\in C} g \qquad\qquad C\in\Conj {H,G}.
\end{equation}
Its elements can be rewritten in terms of elements of $\mathcal Z_H$ as follows
\begin{equation}\label{descomposicion_clase_G_H}
e^G_C = \sum_{D\in \tilde C} e^H_{D}=\frac {|H|\spc |\tilde
C|}{|G|}\sum_{g\in G/H} g e^H_{D} \inver g,
\end{equation}
where we are using the notation of \eqref{descomposicion_clases_HG}.

We have the aim of finding a basis of projectors for $\mathcal Z_{H,G}$, analogous to \eqref{propiedades e_R}. Let
\begin{equation}\label{definicion_proyectores_Z_H_G}
e\prima_R := \frac{|H|\spc |\tilde R|}{|G|} e^G_{R} = \sum_{S\in \tilde R} e^H_S,\qquad R\in \Irr {H,G},
\end{equation}
so that,
\begin{align}\label{propiedades_e_prima_R}
e\prima_R e\prima_{R\prima} &= \delta_{R,R\prima} e\prima_R, \nonumber\\%
\qquad {e\prima_R}^\dagger &= e\prima_R, \nonumber\\%
\sum_{R\in \Irr{H,G}} e\prima_R &= 1,
\end{align}
showing also that the $e_R$ are linearly independent. Indeed,
$\set{e\prima_R}{R\in\Irr{H,G}}$ is the desired projector basis for
$\mathcal Z_{H,G}$, because we have
\begin{align}\label{cambio_base_C_a_R_HG}
e\prima_R &=\frac{n_{R}\spc |\tilde R|\spc |H|}{|G|^2}\sum_{C\in \Conj {H,G}} \bar
\chi_R (C) \spc e^G_C,\\
\label{cambio_base_R_a_C_HG}
e^G_C &= \sum_{R\in\Irr {H,G}} \frac{|C|}{n_R} \chi_R(C) \spc e\prima_R,
\end{align}
where $R\in\Irr {H,G}$ and $C\in \Conj{H,G}$. In order to check
\eqref{cambio_base_R_a_C_HG}, insert \eqref{cambio_base_C_a_R_HG}
and apply \eqref{ortogonalidad_clases_HG}. Finally, not that if $H$
belongs to the center of $G$ we have $e\prima_R = e_R^H$.

\section{Ribbon operators}\label{apendice_todo_ribbons}
\label{appendix_B}

In this appendix we discuss ribbon operator algebras. We will first
introduce the geometric aspects of ribbons, then we will attach
operators to ribbons, and we will finish describing and
characterizing certain projector ribbon subalgebras. These
projectors are directly related to the topological charges and
domain wall fluxes in the models under study.

\subsection{Sites, triangles and strips}\label{apendice_strips}

Our starting point is a lattice embedded in an orientable
two-dimensional manifold. Let $V$, $E$ and $F$ be the sets of its
vertices, edges and faces respectively. The edges of the direct and
dual lattices must be oriented accordingly, as explained in the main
text in sect.\ref{sect_II} and Fig.\ref{figura_red}. A direct edge
$e$ points from the vertex $\partial_0 e$ to the vertex $\partial_1
e$, and a dual edge $e^\ast$ points from the dual vertex $\partial_0
e^\ast$ to the dual vertex $\partial_1 e^\ast$. The shape of the
lattice is arbitrary, but with certain conditions. Namely \emph{(i)}
if $e$ is an edge, then $\partial_0 e\neq
\partial_1 e$ and \emph{(ii)} a face with $s$ edges must have $s$ different vertices. The same conditions
must hold for the dual lattice.

For each edge $e\in E$, we introduce an inverse edge $\comp e $
which is an edge with the direction reversed,i.e., with
$\partial_0 \comp e = \partial_1 e$ and $\partial_1 \comp e = \partial_0 e$.
We let $\comp { \comp e } = e$ and denote by
$\alledges:=E\cup \comp E $ the disjoint union of the original and
inverse edges. For dual edges, we set $\complong {(e^\ast)} = (\comp
e)^\ast$ so that $\alledges^\ast = E^\ast\cup \comp E^\ast$.

A (direct) path $p$ is a list $(v_0,e_1,v_1,\dots, e_n,v_n)$ such
that $v_i\in V$, $e_i\in \alledges$, $\partial_0 e_i= v_{i-1}$ and
$\partial_1 e_i= v_{i}$. A dual path $p^\ast$ is a list
$(f_0^\ast,e_1^\ast, f_1^\ast, \dots,f_r^\ast)$ such that $f_i\in
F$, $e_i\in \alledges$, $\partial_0 e_i^\ast= f_{i-1}^\ast$ and
$\partial_1 e_i^\ast= f_{i}^\ast$.

\emph{Sites.} A site is a pair $s=(v,f)$ with $f$ a face and $v$ one
of its vertex. We visualize sites as dashed lines connecting the vertex $v$
and the dual vertex $f^\ast$, as shown in Fig.\ref{figura_strips},
and use the notation $s=:(v_s, f_s)$.

\begin{figure}
\includegraphics[width=7 cm]{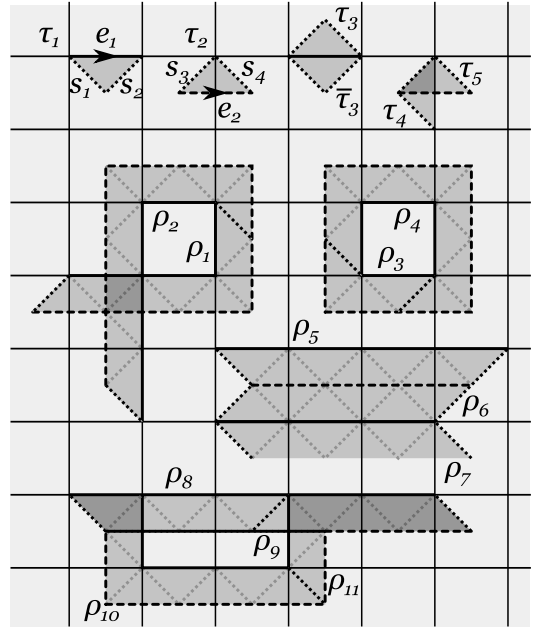}
\caption{
Several examples of sites, triangles and strips. Sites are displayed as black dotted lines when they are an end of some ribbon. Triangles and strips are displayed as grey bands. The direct path of each strip is a thick black line, and the dual path a dashed line. The $s_i$ are sites, the $\tau_i$ are triangles and the $\rho_i$ strips (indeed ribbons). $\tau_1=(s_1,s_2,e_1)$ is a direct triangle and $\tau_2=(s_3,s_4,e_2^\ast)$ is a dual triangle. $\bar\tau_3$ is the complementary triangle of $\tau_3$. $\tau_4$ and $\tau_5$ overlap. $\rho=\rho_1\rho_2$ is a strip but not a ribbon. $\sigma=\rho_3\rho_4$ and $\sigma\prima =\rho_4\rho_3$ are closed strips (indeed ribbons), with $\sigma\prima$ a rotation of $\sigma$ and $\sigma\triangleright\sigma\prima = \rho_3$. $\rho_5$ is a dual complementary ribbon or $\rho_6$, $(\rho_5, \rho_6)_\bigtriangleup$, and $\rho_7$ is direct complementary ribbon of $\rho_6$, $(\rho_6, \rho_7)_\bigtriangledown$. The ribbons $\rho_8,\rho_9,\rho_{10},\rho_{11}$ illustrate various types of joints, so that $(\rho_8,\rho_{10})_\prec$, $(\rho_9,\rho_{11})_\succ$ and $(\rho_8\rho_9,\rho_{10}\rho_{11})_{\prec\succ}$.
}\label{figura_strips}
\end{figure}

\emph{Triangles.} A direct triangle $\tau=(s_0,s_1,e)$ consists of
two sites $s_i$ and a direct edge $e\in \alledges$ such that (i)
$f_{s_0}=f_{s_1}$ (ii) $\partial_i e=v_{s_i}$ and (iii) $s_0$, $s_1$
and $e$ form a triangle with sides listed in \emph{counterclockwise}
order. We use the notation $\tau=: (\partial_0 \tau, \partial_1
\tau, e_\tau)$, and say that $\tau$ points from $\partial_0 \tau$ to
$\partial_1\tau$ through $e_\tau$.

A dual triangle $\tau=(s_0,s_1,e^\ast)$ consists of two sites $s_i$
and a dual edge $e^\ast\in \alledges^\ast$ such that (i)
$v_{s_0}=v_{s_1}$ (ii) $\partial_i e^\ast = f_{s_i}^\ast$ and (iii)
$s_0$, $s_1$ and $e^\ast$ form a triangle with sides listed in
\emph{clockwise} order. We use the notation $\tau=: (\partial_0
\tau, \partial_1 \tau, e_\tau^\ast)$, and say that $\tau$ points
from $\partial_0 \tau$ to $\partial_1\tau$ through $e_\tau^\ast$.

Each direct (dual) triangle $\tau$ has a complementary triangle
$\comp \tau$, the unique direct (dual) triangle with $e_{ \comp \tau
}=\comp e_\tau$.

Two triangles overlap if they `share part of their area'.
Specifically, a dual triangle $\tau$ and a direct triangle
$\tau\prima$ overlap if $\partial_i\tau=\partial_i\tau\prima$ either
for $i=0$ or $i=1$, and two triangles of the same type overlap if
they are the same triangle. Triangles and their properties are illustrated in Fig.\ref{figura_strips}.

\emph{Strips.} A (triangle) strip of length $n\geq 0$ is an
alternating sequence of sites and triangles $\rho = (s_0, \tau_1,
s_1, \tau_2, \dots,s_{n-1}, \tau_n, s_n)$ with $\partial_0 \tau_i =
s_{i-1}$ and $\partial_1 \tau_i = s_i$. We define $\partial_0
\rho:=s_0$ and $\partial_1 \rho:=s_n$. Strips can be given just as a
list of sites $\rho=(s_0,s_1, \dots, s_n)$ or, if they have non-zero
length, as a list of triangles $\rho=(\tau_1, \dots, \tau_n)$. They
could also be given as a pair $\rho=(s, \vect x)$, with $s=s_0$ the
initial site and $\vect x\in 2^n$ a binary vector, because for any
given site there exists exactly a dual and a direct triangle
pointing from it. Examples of strips are given in Fig.\ref{figura_strips}.

Let $\rho$ be a strip with $s_i=\partial_i\rho$. We say that $\rho$
is \emph{(i)} trivial if it has length zero, \emph{(ii)} direct
(dual) if it consists only of direct (dual) triangles, \emph{(iii)}
proper if it is neither direct nor dual, \emph{(iv)} open if
$v_{s_0}\neq v_{s_1}$ and $f_{s_0}\neq f_{s_1}$ and \emph{(v)}
closed if $s_0=s_1$. When $\sigma$ is a closed strip we write
$\partial\sigma$ instead of $\partial_i\sigma$.

Two strips are composable if $\partial_1 \rho_1 = \partial_0
\rho_2$. In that case the composed strip is $\rho=\rho_1\rho_2 :=
(s_0, \dots, s_m)$ with $\rho_1=(s_0, \dots, s_n)$ and $\rho_2=(s_n,
\dots, s_m)$. This composition operation is clearly associative.

The cyclic nature of the list of triangles of a closed strip allows
to rotate it, see Fig.\ref{figura_strips}. We say that $\sigma\prima$ is a rotation of $\sigma$,
denoted $(\sigma, \sigma\prima)_\circ$, if $\sigma\neq\sigma\prima$,
$\sigma = \rho_1\rho_2$ and $\sigma\prima=\rho_2\rho_1$. In that
case we set $\sigma\triangleright\sigma\prima := \rho_1$. Note that
if $(\sigma, \sigma\prima)_\circ$ then both $\sigma$ and
$\sigma\prima$ are closed.

We can attach both a direct and a dual path to a strip $\rho$, see Fig.\ref{figura_strips}. If
$(\tau_1,\dots,\tau_q)$ is the ordered list of direct triangles in
$\rho$, we set $p_{\rho}=(v_{\partial_0 \tau_1},
e_{\tau_1},v_{\partial_1 \tau_1},\dots,e_{\tau_q},v_{\partial_1
\tau_q})$. Similarly, if $(\tau_1\prima,\dots,\tau_r\prima)$ is the
ordered list of dual triangles in $\rho$, we set
$p_{\rho}^\ast=(f^\ast_{\partial_0 \tau_1\prima},
e^\ast_{\tau_1\prima},f^\ast_{\partial_1
\tau_1\prima},\dots,e^\ast_{\tau_r\prima},f^\ast_{\partial_1
\tau_r\prima})$.

Consider two strips $\rho_1$ and $\rho_2$. We say that $\rho_1$ and
$\rho_2$ \emph{(i)} do not overlap, denoted
$(\rho_1,\rho_2)_{\oslash}$, if no triangle of $\rho_1$ overlaps
with a triangle of $\rho_2$, \emph{(ii)} form a left joint, denoted
$(\rho_1, \rho_2)_\prec$, if $\rho_i=\rho\tau_i\rho_i\prima$ with
$\tau_1$ a dual triangle, $\tau_2$ a direct triangle and
$(\rho_1\prima,\rho_2\prima)_\oslash$, \emph{(iii)} form a right
joint, denoted $(\rho_1, \rho_2)_\succ$, if
$\rho_i=\rho_i\prima\tau_i\rho$ with $\tau_1$ a dual triangle,
$\tau_2$ a direct triangle and
$(\rho_1\prima,\rho_2\prima)_\oslash$, \emph{(iv)} form a left-right
joint, denoted $(\rho_1, \rho_2)_{\prec\succ}$, if
$\rho_i=\rho\tau_i\rho_i\prima\tau_i\prima\rho\prima$ with
$\tau_1$,$\tau_1\prima$ dual triangles, $\tau_2$,$\tau_2\prima$
direct triangles and $(\rho_1\prima,\rho_2\prima)_\oslash$, and
\emph{(v)} form a crossed joint, denoted $(\rho_1, \rho_2)_+$, if
$\rho_i=\rho_i\prima\rho_i\primas$ with
$(\rho_1\prima,\rho_2\prima)_\prec$,
$(\rho_2\primas,\rho_1\primas)_\succ$,
$(\rho_1\prima,\rho_2\primas)_\oslash$ and
$(\rho_1\primas,\rho_2\prima)_\oslash$. Closed crossed joints are
only possible in surfaces of nontrivial topology, see
Fig.~\ref{figura_crossed_joint}. Indeed, their name was chosen with
those cases in which $\rho_1$, $\rho_2$ are closed in mind. The
other joint types are illustrated in Fig.\ref{figura_strips}.

We denote by $V_\rho$ ($F_\rho$) the set of vertices (faces) in a
strip $\rho$, by $\Edual_\rho$ ($\Edirect_\rho$) the set of edges
$e\in\alledges$ with $e=e_\tau$ for some dual (direct) triangle
$\tau$ in $\rho$, and by $\Edualcomp_\rho$ ($\Edirectcomp_\rho$) the
set of their inverses. If $\rho$ and $\rho\prima$ are non-direct
(non-dual) closed strips with $\Edual_\rho=\Edualcomp_{\rho\prima}$
($\Edirect_\rho=\Edirectcomp_{\rho\prima}$), we say that
$\rho\prima$ is a dual (direct) complementary ribbon of $\rho$,
denoted $(\rho, \rho\prima)_\bigtriangleup$ ($(\rho,
\rho\prima)_\bigtriangledown$), see Fig.\ref{figura_strips}.

\subsection{Ribbons}\label{apendice_ribbons}

Ribbons are strips such that its direct and dual path do not
self-cross.
\begin{defn}\label{definicion_ribbon} Let $\rho$ be a triangle strip, $p_\rho = (v_0,\dots,e_q,v_q)$
and $p_\rho^\ast = (f_0^\ast,\dots,e_r^\ast,f_r^\ast)$ . We say that
$\rho$ is a ribbon if
\begin{itemize}
  \item for any triangle $\tau$ in $\rho$, $\comp \tau$ is
  not in $\rho$,
  \item for $0\leq i\neq j\leq q$ with $i\neq 0$ or $j\neq q$, $v_i\neq v_j$ and
  \item for $0\leq i\neq j\leq r$ with $i\neq 0$ or $j\neq r$, $f_i\neq f_j$.
\end{itemize}
\end{defn}

A rotation of a ribbon is a ribbon. Two ribbons $\rho_1$ and
$\rho_2$ are composable if they are composable as strips and
$\rho=\rho_1\rho_2$ is a ribbon. If $\rho$ is a ribbon and
$\rho=\rho_1\rho_2$ as strips, then $\rho_1$ and $\rho_2$ are
ribbons also, and $(\rho_1,\rho_2)_{\oslash}$. Complementary ribbons
do not overlap.

Consider any two sites $s$ and $s\prima$. If $v_s=v_{s\prima}$, we
denote by $\alpha_{s,s\prima}$ the unique nontrivial dual ribbon
$\rho$ with $\partial_0 \rho = s$ and $\partial_1 \rho = s\prima$.
If $f_s=f_{s\prima}$, we denote by $\beta_{s,s\prima}$ the unique
nontrivial direct ribbon $\rho$ with $\partial_0 \rho = s$ and
$\partial_1 \rho = s\prima$. All nontrivial dual (direct) ribbons
take the form $\alpha_{s,s\prima}$ ($\beta_{s,s\prima}$) for some
$s,s\prima$. We also write $\alpha_s:=\alpha_{s,s}$ and
$\beta_s:=\beta_{s,s}$. Then if $(\alpha_s, \alpha_{s\prima})_\circ$
we have $\alpha_s\triangleright
\alpha_{s\prima}=\alpha_{s,s\prima}$, and if $(\beta_s,
\beta_{s\prima})_\circ$ we have $\beta_s\triangleright
\beta_{s\prima}=\beta_{s,s\prima}$.

\subsection{Triangle operators}

From this point on we are working with a fixed finite group $G$. To
each edge $e\in E$ we attach a Hilbert space $\Hilb_G\prima$
with orthonormal basis $\sset{\ket g}_{g\in G}$. The total Hilbert
space of our system is then $\Hilb_G:={\Hilb_G\prima}^{\otimes
|E|}$. If $O$ is a single-qudit operator, $O_e$ with $e\in\alledges$
denotes that operator acting on the qudit attached to the edge $e$
($\comp e$) if $e\in E$ ($e\in\comp E$).

Before we can define operators for arbitrary ribbons we must
consider their elementary components, triangles. To this end,  let
\begin{equation}\label{definicion_L_T_I}
L^h := \sum_{g \in G} \ket {hg} \bra g , \quad T^g := \ket g \bra g
, \quad I := \sum_{g \in G} \ket {\inver g } \bra g.
\end{equation}
If $\tau$ is a dual triangle, we set
\begin{equation}\label{definicion_op_triangulo_dual}
L_\tau^{h} := I^xL^h_{e_\tau}I^x
\end{equation}
with $x=0$ ($x=1$) if $e_\tau\in E$ ($e_\tau\in \comp E$). If
$\tau\prima$ is a direct triangle, we set
\begin{equation}\label{definicion_op_triangulo_directo}
T_{\tau\prima}^{g}:= I^xT^g_{e_{\tau\prima}}I^x
\end{equation}
with $x=0$ ($x=1$) if $e_\tau\in E$ ($e_\tau\in \comp E$)). With
these definitions we have
\begin{align}\label{propiedades_algebras_L_T}
L_\tau^h L_\tau^{ h\prima}=L_\tau^{hh \prima}, \qquad
&T_{\tau\prima}^{g}T_{\tau\prima}^{g\prima} = \delta_{g,g\prima}T_{\tau\prima}^{g}, \nonumber\\
{L_\tau^h}^\dagger=L_\tau^{\inver h}, \qquad
&{T_{\tau\prima}^{g}}^\dagger = T_{\tau\prima}^{g}, \nonumber\\
L_\tau^1 = 1, \qquad &\sum_{g\in G} T_{\tau\prima}^g = 1.
\end{align}
As for the commutation rules, we have
\begin{equation}\label{conmutacion_triangulos}
L_\tau^h T_{\tau\prima}^g =
\begin{cases}
T_{\tau\prima}^{hg} L_\tau^h, &\text{if } \partial_0
\tau=\partial_0 \tau\prima;\\
T_{\tau\prima}^{g\inver h} L_\tau^h &\text{if } \partial_1
\tau=\partial_1 \tau\prima;\\
T_{\tau\prima}^g L_\tau^h, &\text{otherwise.}
\end{cases}
\end{equation}
If $\tau_1\neq\tau_2$ are dual triangles and $\tau_1\prima$,
$\tau_2\prima$ are direct triangles then
\begin{equation}\label{conmutacion_triangulos_mismo_tipo}
[L_{\tau_1}^h,L_{\tau_2}^{h\prima}]=[T_{\tau_1\prima}^g,T_{\tau_2\prima}^{g\prima}]=0.
\end{equation}
Thus, non-overlapping triangles have commuting triangle operators.

\subsection{Ribbon operators}\label{apendice_Fhg}

For each ribbon $\rho$ we introduce a set of operators
$\sset{F_\rho^{h,g}}$ with $h,g\in G$. We call them ribbon
operators. First, if $\epsilon$ is a trivial ribbon
we define
\begin{equation}\label{definicion_ribb_trivial}
F_\epsilon^{h,g} := \delta_{1,g}
\end{equation}
If $\tau$ is a dual triangle and $\tau\prima$ a direct triangle we
set \cite{kitaev}
\begin{equation}\label{definicion_ribb_triangulos}
F_\tau^{h,g}:= \delta_{1,g} L^h_{\tau}, \qquad
F_{\tau\prima}^{h,g}:= T^g_{\tau\prima}.
\end{equation}
If $\rho$ is an arbitrary ribbon of length $l
> 1$, we let $\rho=\rho_1\rho_2$ and recursively define
a gluing or composition procedure by means of the folowing
relations
\begin{equation}\label{definicion_ribb_recursiva}
 F_\rho^{h,g}:=\sum_{k\in G}F_{\rho_1}^{h,k}F_{\rho_2}^{\inver k hk, \inver k g}.
\end{equation}
We must of course ensure that this definition of $F_\rho^{h,g}$ is
independent of the particular choice of $\rho_1$ and $\rho_2$. But
this amounts to check that if $\rho=\rho_1\rho_2\rho_3$ then
\begin{equation}
\sum_{k\in G}F_{\rho_1}^{h,k}F_{\rho_2\rho_3}^{\inver k hk, \inver k
g} = \sum_{k\in G}F_{\rho_1\rho_2}^{h,k}F_{\rho_3}^{\inver k hk,
\inver k g}
\end{equation}
which follows by expanding $F_{\rho_2\rho_3}^{\inver k hk, \inver k
g}$ and $F_{\rho_1\rho_2}^{h,k}$ with
\eqref{definicion_ribb_recursiva}. We also have to check that if
$\rho=\epsilon \rho = \rho\epsilon\prima$ then
\begin{multline}
F_\rho^{h,g} = \sum_{k\in G} F_{\epsilon}^{h,k}F_{\rho}^{\inver k h
k, \inver k g} = \sum_{k\in G}F_{\rho}^{h,k}
F_{\epsilon\prima}^{\inver k hk, \inver k g},
\end{multline}
which indeed holds true.

We will find useful the notation
\begin{equation}
T_\rho^g:=F_\rho^{1,g}, \qquad L_\rho^h:=\sum_{g\in G} F_\rho^{h,g}.
\end{equation}
conceived so that
\begin{equation}\label{F_segun_TL}
F_\rho^{h,g}= L_\rho^h T_\rho^g = T_\rho^g L_\rho^h.
\end{equation}
Also, for any $S\subset G$, $g\in G$ we set
\begin{equation}
F^{S,g}:= \frac 1 {|S|}\sum_{s\in S} F^{s,g},\quad F^{g,S}:=
\sum_{s\in S} F^{g,s}.
\end{equation}

We now list several properties of ribbon operators. They follow from
the properties of triangle operators and
\eqref{definicion_ribb_recursiva}. For any ribbon $\rho$
\begin{equation}\label{producto_ribbons}
F_\rho^{h,g}F_\rho^{h\prima, g\prima} = \delta_{g,g\prima}
F_\rho^{hh\prima, g}, \qquad {F_\rho^{h,g}}^\dagger = F_\rho^{\inver
h ,g},
\end{equation}
\begin{equation}\label{unidades}
L_\rho^{1}=\sum_{g\in G}T_\rho^{g}=1.
\end{equation}
Thus, for each $\rho$, ribbon operators linearly generate an algebra
closed under Hermitian conjugation. If $\rho$ is dual and $\rho\prima$ direct then
\begin{equation}\label{simplicidad_directo_dual}
F_{\rho}^{h,g}=\delta_{g,1}L_{\rho}^h, \qquad\qquad
F_{\rho\prima}^{h,g}=T_{\rho\prima}^g.
\end{equation}
If $(\rho_1, \rho_2)_\prec$ then
\begin{equation}\label{conmutacion_left_joint}
F_{\rho_1}^{h,g} F_{\rho_2}^{k,l} = F_{\rho_2}^{hk\inver h ,hl}
F_{\rho_1}^{h,g}.
\end{equation}
If $(\rho_1, \rho_2)_\succ$ then
\begin{equation}\label{conmutacion_right_joint}
F_{\rho_1}^{h,g} F_{\rho_2}^{k,l} = F_{\rho_2}^{k,l\inver g \inver h
g} F_{\rho_1}^{h,g}.
\end{equation}
If $(\rho_1, \rho_2)_{\prec\succ}$ then
\begin{equation}\label{conmutacion left-right joint}
F_{\rho_1}^{h,g} F_{\rho_2}^{k,l} = F_{\rho_2}^{hk\inver h ,hl\inver
g \inver h g} F_{\rho_1}^{h,g}.
\end{equation}
If $(\rho_1, \rho_2)_{+}$ then
\begin{equation}\label{conmutacion crossed_joint}
F_{\rho_1}^{h,g} F_{\rho_2}^{k,l} = F_{\rho_2}^{hk\inver h,hl}
F_{\rho_1}^{h,g\inver lkl}.
\end{equation}
If $(\sigma_1, \sigma_2)_\circ$ then
\begin{equation}\label{rotacion}
F_{\sigma_1\triangleright\sigma_2}^{k,l} F_{\sigma_1}^{h,g} =
F_{\sigma_1\triangleright\sigma_2}^{kh\inver g \inver h g,l}
F_{\sigma_2}^{\inver l hl, \inver l gl}.
\end{equation}
If $(\rho_1,\rho_2)_\oslash$ then
\begin{equation}\label{conmutacion_overlap}
[F_{\rho_1}^{h,g}, F_{\rho_2}^{k,l}]=0.
\end{equation}

\begin{prop}
Let $\rho$ be a ribbon, $h,g\in G$.

\noindent{(i)} If $\rho$ is dual
\begin{equation}\label{valor_traza_dual}
 \traza(L_\rho^{h})= \delta_{h,1}\traza(1).
\end{equation}

\noindent{(ii)} If $\rho$ is direct
\begin{equation}\label{valor_traza_direct}
 |G|\traza(T_\rho^{g})= \traza(1).
\end{equation}

\noindent{(iii)} If $\rho$ is proper
\begin{equation}\label{valor_traza}
 |G|\traza(F_\rho^{h,g})= \delta_{h,1}\traza(1).
\end{equation}
\end{prop}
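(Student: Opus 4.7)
The plan is to establish (i) and (ii) by direct computation for pure ribbons and then deduce (iii) by induction on the length of $\rho$, in every case exploiting the fact that non-overlapping triangles of a ribbon act on disjoint qudits, so that the trace identity $\traza(AB)=\traza(A)\traza(B)/\traza(1)$ for operators of disjoint support applies.

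For (i), I would unfold the recursion \eqref{definicion_ribb_recursiva} for a dual ribbon $\rho=\tau_1\cdots\tau_n$. The initial condition $F_\tau^{h,k}=\delta_{k,1}L_\tau^h$ from \eqref{definicion_ribb_triangulos} forces every intermediate summation index to equal $1$, yielding $F_\rho^{h,g}=\delta_{g,1}\,L_{\tau_1}^h\cdots L_{\tau_n}^h$. Definition \ref{definicion_ribbon} ensures that distinct dual triangles of $\rho$ have pairwise distinct edges $e_{\tau_i}$, so the $L_{\tau_i}^h$ act on disjoint qudits. Combined with the elementary calculation $\traza(L^h)=|G|\,\delta_{h,1}$ on a single qudit (immediate from \eqref{definicion_L_T_I}), repeated application of the disjoint-support trace identity gives (i).

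For (ii), the analogous unfolding for a direct ribbon and a change of summation variables yield the telescoped expression $T_\rho^g=\sum_{m_1\cdots m_n=g}T_{\tau_1}^{m_1}\cdots T_{\tau_n}^{m_n}$, where the constraint $m_1m_2\cdots m_n=g$ results from iterating \eqref{definicion_ribb_recursiva}. Each summand acts on disjoint qudits, with $\traza(T_\tau^m)=1\cdot|G|^{|E|-1}$, and the number of tuples satisfying the constraint is $|G|^{n-1}$. The disjoint-support identity then delivers $\traza(T_\rho^g)=\traza(1)/|G|$, which is (ii).

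For (iii) I would induct on the length $n\geq 2$ of a proper ribbon $\rho$, splitting $\rho=\rho_1\rho_2$ with $\rho_1$ a single triangle; the edge $e_{\rho_1}$ does not appear in $\rho_2$ by Definition \ref{definicion_ribbon}, so the disjoint-support framework applies. If $\rho_1$ is dual, \eqref{definicion_ribb_triangulos} collapses the $k$-sum in \eqref{definicion_ribb_recursiva} to $F_\rho^{h,g}=L_{\rho_1}^h F_{\rho_2}^{h,g}$; if $\rho_1$ is direct, the sum $F_\rho^{h,g}=\sum_k T_{\rho_1}^k F_{\rho_2}^{\inver khk,\inver kg}$ survives. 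In both cases the trace factorises into the single-triangle trace and $\traza(F_{\rho_2}^{\cdots})$, where the inner trace is supplied by the inductive hypothesis when $\rho_2$ is proper, by (i) when $\rho_2$ is pure dual, and by (ii) when $\rho_2$ is pure direct. All four combinations produce exactly $\delta_{h,1}\traza(1)/|G|$: the Kronecker delta $\delta_{\inver khk,1}=\delta_{h,1}$ is $k$-independent, so in the direct-$\rho_1$ case the $k$-sum contributes a factor $|G|$ that precisely cancels the $1/|G|$ from $\traza(T_{\rho_1}^k)$.

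I do not expect any genuine obstacle; the only care needed is bookkeeping the four subcases of the inductive step and verifying that the disjoint-support hypothesis really holds, which reduces to the conditions in Definition \ref{definicion_ribbon} together with the fact that a ribbon never contains a triangle and its inverse.
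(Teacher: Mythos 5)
Parts (i) and (ii) of your argument are sound, and your direct unfolding of the recursion \eqref{definicion_ribb_recursiva} is a legitimately different route from the paper's (which never decomposes the ribbon at all: it conjugates $F_\rho^{h,g}$ by operators on an auxiliary ribbon forming a left joint with $\rho$, deduces from \eqref{conmutacion_left_joint} that $\traza(F_\rho^{h,g})=\traza(F_\rho^{kh\inver k,kg})$ and that $\traza(F_\rho^{h,g})=\delta_{h,1}\traza(F_\rho^{h,g})$, and then sums over $g$ using \eqref{unidades}). For purely dual or purely direct ribbons your disjoint-support premise is indeed guaranteed, because two triangles of the \emph{same} type with the same underlying edge are necessarily complementary, and Definition \ref{definicion_ribbon} forbids that.

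The gap is in (iii): the claim that ``the edge $e_{\rho_1}$ does not appear in $\rho_2$ by Definition \ref{definicion_ribbon}'' is false. A direct triangle and a dual triangle never are complements of one another, and Definition \ref{definicion_ribbon} imposes no condition linking the direct path to the dual path, so nothing prevents a direct and a dual triangle of the \emph{same} ribbon from acting on the same qudit. They do so whenever they overlap at a repeated end site, and this actually happens: the two triangles pointing out of a given site $s=(v,f)$ (one direct, one dual) both involve the same edge of $f$ incident to $v$ --- this shared qudit is precisely what makes the left-joint relation \eqref{conmutacion_left_joint} nontrivial. Concretely, $\sigma=\alpha_s\beta_s$ satisfies every clause of Definition \ref{definicion_ribbon} (its direct path is the boundary of $f$, its dual path the loop of faces around $v$, and no complementary pair occurs), it is proper and closed, and each of the two edges of $f$ meeting $v$ carries both a dual triangle of $\alpha_s$ and a direct triangle of $\beta_s$; similarly $\tau\,\alpha_{\partial_1\tau}$ for a direct triangle $\tau$ is a proper ribbon whose first and last triangles share an edge. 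For such ribbons your factorisation $\traza(AB)=\traza(A)\traza(B)/\traza(1)$ is not justified by support disjointness, so the induction breaks exactly on the class of ribbons (proper closed ones) to which the proposition is later applied, e.g.\ in \eqref{valor_traza_closed}. The step can in fact be repaired --- on a single qudit one checks from \eqref{definicion_L_T_I} that $\traza(L^hT^g)=\delta_{h,1}=\traza(L^h)\traza(T^g)/|G|$, so the product of one $L$-type and one $T$-type operator on a shared edge still traces as if independent --- but that requires identifying the overlapping configurations and proving this identity, neither of which your proposal does; as written, the ``only care needed'' is precisely the point that fails.
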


\begin{proof}
If $\rho$ is not dual, choose any direct triangle $\tau$ in $\rho$,
so that $\rho = \rho_1\tau\rho_2$. Let $\rho\prima =
\rho_1\tau\prima$ with $\tau\prima$ dual. Then $(\rho\prima,
\rho)_\prec$ and thus using \eqref{conmutacion_left_joint}
\begin{multline}\label{traza_intermedia1}
\traza(F_\rho^{h,g}) = \traza(L_{\rho\prima}^k F_\rho^{h,g}
L_{\rho\prima}^{\inver k}) = \\ \traza(F_\rho^{kh\inver k ,kg}
L_{\rho\prima}^k L_{\rho\prima}^{\inver k }) =
\traza(F_\rho^{kh\inver k ,kg}).
\end{multline}
If $\rho$ is not direct, choose any dual triangle $\tau$ in $\rho$,
so that $\rho = \rho_1\tau\rho_2$. Let $\rho\prima =
\rho_1\tau\prima$ with $\tau\prima$ direct. Then $(\rho,
\rho\prima)_\prec$ and thus using \eqref{conmutacion_left_joint}
\begin{multline}\label{traza_intermedia2}
\traza(F_\rho^{h,g}) = \sum_{k\in G}\traza(T_{\rho\prima}^k
F_\rho^{h,g} T_{\rho\prima}^k) = \\
\sum_{k\in G} \traza(F_\rho^{h,g} T_{\rho\prima}^{hk}
T_{\rho\prima}^k) = \delta_{h,1}\traza(F_\rho^{h,g}).
\end{multline}
Equations (\ref{traza_intermedia1}, \ref{traza_intermedia2})
together with (\ref{unidades}, \ref{simplicidad_directo_dual}) give
the desired results.
\end{proof}

As a consequence of the previous proposition and
\eqref{producto_ribbons} we have the following orthogonality
results.

\begin{cor}\label{cor_ortogonalidad_ribbon_operators}
Let $\rho$ be a ribbon, $h,g\in G$.

\noindent{(i)} If $\rho$ is dual
\begin{equation}
\traza({L^{h}_\rho}^\dagger
L^{h\prima}_\rho)=\delta_{h,h\prima}\traza(1).
\end{equation}

\noindent{(ii)} If $\rho$ is direct
\begin{equation}
|G|\traza({T^{g}_\rho}^\dagger
T^{g\prima}_\rho)=\delta_{g,g\prima}\traza(1).
\end{equation}

\noindent{(iii)} If $\rho$ is proper
\begin{equation}
|G|\traza({F^{h,g}_\rho}^\dagger
F^{h\prima,g\prima}_\rho)=\delta_{h,h\prima}\delta_{g,g\prima}\traza(1).
\end{equation}
\end{cor}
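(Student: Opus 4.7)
The plan is very short because the three identities will follow almost mechanically from combining the multiplication and adjoint rules \eqref{producto_ribbons} with the trace formulas \eqref{valor_traza_dual}--\eqref{valor_traza} of the preceding proposition. The overall strategy in each case is the same: rewrite the product ${O}^\dagger O'$ as a single ribbon operator using the algebra relations, and then read off its trace.

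First I would handle part (iii), which is the generic (proper) case. Using ${F_\rho^{h,g}}^\dagger = F_\rho^{\inver h,g}$ together with $F_\rho^{h,g}F_\rho^{h',g'}=\delta_{g,g'}F_\rho^{hh',g}$, the product ${F_\rho^{h,g}}^\dagger F_\rho^{h',g'}$ collapses to $\delta_{g,g'}F_\rho^{\inver h h',g}$. Taking the trace and invoking \eqref{valor_traza} yields $|G|\traza(F_\rho^{\inver h h',g})=\delta_{\inver h h',1}\traza(1)=\delta_{h,h'}\traza(1)$, which combined with the $\delta_{g,g'}$ factor gives exactly the claimed identity.

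The same scheme works for parts (i) and (ii). For (i), on a dual ribbon I have ${L_\rho^h}^\dagger L_\rho^{h'}=L_\rho^{\inver h h'}$ directly from \eqref{propiedades_algebras_L_T} (equivalently from the $g=1$ specialization of \eqref{producto_ribbons} together with \eqref{simplicidad_directo_dual}), and then \eqref{valor_traza_dual} converts the trace into $\delta_{\inver h h',1}\traza(1)=\delta_{h,h'}\traza(1)$. For (ii), on a direct ribbon the $T_\rho^g$ are Hermitian projectors with $T_\rho^g T_\rho^{g'}=\delta_{g,g'}T_\rho^g$, so ${T_\rho^g}^\dagger T_\rho^{g'}=\delta_{g,g'}T_\rho^g$, and \eqref{valor_traza_direct} finishes the job.

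Since all three identities reduce to applying two lines of algebra already established, I do not anticipate any genuine obstacle. The only point deserving slight care is keeping the adjoint conventions of \eqref{producto_ribbons} straight (the sign of $h$ flips but $g$ does not), which is what is responsible for the $\delta_{h,h'}$ (rather than $\delta_{h,\inver{h'}}$) appearing in the final statement.
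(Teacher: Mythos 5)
Your argument is correct and is precisely the route the paper intends: the corollary is stated there as an immediate consequence of the preceding trace proposition together with the product and adjoint rules \eqref{producto_ribbons}, which is exactly the two-line reduction you carry out. No gaps.
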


\begin{defn}
{\bf Rotationally invariant ribbon operators} Given a closed ribbon
$\sigma$, we say that an operator $F=\sum_{h,g\in G} c_{h,g}
F_\sigma^{h,g}$, $c_{h,g}\in \C$, is rotationally invariant if for
any $\sigma\prima$ with $(\sigma, \sigma\prima)_\circ$ we have
$F=\sum_{h,g\in G} c_{h,g} F_{\sigma\prima}^{h,g}$.
\end{defn}

In this regard, results (\ref{producto_ribbons}, \ref{unidades},
\ref{rotacion}) imply that if $(\sigma_1, \sigma_2)_\circ$ then
\begin{equation}\label{efecto_rotacion}
F_{\sigma_1}^{h,g} = \sum_{l\in G}
F_{\sigma_1\triangleright\sigma_2}^{h\inver g \inver h g,l}
F_{\sigma_2}^{\inver l hl, \inver l gl}.
\end{equation}

\subsection{Vertex and face operators}\label{apendice_vertex_face}

We now define vertex and face operators in terms of ribbon
operators. Let $\alpha_s$ ($\beta_s$) be the unique dual (direct)
closed ribbon with $\partial\alpha_s=s$ ($\partial\beta_s=s$). For
any site $s=(v,f)$ let
\begin{equation}
A_s^h := F_{\alpha_s}^{h, 1}\qquad B_s^g := F_{\beta_s}^{1, \inver
g}.
\end{equation}
Let $s\prima, s\primas$ be sites with $(\alpha_s,
\alpha_{s\prima})_\circ$, $(\beta_s, \beta_{s\primas})_\circ$. From
(\ref{simplicidad_directo_dual}, \ref{efecto_rotacion}) we get
\begin{equation}
A_{s}^h = A_{s\prima}^h, \qquad B_{s}^k = T_{\beta_{s,s\primas}}^{g}
B_{s\primas}^{\inver g kg}.
\end{equation}
Thus, vertex operators are rotationally invariant and we can write
$A_v^h:=A_s^h$ for $v=v_s$. These definitions of $A_v^h$ and $B_s^g$ agree with those given in (\ref{definicion_vertex_op_g},\ref{definicion_face_op_g}).

Let us list several useful properties. If $s\neq s\prima$ then
\begin{equation}\label{propiedades_A}
A_s^{h} A_s^{h\prima}= A_s^{hh\prima}, \quad A_s^1 = 1, \quad
{A_s^h}^\dagger = A_s^{\inver h },
\end{equation}
\begin{equation}\label{propiedades_B}
 B_s^{g}
B_s^{g\prima}=\delta_{g,g\prima} B_s^{g}, \quad \sum_{g\in G}
B_s^g=1, \quad {B_s^g}^\dagger = B_s^g.
\end{equation}
\begin{equation}\label{conmutacion AB}
A_s^{h} B_s^g = B_s^{hg\inver h } A_s^h,
\end{equation}
\begin{equation}\label{conmutacion_AB_resto}
[A_s^h,B_{s\prima}^g]=[A_s^g,A_{s\prima}^{g\prima}]=[B_s^h,B_{s\prima}^{h\prima}]=0.
\end{equation}
All these properties follow from the properties of ribbon operators.
Note in particular that the well-known\cite{Bais_80} flux metamorphosis \eqref{conmutacion AB} is a consequence of $(\alpha_s, \beta_s)_{\prec\succ}$.

For subgroups $H,H\prima\subset G$, $H\prima$ normal, we define the
operators
\begin{equation}\label{definicion_AM_BN}
A_v^H:=A_s^H:=L_{\alpha_s}^H,\qquad
B_f^{H\prima}:=B_s^{H\prima}:=T_{\beta_s}^{H\prima},
\end{equation}
where $s=(v,f)$ is a site. We set $A_v:=A_v^G$ and $B_f:=B_f^1$.
From (\ref{conmutacion AB},\ref{conmutacion_AB_resto}) we have for
arbitrary vertices $v,v\prima$ and faces $f,f\prima$
\begin{equation}\label{conmutacion_AM_BN}
[A_v^H,A_{v\prima}^H]=[A_v^H,B_{f}^{H\prima}]=[B_f^{H\prima},B_{f\prima}^{H\prima}]=0.
\end{equation}

Of particular interest are the commutation rules between ribbon
operators and vertex and face operators at their ends. We first
consider non-closed ribbons. Let $s_i = \partial_i \rho$. If
$v_0\neq v_1$ then $(\alpha_{s_0}, \rho)_\prec$, $(\alpha_{s_1},
\rho)_\succ$ and from (\ref{conmutacion_left_joint}, \ref{conmutacion_right_joint}) we get
\begin{align}\label{conmutacion_extremos_A}
A_{s_0}^k F^{h,g}_\rho &= F^{kh\inver k ,kg}_\rho A_{s_0}^k, \nonumber\\
A_{s_1}^k F^{h,g}_\rho &= F^{h,g\inver k }_\rho A_{s_1}^k.
\end{align}
If $f_0\neq f_1$ then  $(\rho, \beta_{s_0})_\prec$, $(\rho,
\beta_{s_1})_\succ$ and from (\ref{conmutacion_left_joint}, \ref{conmutacion_right_joint}) we get
\begin{align}\label{conmutacion_extremos_B}
B_{s_0}^k F^{h,g}_\rho &= F^{h,g}_\rho B_{s_0}^{kh}, \nonumber\\
B_{s_1}^k F^{h,g}_\rho &= F^{h,g}_\rho B_{s_1}^{\inver g \inver h g
k}.
\end{align}
If $\rho$ is dual but not closed then $\alpha_{s_0} =
\alpha_{s_0,s_1} \alpha_{s_1,s_0}$ and from (\ref{definicion_ribb_recursiva}, \ref{producto_ribbons}, \ref{conmutacion_overlap}) we get
\begin{equation}\label{conmutacion_A_dual}
A_{s_i}^k F^{h,1}_\rho = F^{kh\inver k ,1}_\rho A_{s_i}^k,
\end{equation}
and if $\rho$ is direct but not closed then $\beta_{s_0} =
\beta_{s_0,s_1} \beta_{s_1,s_0}$ and from (\ref{definicion_ribb_recursiva}, \ref{producto_ribbons}, \ref{conmutacion_overlap}) we get
\begin{equation}\label{conmutacion_B_directo}
[B_{s_i}^k,F^{1,g}_\rho] = 0.
\end{equation}

Now we consider closed ribbons. So let $\sigma$ be a closed ribbon
with $s = \partial \sigma$. If $\sigma$ is a proper closed ribbon
then $(\alpha_s, \sigma)_{\prec\succ}$ and $(\sigma,
\beta_s)_{\prec\succ}$ so that from (\ref{conmutacion left-right joint}) we get
\begin{align}\label{conmutacion_extremos_closed}
A_s^k F^{h,g}_\sigma &= F^{kh\inver k ,kg\inver k }_\sigma A_s^k, \nonumber\\
B_s^k F^{h,g}_\sigma &= F^{h,g}_\sigma B_s^{\inver g \inver h g k
h}.
\end{align}
If $\sigma$ is closed but not proper, then either $\sigma=\alpha_s$
or $\sigma=\beta_s$, but for that case we already have
\eqref{conmutacion AB}.

Equations (\ref{transporte_edge_directo},
\ref{transporte_edge_dual}) can be generalized. Let $\rho$ be a
ribbon with two ends $i=0,1$ and set $s_i=\partial_i\rho$,
$v_i=v_{s_i}$, $f_i=f_{s_i}$. If $v_0\neq v_1$ from \eqref{conmutacion_extremos_A} we get
\begin{equation}
|G|\sum_{g\in G} T_\rho^g A_{v_i} T_\rho^g = 1
\label{transporte_electrico}
\end{equation}
and if $f_0\neq f_1$ from \eqref{conmutacion_extremos_B} we get
\begin{equation}
\sum_{h\in G} L_\rho^{\inver h } B_{f_i} L_\rho^h = 1.
\label{transporte_magnetico}
\end{equation}
As explained in the main text, section \ref{seccion_ribbons}, these identities show how ribbon
operators can be used to obtain arbitrary states with a number of
excited spots from states with one excited spot less.

\subsection{Edge operators}\label{apendice_edge_operators}

For subgroups $H\subset H\prima\subset G$, $H$ normal in $G$, we
define the operators
\begin{equation}\label{definicion_LN_TM}
L_e^H:=L_\tau^H,\qquad
T_{e\prima}^{H\prima}:=T_{\tau\prima}^{H\prima},
\end{equation}
where $e=e_\tau$, $e\prima=e_{\tau\prima}$ with $\tau$ a dual
triangle and $\tau\prima$ a direct one. Then from
(\ref{conmutacion_triangulos},\ref{conmutacion_triangulos_mismo_tipo})
we have for arbitrary edges $e,e\prima$
\begin{equation}\label{conmutacion_LN_TM}
[L_e^H,L_{e\prima}^H]=[L_e^H,T_{e\prima}^{H\prima}]=[T_e^{H\prima},T_{e\prima}^{H\prima}]=0.
\end{equation}

In some particular cases, triangle operators and ribbons have nice
commuting properties, but this is not always the case. If
$H,H\prima,\tau,\tau\prima$ are as above, with $\tau$ and
$\tau\prima$ either in $\rho$ or with no overlap with it then from (\ref{definicion_ribb_recursiva}, \ref{producto_ribbons}, \ref{conmutacion_overlap}) we have
\begin{equation}\label{conmutacion_Ln_Tm_F_inside}
[L_\tau^H,F_\rho^{h,g}]=[T_{\tau\prima}^{H\prima},F_\rho^{h,g}]=0.
\end{equation}
Other triangles are more complicated. Let $\rho=\rho_1\rho_2$,
$(\tau_1,\rho_1)_\succ$, $(\tau_2,\rho_2)_\prec$,
$(\rho_1,\tau_3)_\succ$, $(\rho_2,\tau_4)_\prec$, $h\in H$ and
$k,l\in G$ with $ks\inver k=s$ for any $s\in H$. Then from (\ref{definicion_ribb_recursiva}, \ref{conmutacion_left_joint}, \ref{conmutacion_right_joint}) we have
\begin{align}\label{conmutacion_Ln_Tm_F_overlap}
L_{\tau_1}^h F_\rho^{k,l} &= \sum_{g\in G} T_{\rho_1}^g F_\rho^{k, g
h\inver g l} L_{\tau_1}^h,\nonumber\\
L_{\tau_2}^h F_\rho^{k,l} &= \sum_{g\in G} T_{\rho_1}^g F_\rho^{k, g
\inver h \inver g l} L_{\tau_2}^h, \nonumber\\
T_{\tau_3}^h F_\rho^{k,l} &= \sum_{g\in G} T_{\rho_1}^g F_\rho^{k,
l} T_{\tau_3}^{h\inver g k g}, \nonumber\\
T_{\tau_4}^h F_\rho^{k,l} &= \sum_{g\in G} T_{\rho_1}^g F_\rho^{k,
l} T_{\tau_4}^{\inver g \inver k g h}.
\end{align}
As a result, if $N\subset M\subset G$ with $N,M$ normal and $N$
central in $M$ we have for any $h,g\in G$, $\tau=\tau_1$ (or
$\tau_2$) and $\tau\prima=\tau_3$ (or $\tau_4$) with $\tau_i$ as
above
\begin{equation}\label{conmutacion_Ln_Tm_F_proyeccion}
L_\tau^N T_{\tau\prima}^M F_\rho^{h,g} L_\tau^N T_{\tau\prima}^M =
\delta_{hM,M} \frac 1 {|N|} F^{h,Ng} L_\tau^N T_{\tau\prima}^M,
\end{equation}
which then gives \eqref{string_tension}.

\subsection{The algebra $\Alg_\rho$}\label{apendice_ribb_rho}

We want to define the ribbon operator algebra $\Ribb_\rho$, but as
an intermediate step we introduce the algebra $\Alg_\rho$. If
$\epsilon$ is a trivial ribbon then $\Alg_\epsilon := \C$. If $\tau$
is a direct (dual) triangle then $\Alg_\tau := \lin
\set{T^g_\tau}{g\in G}$ ($\Alg_\tau := \lin \set{L^h_\tau}{h\in
G}$). Finally, if $\rho=(\tau_i)$ is an arbitrary ribbon $\Alg_\rho
:= \bigotimes_i \Alg_{\tau_i}$. That $\Alg_{\tau}$ is really an algebra follows from \eqref{propiedades_algebras_L_T}.

We now proceed to show several results which are essential in order
to characterize ribbon operator algebras in the next sections.

\begin{lem}\label{lema_cancelacion_ABLT}
Let $\rho$ be a ribbon, $O\in\Alg_\rho$ an operator, $H$ a subgroup
of $G$, and $s$ a site

\noindent(i) If $\rho$ is not a rotation or a complement of $\alpha_s$ then
\begin{equation}\label{cancelacion_Av}
OA_s^H=0\quad\Longrightarrow\quad O=0.
\end{equation}

\noindent(ii) If $\rho$ is not a rotation or a complement of $\beta_s$ then
\begin{equation}\label{cancelacion_Bf}
OB_s^H=0\quad\Longrightarrow\quad O=0.
\end{equation}

\noindent(iii) If $\tau$ is a dual triangle such that neither it nor its complement belong to $\rho$
\begin{equation}\label{cancelacion_Le}
OL_\tau^H=0\quad\Longrightarrow\quad O=0.
\end{equation}

\noindent(iv) If $\tau$ is a direct triangle such that neither it nor its complement belong to $\rho$
\begin{equation}\label{cancelacion_Te}
OT_\tau^H=0\quad\Longrightarrow\quad O=0.
\end{equation}

\end{lem}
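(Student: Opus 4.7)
\begin{sketchproof}
The plan is to treat the four parts separately, reducing each to the claim that $O\in\Alg_\rho$ must vanish after pairing with the relevant right multiplier on its edge(s) of support. The common ingredient is the tensor factorization $\Alg_\rho=\bigotimes_\tau\Alg_\tau$ together with the elementary fact that both $L^H:=\frac 1{|H|}\sum_{h\in H}L^h$ and $T^H:=\sum_{h\in H}T^h$ are nonzero single-qudit projectors.

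Parts \textit{(iii)} and \textit{(iv)} rest on the geometric observation that, by the clockwise (resp. counterclockwise) condition in the definition of dual (resp. direct) triangles, each primal edge supports exactly one complementary pair of dual triangles and exactly one complementary pair of direct triangles. Hence the hypothesis $\tau,\comp\tau\nin\rho$ guarantees that no triangle of $\rho$ of the same type as $\tau$ uses $e_\tau$ as its primal edge. If in addition no triangle of the opposite type in $\rho$ uses $e_\tau$ either, then $O=\tilde O\otimes 1_{e_\tau}$ and the product $OL_\tau^H=\tilde O\otimes L^H$ (resp. $OT_\tau^H=\tilde O\otimes T^H$) can vanish only if $\tilde O=0$. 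Otherwise $\Alg_\rho$ restricted to $e_\tau$ is the diagonal algebra $\lin\{T^g\}_g$ (case \textit{(iii)}) or the group algebra $\lin\{L^h\}_h$ (case \textit{(iv)}); writing $O=\sum_g D_g\otimes T^g_{e_\tau}$ or $O=\sum_h D_h\otimes L^h_{e_\tau}$, an inspection of matrix elements shows that the single-qudit operators $T^gL^H$ (resp. $L^hT^H$) are linearly independent across $g\in G$ (resp. $h\in G$), since for example $T^g L^H$ has image spanned by the basis vector $\ket g$ and $L^h T^H\ket 1=\ket h$. The vanishing of the product therefore forces every $D_g$ (or $D_h$) to vanish, hence $O=0$.

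For parts \textit{(i)} and \textit{(ii)}, the right multiplier $A_s^H=\frac 1{|H|}\sum_{h\in H}A_s^h$ (resp. $B_s^H$) is a nonzero orthogonal projector, so $OA_s^H=0$ is equivalent to $O$ annihilating the range $V_H$. The plan is to separate $\Alg_\rho$ with a witness family inside $V_H$ built from a ``free'' edge $e^\star$ incident to $v_s$ (resp. bordering $f_s$) on which no triangle of $\rho$ acts. Choosing $\ket{c_0}=\ket 1$ on such free edges and writing
\begin{equation}
A_s^H\bigl(\ket a\otimes\ket{c_0}\bigr)=\frac 1{|H|}\sum_{h\in H}\bigl(A_s^h\ket a\bigr)\otimes\ket{c_h},
\end{equation}
with $\ket{c_h}:=A_s^h\ket{c_0}$, the states $\{\ket{c_h}\}_{h\in H}$ are mutually orthogonal because the qudit on $e^\star$ records the value of $h$ under left or right multiplication. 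Since $O$ acts as the identity on $e^\star$, the condition $OA_s^H(\ket a\otimes\ket{c_0})=0$ splits along the orthogonal $\ket{c_h}$ directions; the $h=1$ component yields $O\ket a=0$ for arbitrary $\ket a$, hence $O=0$.

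The main obstacle is the geometric claim underlying the witness construction in (i) and (ii): that the hypothesis on $\rho$ rules out $\rho$ using every edge at $v_s$ (resp. on the border of $f_s$) as a primal edge of one of its triangles. This will be reduced to showing that a ribbon saturating all such edges would have to contain a full cycle of dual triangles around $v_s$, whereupon the non-self-overlap conditions of Definition~\ref{definicion_ribbon} force $\rho$ to be a rotation of $\alpha_s$ or its dual complement; the direct-triangle contributions at $v_s$ are bounded because any vertex appears at most once as an interior vertex of $p_\rho$. Part \textit{(ii)} follows from \textit{(i)} by exchanging vertices with faces, direct with dual triangles, and $A_v^h$ with $B_f^g$.
\end{sketchproof}
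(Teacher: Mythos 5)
Your parts (iii) and (iv) are correct and in fact more explicit than the paper's, which simply declares them analogous to (i) and (ii): since a ribbon contains at most one triangle acting on any given qudit, the restriction of $\Alg_\rho$ to the qudit at $e_\tau$ is $\C$, $\lin\sset{T^g}$ or $\lin\sset{L^h}$, and your linear-independence check of $\sset{T^gL^H}_g$ and $\sset{L^hT^H}_h$ finishes the job. The gap is in (i) and (ii). The geometric claim your witness construction needs --- an edge incident to $v_s$ on which \emph{no} triangle of $\rho$ acts --- is false. With $d=\deg v_s$, let $\rho$ consist of $d-1$ consecutive dual triangles of (a rotation of) $\alpha_s$ followed or preceded by the direct triangle over the one remaining edge at $v_s$: this satisfies every condition of Definition~\ref{definicion_ribbon} and the non-overlap property of ribbons, it is neither a rotation nor a complement of $\alpha_s$, and yet every edge at $v_s$ is the primal edge of one of its triangles. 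Your proposed reduction breaks exactly here: bounding the direct-triangle contribution by the single passage of $p_\rho$ through $v_s$ still leaves one or two edges at $v_s$ available to direct triangles, so saturation does not force a full dual cycle around $v_s$.

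The construction is salvageable, but only after weakening what you demand of $e^\star$: it suffices that no \emph{dual} triangle of $\rho$ act on the qudit at $e^\star$, because a direct triangle of $\rho$ there contributes only diagonal operators $T^g_{e^\star}$, which still preserve the orthogonal decomposition along the states $\ket{c_h}=A_s^h\ket{k}$; running your argument over all initial basis states $\ket{k}$ of that qudit (not just $\ket 1$) then annihilates every diagonal block of $O$ and gives $O=0$. This weaker existence statement --- an edge at $v_s$ carrying no dual triangle of $\rho$, equivalently a direct triangle $\tau$ overlapping $\alpha_s$ but not $\rho$ --- is precisely what the paper's proof invokes (also without proof, it must be said), and once it is in hand your witness states and the paper's manipulation $0=\sum_{g} T_\tau^g \spc O\spc A_{s}^G\spc T_\tau^g \propto O$ via \eqref{conmutacion_extremos_A} are the same computation. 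Part (ii) needs the mirror-image repair on the boundary of $f_s$.
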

\begin{proof}
First, note that $A_s^H A_s^G = A_s^G$ implies $OA_s^H=0 \Rightarrow
OA_s^G=0$, and similarly for $L_\tau^H$, so that it suffices to
consider $H=G$ in these cases. Also, $B_s^H B_s^1 = B_s^1$ and
similarly for $T_\tau^H$, so that it suffices to consider $H=1$ for
them.

\noindent(i) There exists a direct triangle $\tau$ such that it
overlaps with $\alpha_s$ but not with $\rho$, so that
$[T_\tau^h,O]=0$, and a site $s\prima$ such that
$(\alpha_{s\prima},\alpha_s)_\circ$ and $(\tau,
\alpha_{s\prima})_\prec$ or $(\tau,\alpha_{s\prima})_\succ$. In the
first case $A_s^G = A_{s\prima}^G$ and from \eqref{conmutacion_extremos_A} we have $0=\sum_{g} T_\tau^g O A_{s\prima}^G T_\tau^g =
|G|\inv\sum_{g,h} O A_{s\prima}^h T_\tau^{\inver h g} T_\tau^g = O A_{s\prima}^1
T_\tau^G = O$ and the other case is similar.

\noindent(ii) There exists a dual triangle $\tau$ such that it
overlaps with $\beta_s$ but not with $\rho$, so that
$[L_\tau^h,O]=0$, and a site $s\prima$ such that
$(\beta_{s\prima},\beta_s)_\circ$ and $(\beta_{s\prima},
\tau)_\prec$ or  $(\beta_{s\prima}, \tau)_\succ$. In the first case $B_s^1 = B_{s\prima}^1$
from \eqref{conmutacion_extremos_B} we have $0=\sum_h L_\tau^h O B_{s\prima} L_\tau^{\inver h} = \sum_h O B_{s\prima}^{\inver h} L_\tau^h L_\tau^{\inver h} = O B_{s\prima}^G L_\tau^{1} = O$
and the other case is similar.

\noindent(iii, iv) The proofs are analogous to (i,ii).
\end{proof}

Given a vertex $v$ and a dual triangle $\tau$ we set for any $O \in \Alg_\rho$
\begin{equation}
O_k := A_v^k O A_v^{\bar k},\qquad O_k\prima := L_\tau^k O L_\tau^{\bar k}.
\end{equation}
One can check that $O_k,O_k\prima \in \Alg_\rho$.
Then if $\rho$ satisfies the conditions of lemma
\ref{lema_cancelacion_ABLT}
\begin{align}\label{conmutacion_con_Av}
[O,A_v^H]=0\quad\iff\quad |H| O=\sum_{k\in H} O_k,\\
\label{conmutacion_con_Le} [O,L_\tau^H]=0\quad\iff\quad
|H| O=\sum_{k\in H} O\prima_k,
\end{align}
because for $k\in H$ we have $O_k A_v^H = O_k A^k A_v^H = A^k_v O
A_v^H =A^k A_v^H O = A_v^H O = O A_v$ giving $O_k=O$, and similarly
for $L_e^H$. As a consequence, we also get under the same conditions
and $k\in H$
\begin{equation}\label{conmutacion_AvH_Avk}
[O,A_v^H]=0\quad\Longrightarrow\quad [O,A_v^k]=0.
\end{equation}

Given a site $s$ in $\rho$ and a direct triangle $\tau$ , one can check that
unless $\rho$ is both a non-closed and non-direct ribbon with $f_{\partial_i\rho}=f_s$, for any $O\in\Alg_\rho$ there exist $O_{k,k\prima},O_{k,k\prima}\prima \in \Alg_\rho$ such that $O = \sum_{k,k\prima\in G} O_{k,k\prima}=\sum_{k,k\prima\in G} O\prima_{k,k\prima}$ with
\begin{equation}
\quad B_s^g O_{k,k\prima} = O_{k,k\prima}B_s^{\inver k g k\prima}, \quad
T_\tau^g O_{k,k\prima} = O_{k,k\prima}T_\tau^{\inver k g
k\prima}.
\end{equation}
Then if $\rho$ satisfies the conditions of lemma \ref{lema_cancelacion_ABLT} and $H$ is normal in $G$
\begin{align}\label{conmutacion_con_Bf}
[O,B_f^H]=0\quad\iff\quad O= \sum_{k,k\prima\in G \spc|\spc \inver kk\prima\in H} O_{k,k\prima},\\
\label{conmutacion_con_Te} [O,T_\tau^H]=0\quad\iff\quad
O= \sum_{k,k\prima\in G \spc|\spc \inver kk\prima\in H}
O\prima_{k,k\prima}
\end{align}
because $O B_f^H = O B_f^H B_f^H = B_f^H O B_f^H = \sum_{k,k\prima}
B_s^H O^{k,k\prima} B_s^H = \sum_{k,k\prima} O^{k,k\prima}
B_s^{\inver k H k\prima} B_s^H = \sum_{\inver kk\prima\in H}
O^{k,k\prima} B_f^H$ and similarly for $T_\tau^H$. As a consequence,
we also get under the same conditions and $C\in \Conj {G/H}$
\begin{equation}\label{conmutacion_BfH_BfC}
[O,B_f^H]=0\quad\Longrightarrow\quad [O,B_f^{C}]=0,
\end{equation}
where $B_f^C=\sum_{c\in C}\sum_{h\in H} B_f^{ch}$.

\subsection{The algebra $\Ribb_\rho$}\label{apendice_ribb_rho}

In this section we characterize the ribbon operator algebra that has
been introduced so far.

\begin{defn}
Let $\rho$ be a ribbon. The ribbon operator algebra
$\Ribb_\rho\subset\Alg_\rho$ consists of those operators
$F\in\Alg_\rho$ such that $[F,A_v]=[F,B_f]=0$ for any vertex $v\neq
v_{\partial_i \rho}$ and any face $f\neq f_{\partial_i \rho}$,
i=0,1.
\end{defn}

\begin{prop}\label{teorema_generadores_ribb}
Let $\rho$ be a ribbon. The $|G|^2$ ribbon operators $F_\rho^{h,g}$,
$h,g\in G$, linearly generate $\Ribb_\rho$. Moreover, $\rho$ is
proper if and only if they form a basis of $\Ribb_\rho$.
\end{prop}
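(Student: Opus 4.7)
The plan is to establish three claims in turn: membership $F_\rho^{h,g}\in\Ribb_\rho$; spanning; and linear independence precisely when $\rho$ is proper.

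For membership, I would induct on the length of $\rho$. The triangle case follows from \eqref{definicion_ribb_triangulos} together with the commutation identities \eqref{conmutacion_triangulos}--\eqref{conmutacion_triangulos_mismo_tipo}: $F_\tau^{h,g}$ acts only on the qudit $e_\tau$, which interacts only with vertex and face operators at $\tau$'s endpoints. For longer $\rho$, any non-endpoint vertex $v$ or face $f$ of $\rho$ is a non-endpoint of one of the pieces $\rho_1,\rho_2$ of an appropriate decomposition $\rho=\rho_1\rho_2$, so the gluing formula \eqref{definicion_ribb_recursiva}, the inductive hypothesis, and the endpoint rules \eqref{conmutacion_extremos_A}, \eqref{conmutacion_extremos_B}, \eqref{conmutacion_A_dual}, \eqref{conmutacion_B_directo} deliver $[F_\rho^{h,g},A_v]=[F_\rho^{h,g},B_f]=0$. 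Linear (in)dependence is then straightforward: for proper $\rho$ corollary \ref{cor_ortogonalidad_ribbon_operators}(iii) gives trace-orthogonality of the $|G|^2$ operators $F_\rho^{h,g}$; for dual or direct $\rho$ the collapses \eqref{simplicidad_directo_dual} leave only $|G|$ of them independent, and for trivial $\rho$ \eqref{definicion_ribb_trivial} leaves only one, so in the non-proper cases they generate but cannot form a basis.

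The substantive point is spanning, which I would prove by induction on length. The base case reduces to $\Ribb_\tau=\Alg_\tau$, spanned by the $L_\tau^h$ or $T_\tau^g$ and hence by the $F_\tau^{h,g}$. For the inductive step, write $\rho=\rho_1\tau$ and take $F\in\Ribb_\rho\subset\Alg_{\rho_1}\otimes\Alg_\tau$. Expanding $F=\sum_g O_g\otimes X_\tau^g$ in a basis of $\Alg_\tau$ (with $X_\tau^g$ equal to $T_\tau^g$ or $L_\tau^g$ according to the type of $\tau$), the newly interior site $s=\partial_1\rho_1=\partial_0\tau$ forces $F$ to commute with $A_{v_s}$ and $B_{f_s}$ whenever these are not endpoints of $\rho$; applying \eqref{conmutacion_con_Av} and \eqref{conmutacion_con_Bf} together with lemma \ref{lema_cancelacion_ABLT} pins down the $g$-dependence of each $O_g$ and forces $O_g\in\Ribb_{\rho_1}$. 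Expanding the $O_g$ by the inductive hypothesis and comparing with \eqref{definicion_ribb_recursiva} then presents $F$ as a linear combination of the $F_\rho^{h,g}$.

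The main obstacle is this joining-site analysis: the cases split according to whether $\tau$ is direct or dual and to whether $v_s$, $f_s$ are endpoints of $\rho$, and the structural cancellation results \eqref{conmutacion_con_Av}--\eqref{conmutacion_con_Te} together with lemma \ref{lema_cancelacion_ABLT} must be invoked carefully to guarantee that the commutation constraints at $s$ precisely reproduce the gluing shape of \eqref{definicion_ribb_recursiva}.
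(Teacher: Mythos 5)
Your proposal is correct and follows essentially the same route as the paper's proof: induction on the ribbon length with the decomposition $\rho=\rho\prima\tau$, reduction to the tensor product $\Ribb_{\rho\prima}\otimes\Ribb_\tau$, use of the structural results \eqref{conmutacion_con_Av}/\eqref{conmutacion_con_Bf} (resting on lemma \ref{lema_cancelacion_ABLT}) at the joining site to recover the gluing shape \eqref{definicion_ribb_recursiva}, and corollary \ref{cor_ortogonalidad_ribbon_operators} together with \eqref{simplicidad_directo_dual} for the basis-iff-proper claim. The only organizational difference is that you expand in $\Alg_{\rho_1}\otimes\Alg_\tau$ and impose all constraints at once, while the paper first passes to $\Ribb_{\rho\prima}\otimes\Ribb_\tau$ and then imposes only the single new constraint at the joint; these are equivalent.
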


\begin{proof}
For ribbons $\rho$ of length zero or one, $\Ribb_\rho=\Alg_\rho$
because of \eqref{conmutacion_overlap} and the first part of the
statement follows since ribbon operators generate $\Alg_\rho$. For
ribbons of length $l>1$, we proceed inductively on $l$. So let
$\rho$ be such a ribbon and set $\rho=:\rho\prima\tau$, with $\tau$
a triangle. Observe that $e_{\tau}$ is not part of $\rho\prima$ and
that $\rho\prima$ and $\tau$ share vertices or faces only at their
ends, so that $\Ribb_\rho \subset \Ribb_{\rho}\prima :=
\Ribb_{\rho\prima} \otimes
\Ribb_{\tau}=\genset{F_{\rho\prima}^{h_1,g_1}F_{\tau}^{h_2,g_2}}{h_i,
g_i\in G}$, where $(\cdot)$ is the subspace linearly generated by
the set $\cdot$. In view of \eqref{definicion_ribb_recursiva}, what
we want to show is that
\begin{equation}\label{objetivo_teorema}
\Ribb_{\rho}\primas=\genset{\sum_{k\in G}F_{\rho\prima}^{h,k}
F_{\tau}^{\inver k hk, \inver k g}}{h, g\in G}
\end{equation}
is indeed equal to $\Ribb_\rho$. We set $s=\partial_0\tau$, $v=v_s$,
$f=f_s$ and distinguish two cases.

\noindent\emph{(a) $\tau$ is direct.}
In this case, $\Ribb_\rho\subset \Ribb_{\rho}\prima$ is the
subalgebra of operators commuting with $A_v$. Then from\eqref{conmutacion_con_Av} and \eqref{conmutacion_extremos_A} we get $\Ribb_{\rho}=\genset{\sum_{k\in G} F_{\rho\prima}^{h,k} F_{\tau}^{\inver k h\prima k,
\inver k g}} {h, h\prima, g\in G}$.
Applying $F_{\tau}^{h,g}=F_{\tau}^{h\prima,g}$ here and in \eqref{objetivo_teorema} gives
$\Ribb_\rho = \Ribb_\rho\primas$.

\noindent\emph{(b) $\tau$ is dual.} In this case,
$\Ribb_\rho\subset\Ribb_{\rho}\prima$ with $\Ribb_{\rho}$ the
subalgebra of operators commuting with $B_f$. Then
from\eqref{conmutacion_con_Bf} and \eqref{conmutacion_extremos_B} we
get $\Ribb_{\rho}=\genset{ F_{\rho\prima}^{h,g} F_{\tau}^{\inver g h
g, g\prima}} {h, g, g\prima\in G}$. Applying
$F_{\tau}^{h,g}=F_{\tau}^{h,1}\delta_{g,1}$ here and in
\eqref{objetivo_teorema} gives $\Ribb_\rho = \Ribb_\rho\primas$.

This completes the inductive step. The second part of the statement follows from \eqref{simplicidad_directo_dual} and corollary \ref{cor_ortogonalidad_ribbon_operators}.
\end{proof}

We now construct an alternative basis for $\Ribb_\rho$.
For each conjugacy class $C\in\Conj G$ we choose an element $r_C$
and denote by $\Norm C\subset G$ the subgroup of elements commuting
with $r_C$ and by $Q_C$ a set of representatives of $G/\Norm C$.
Then for each $C\in\Conj G$ we set $C=\sset{c_i}_{i=1}^{|C|}$, and
$Q_C=\sset{q_i}_{i=1}^{|C|}$ so that $c_i=q_i r_C \inver q_i$. Any
$g\in G$ can be written in a unique way as $g= q_i n$, with $n\in
\Norm C$. We introduce index functions as follows: $i(g):=i$ and
$n(g):=n$. For each irreducible representation $R\in\Irr{\Norm C}$,
we choose a particular basis and denote by $\Gamma_R(k)$, $k\in G$,
the corresponding unitary matrices of the representation. The
desired new basis is the following:
\begin{equation} \label{base_ribb_particulas}
F_\rho^{RC;\vect u\vect v} := \frac {n_R} {|\Norm {C}|} \sum_{n\in
\Norm C} \bar \Gamma_R^{jj\prima}(n)\spc F_\rho^{\inver c_i,q_i n
\inver q_{i\prima}},
\end{equation}
where $\vect u=(i,j)$, $\vect v = (i\prima, j\prima)$ with
$i,i\prima=1,\dots, |C|$ and $j,j\prima=1,\dots, n_R$. The inverse
change is
\begin{equation} \label{base_ribb_particulas_inverso}
F_\rho^{h,g} = \sum_{R\in\Irr{\Norm C}} \sum_{j,j\prima=1}^{n_R}
\Gamma_R^{jj\prima}(n_{h,g})\spc
 F_\rho^{RC;\vect u \vect v}
\end{equation}
where $\inver h\in C\in\Conj G$, $n_{h,g} = \inver q_{i(\inver h)} g
q_{i(\inver g \inver h g)}$, $\vect u = (i(\inver h), j)$ and $\vect
v = (i(\inver g \inver h g), j\prima)$ using the index functions for
$C$. That \eqref{base_ribb_particulas} is really a basis follows
from
\begin{multline}\label{ortogonalidad_ribb_particulas}
\traza({F_\rho^{RC;\vect u\vect v}}^\dagger F_\rho^{R\prima C\prima
;\vect u\prima \vect v\prima})=\\=\frac{|n_R|}{|\Norm C|\spc
|G|}\delta_{R,R\prima}\delta_{C,C\prima}\delta_{\vect u,\vect
u\prima}\delta_{\vect v,\vect v\prima}\traza(1).
\end{multline}
Instead of (\ref{conmutacion_extremos_A}, \ref{conmutacion_extremos_B}) we can now write for $D_s^{h,g}:=A_s^hB_s^g$, $s_i=\partial_i\rho$,
\begin{align}\label{conmutacion_D_F}
D_{s_0}^{h,g} F^{RC;\vect u\vect v} &=
\sum_{s=1}^{n_R} \Gamma^{sj}_R(n(h q_i)) \spc F^{RC;\vect u(s)\vect v}D_{s_0}^{h,g\inver c_i}, \nonumber \\
D_{s_1}^{h,g} F^{RC;\vect u\vect v} &= \sum_{s=1}^{n_R} \comp \Gamma^{sj\prima}_R (n(h
q_{i\prima})) \spc F^{RC;\vect u\vect v(s)}D_{s_1}^{h,  c_{i\prima} g},
\end{align}
where $\vect u = (i,j)$, $\vect v = (i\prima,j\prima)$, $\vect u(s)
= (i(h q_i),s)$, $\vect v(s) = (i(h q_{i\prima}),s)$.

\subsection{The algebra $\Ribbclosed_\sigma$}\label{apendice_ribbclosed_sigma}

Here we discuss the algebra of operators that gives the projectors
onto states of different topological charge in systems with
Hamiltonian $H_G$ \eqref{Hamiltoniano_kitaev}.

\begin{defn}
Let $\sigma$ be a closed ribbon. The closed ribbon operator algebra
$\Ribbclosed_\sigma\subset\Alg_\sigma$ consists of those operators
$K\in\Alg_\sigma$ such that $[K,A_v]=[K,B_f]=0$ for every vertex $v$
and face $f$.
\end{defn}

Note that $\Ribbclosed_\sigma\subset\Ribb_\sigma$. It is not
difficult to check that $\Ribbclosed_{\alpha_s}$ is linearly
generated by the operators $A_s^h$, $h\in G$, and
$\Ribbclosed_{\beta_s}$ is linearly generated by the operators
$B_s^C$, with $C\in\Conj G$ and $B_s^C=\sum_{g\in C} B_s^g$. Note
that these are exactly the rotationally invariant subalgebras of
$\Ribb_{\alpha_s}$ and $\Ribb_{\beta_s}$.

For any closed ribbon $\sigma$ we define the operators
\begin{equation}\label{base_ribbclosed}
K_\sigma^{DC} := \sum_{q\in Q_C} \sum_{d\in D} F_\sigma^{qd\inver q
,q r_C \inver  q },
\end{equation}
where $C\in\Conj G$ and $D\in\Conj {\Norm C}$. The point of these
operators is that they are rotationally invariant:
\begin{equation}\label{invariancia_rotacional_ribbclosed}
(\sigma, \sigma\prima)_\circ\quad\Longrightarrow\quad
K_{\sigma\prima}^{D C} = K_\sigma^{D C},
\end{equation}
as can be checked applying \eqref{efecto_rotacion}. In fact, it can
be shown that if $\sigma$ is proper they form a basis of the
subalgebra of rotationally invariant ribbon operators of
$\Ribb_\sigma$. From (\ref{producto_ribbons}, \ref{unidades}) we get
\begin{align}\label{reglas_closed_ribbons}
K_\sigma^{D C}K_\sigma^{D\prima C\prima} &= \delta_{C,C\prima}
\sum_{D\primas} N^{D\primas}_{D D\prima} K_\sigma^{D\primas C},
\nonumber\\%
{K_\sigma^{D C}}^\dagger &= K_\sigma^{\inver D C},
\nonumber\\%
\sum_{C\in \Conj G} K_\sigma^{1C} &= 1.
\end{align}
where the sum runs over $D\primas\in\Conj{\Norm C}$,
$DD\prima=\sum_{D\primas}N^{D\primas}_{D,D\prima}D\primas$ and
$\inver D$ denotes the inverse class of $D$. The result
\eqref{valor_traza} implies for any proper $\sigma$
\begin{equation}\label{valor_traza_closed}
 |G| \traza(K_\sigma^{D C})= |C| \delta_{D,1}\traza(1),
\end{equation}
which together with \eqref{reglas_closed_ribbons} and
$N^1_{DD\prima} = \delta_{\bar D,D\prima}|D|$ gives
\begin{equation}\label{ortogonalidad_ribb_closed}
 \traza({K_\sigma^{D C}}^\dagger K_\sigma^{D\prima C\prima})= \frac {|D||C|}{|G|}
 \delta_{D,D\prima}\delta_{C,C\prima}\traza(1).
\end{equation}

\begin{prop}\label{prop_generadores_ribbclosed}
Let $\sigma$ be a proper closed ribbon. The operators
$K_\sigma^{DC}$, $C\in\Conj G$, $D\in\Conj {\Norm C}$, form a basis
of $\Ribbclosed_\sigma$.
\end{prop}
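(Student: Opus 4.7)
\emph{Proof plan.} I organize the argument in three steps: membership, linear independence, and spanning.

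For membership, I check directly that each $K_\sigma^{DC}\in\Ribbclosed_\sigma$. Since $K_\sigma^{DC}$ is a linear combination of the $F_\sigma^{h,g}$, proposition~\ref{teorema_generadores_ribb} places it in $\Ribb_\sigma$, so the only outstanding commutators are with $A_{v_s}$ and $B_{f_s}$ at the endpoint $s=\partial\sigma$. Using \eqref{conmutacion_extremos_closed}, conjugation by $A_s^k$ sends $F_\sigma^{qdq^{-1},qr_Cq^{-1}}$ to $F_\sigma^{(kq)d(kq)^{-1},(kq)r_C(kq)^{-1}}$; writing $kq=q'n$ uniquely with $q'\in Q_C$ and $n\in\Norm C$, the identities $nr_Cn^{-1}=r_C$ and $nDn^{-1}=D$ (the latter because $D$ is a conjugacy class of $\Norm C$) let me reindex the sum in \eqref{base_ribbclosed} back to $K_\sigma^{DC}$, yielding $[K_\sigma^{DC},A_s^k]=0$ for every $k$ and in particular $[K_\sigma^{DC},A_v]=0$. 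The analogous calculation with $B_s^1$ collapses to the identity because $d\in\Norm C$ commutes with $r_C$, so the conjugation exponent dictated by \eqref{conmutacion_extremos_closed} reduces to $1$.

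Linear independence is immediate from the orthogonality relation \eqref{ortogonalidad_ribb_closed}, whose Hilbert--Schmidt Gram matrix is diagonal with strictly positive entries.

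For spanning, I reduce to proving the identification $\Ribbclosed_\sigma = \{K\in\Ribb_\sigma : K \text{ rotationally invariant}\}$, after which the basis claim follows from the remark in the text that the $K_\sigma^{DC}$ form a basis of the rotationally invariant subalgebra of $\Ribb_\sigma$. The inclusion $\supset$ combines step one above with \eqref{invariancia_rotacional_ribbclosed}. For $\subset$, I take $K\in\Ribbclosed_\sigma$ and expand $K=\sum_{h,g} c_{h,g}F_\sigma^{h,g}$; the constraint $[K,B_{\partial\sigma}^1]=0$ read off from \eqref{conmutacion_extremos_closed} kills every coefficient with $[h,g]\neq 1$. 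Since $\sigma$ is closed, every site in $\sigma$ is the endpoint of some rotation, so $K$ also commutes with the endpoint operators at each rotated position; translating these extra constraints back to the original basis through \eqref{efecto_rotacion} promotes the endpoint commutation into the full simultaneous-conjugation symmetry $c_{khk^{-1},kgk^{-1}}=c_{h,g}$. The orbits of this action on commuting pairs $(h,g)$ are parametrised by $(C,D)$ with $C\in\Conj G$ and $D\in\Conj{\Norm C}$, matching exactly the index set of the $K_\sigma^{DC}$.

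The main obstacle is the conjugation-invariance step, since only the averaged constraint $[K,A_v]=0$ is at hand, not $[K,A_s^k]=0$ for each $k$ individually. The cleanest route I see is to rotate $\sigma$ and re-expand $K$ via \eqref{efecto_rotacion}: in each rotated basis a fresh endpoint analysis determines the coefficients, and comparing across rotations pins down the full symmetry without invoking a stronger commutation that is not actually available.
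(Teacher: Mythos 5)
Your membership and linear-independence steps are correct and essentially coincide with what the paper does (the orthogonality relation \eqref{ortogonalidad_ribb_closed} settles independence). The problem is the spanning step, and you have put your finger on exactly the right difficulty --- upgrading the averaged constraint $[K,A_{v_s}]=0$ to the pointwise conjugation symmetry $c_{kh\bar k,kg\bar k}=c_{h,g}$ --- but your proposed resolution does not close it. Rotating $\sigma$ and redoing the endpoint analysis at each rotated site only reproduces the same averaged constraint at every position; nothing in \eqref{efecto_rotacion} converts a family of averaged constraints into the pointwise one, and you give no computation showing that it does. Your reduction also leans on the sentence ``it can be shown that \dots they form a basis of the subalgebra of rotationally invariant ribbon operators of $\Ribb_\sigma$,'' which the paper states without proof, so even granting the identification of $\Ribbclosed_\sigma$ with the rotationally invariant part you would be resting on an unproved assertion.

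The missing ingredient is Lemma \ref{lema_cancelacion_ABLT}: because $\sigma$ is a \emph{proper} closed ribbon, one can choose a direct triangle that overlaps $\alpha_s$ but not $\sigma$, and its operators show that $OA_s^G=0\Rightarrow O=0$ for $O\in\Alg_\sigma$. This is precisely what makes the stronger commutation available after all: from $[O,A_{v}^G]=0$ one gets $(O_k-O)A_{v}^G=0$ with $O_k=A_{v}^kOA_{v}^{\bar k}$, hence $O_k=O$, i.e.\ $[O,A_{v}^k]=0$ for every $k$ separately (equations \eqref{conmutacion_con_Av} and \eqref{conmutacion_AvH_Avk}); the analogous statement \eqref{conmutacion_con_Bf} handles $B_{f_s}$. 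With these in hand the commutant of $A_{v_s}$ and $B_{f_s}$ inside $\Ribb_\sigma$ is computed directly to be spanned by the orbit sums $\sum_{k\in G}F_\sigma^{kh\bar k,kg\bar k}$ over commuting pairs $(h,g)$, which are exactly the $K_\sigma^{DC}$ up to normalization; this is how the paper argues, as the special case $N=1$, $M=G$ of Proposition \ref{prop_generadores_ribbclosed_prima}. Your closing claim that the pointwise commutation ``is not actually available'' is therefore the one substantive error in the proposal: it is available, and establishing it is the real content of the proposition.
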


\begin{proof}
This is just a particular case of proposition
\ref{prop_generadores_ribbclosed_prima}.
\end{proof}

For any proper closed ribbons $\sigma$, consider the subalgebra
$\Ribbclosed^C_\sigma\subset\Ribbclosed_\sigma$ with basis
$\set{K_\sigma^{D C}}{D\in\Conj{\Norm C}}$. The point is that in
view of (\ref{reglas_closed_ribbons}, \ref{propiedades_e_C}) we have
$\Ribbclosed^C_\sigma\simeq\mathcal Z_C$, where $\mathcal Z_C$ is
the center of the group algebra $\C[\Norm C]$. In particular the
isomorphism identifies $K^{DC}$ with $e_D:=\sum_{h\in D}h$. Note
that the isomorphism preserves adjoints as defined in
\eqref{adjunto}. This suggests the introduction of a different basis
for $\Ribbclosed_\sigma$. We define in analogy with
\eqref{cambio_base_C_a_R}
\begin{equation}\label{base_ribbclosed_2}
K_\sigma^{R C} := \frac {n_R} {|\Norm C|} \sum_{D \in \Conj{\Norm
C}} \bar \chi_R(D) \spc K_\sigma^{D C},
\end{equation}
where $R \in \Irr {\Norm C}$. Due to \eqref{cambio_base_R_a_C}, the reverse change of
basis is:
\begin{equation}\label{base_ribbclosed_12}
K_\sigma^{D C} = \sum_{R \in \Irr {\Norm C}} \frac {|D|} {n_R}
 \spc \chi_R(D) \spc K_\sigma^{R C}.
\end{equation}
And due to \eqref{propiedades e_R}, the elements of the new basis are orthogonal
projectors summing up to the identity:
\begin{align}\label{propiedades_proyectores_ribbclosed}
{K_\sigma^{R C}}^\dagger &= K_\sigma^{R C}, \nonumber \\
K_\sigma^{R C} K_\sigma^{R\prima C\prima} &=
\delta_{R,R\prima}\delta_{C,C\prima} K_\sigma^{R C}, \nonumber\\
\sum_{R,C} K_\sigma^{R C} &= 1.
\end{align}

\subsection{The algebra $\Ribbopen_\rho$}\label{apendice_ribb_open}

We discuss now the algebra of operators that gives the projectors
onto states with different domain wall types in systems with
Hamiltonian $H_G^{NM}$ \eqref{Hamiltoniano_NM}, with $N\subset
M\subset G$ subgroups, $N$ normal in $G$.

\begin{defn}\label{defn_open_ribbon_algebra}
Let $\rho$ be an open ribbon. The ribbon operator algebra
$\Ribbopen_\rho\subset\Ribb_\rho$ consists of those operators
$J\in\Ribb_\rho$ such that $[J,A^M_v]=[J,B^N_f]=0$ for every vertex
$v$ and face $f$.
\end{defn}

We denote by $r_T$ an arbitrarily chosen representative of a class
$T$ of the double coset $M\backslash G/M$, by $\Norm T\subset M$ the
subgroup of elements $m$ such that $m r_T M = r_T M$ and by $Q_T$ a
set of representatives of $M/\Norm T$. For any open ribbon $\rho$ we
define the operators
\begin{equation}\label{base_ribbopen}
J_\rho^{CT} := \sum_{q\in Q_T} \sum_{c\in C} F_\rho^{qc\inver q, q
r_T M},
\end{equation}
where $C\in\Conj {N,\Norm T}$ and $T\in M\backslash G/M$. From
\eqref{producto_ribbons} we get
\begin{align}\label{reglas_open_ribbons}
J_\rho^{C T}J_\rho^{C\prima T\prima} &= \delta_{T,T\prima}
\sum_{C\primas} N^{C\primas}_{C C\prima} J_\rho^{C\primas T},
\nonumber\\%
{J_\rho^{C T}}^\dagger &= J_\rho^{\inver C T}, \nonumber\\%
\sum_{T\in M\backslash G/M} J_\rho^{1T} &= 1,
\end{align}
where the sum runs over $C\primas\in\Conj{N,M}$,
$CC\prima=\sum_{C\primas}N^{C\primas}_{C,C\prima}C\primas$ and
$\inver C$ denotes the inverse class of $C$. From
\eqref{valor_traza} we get
\begin{equation}\label{valor_traza_open}
 |G| \traza(J_\rho^{CT})= |T| \delta_{C,1}\traza(1),
\end{equation}
which together with \eqref{reglas_open_ribbons} and
$N^1_{CC\prima} = \delta_{\bar C,C\prima}|C|$ gives
\begin{equation}\label{ortogonalidad_ribb_open}
 |G| \traza({J_\rho^{C T}}^\dagger J_\rho^{C\prima T\prima})= |C||T|
 \delta_{C,C\prima}\delta_{T,T\prima}\traza(1)
\end{equation}

\begin{prop}\label{prop_generadores_ribbopen}
Let $\rho$ be an open ribbon. The operators $J_\rho^{CT}$,
$C\in\Conj {N,\Norm T}$, $T\in M\backslash G/M$, form a basis of
$\Ribbopen_\rho$.
\end{prop}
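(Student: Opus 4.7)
The plan is to establish the three standard claims for a basis of $\Ribbopen_\rho$: membership of each $J_\rho^{CT}$, linear independence, and spanning.

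For membership, since $J_\rho^{CT}\in\Ribb_\rho$ by construction, commutation with interior vertex and face operators is automatic: \eqref{conmutacion_AvH_Avk} with $H=G$ gives $[J_\rho^{CT},A_v^m]=0$ for $v\neq v_{\partial_i\rho}$ and any $m$, while \eqref{conmutacion_BfH_BfC} with $H=1$, applied to each $G$-conjugacy class contained in $N$, gives $[J_\rho^{CT},B_f^N]=0$ for interior $f$. The endpoints are the main work. Expanding $J_\rho^{CT}=\sum_{q\in Q_T,c\in C}F_\rho^{qc\inver q,qr_TM}$ and applying \eqref{conmutacion_extremos_A}, for $m\in M$ one writes $mq=q\prima n$ with $q\prima\in Q_T$ and $n\in\Norm T$; since $\Norm T$ stabilizes the coset $r_TM$ and $C$ is invariant under $\Norm T$-conjugation, the substitution $(q,c)\mapsto(q\prima,nc\inver n)$ permutes $Q_T\times C$ and reshuffles the sum to itself, giving $[A_{s_0}^m,J_\rho^{CT}]=0$. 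For $A_{s_1}^m$ the second index transforms by $g\mapsto g\inver m$, trivially preserving $qr_TM$. For $B_{s_0}^N$, \eqref{conmutacion_extremos_B} yields $B_{s_0}^NF_\rho^{qc\inver q,g}=F_\rho^{qc\inver q,g}B_{s_0}^{Nqc\inver q}$, and since $q\in M$, $c\in C\subset N$, and $N$ is normal in $G$, one has $qc\inver q\in N$ and hence $Nqc\inver q=N$. The analogous computation for $B_{s_1}^N$ uses $\inver r_T N r_T=N$. Linear independence then follows at once from \eqref{ortogonalidad_ribb_open}.

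For spanning, write $F=\sum_{h,g}c_{h,g}F_\rho^{h,g}\in\Ribbopen_\rho$. Applying \eqref{conmutacion_con_Av} to $A_{s_0}^M$ and $A_{s_1}^M$ together with \eqref{conmutacion_extremos_A} produces averaging identities whose fixed-point condition is the $M$-invariance $c_{h,g}=c_{kh\inver k,kgm}$ for all $k,m\in M$. The orbits of this two-parameter action on $G\times G$ are parameterized by pairs $(T,\tilde C)$ with $T\in M\backslash G/M$ and $\tilde C$ an $\Norm T$-conjugation orbit in $G$; the associated orbit sum in $F$ has the form $\sum_{q\in Q_T,c\in\tilde C}F_\rho^{qc\inver q,qr_TM}$, which coincides with $J_\rho^{CT}$ precisely when $\tilde C\subset N$, that is, when $\tilde C=C\in\Conj{N,\Norm T}$. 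To eliminate orbits with $\tilde C\not\subset N$, impose $[F,B_{s_0}^N]=0$: by \eqref{conmutacion_extremos_B} the commutator reduces to $\sum_{h,g}c_{h,g}F_\rho^{h,g}(B_{s_0}^{Nh}-B_{s_0}^N)$, and right-multiplication by $B_{s_0}^1$ collapses this to $\sum_{h\nin N,g}c_{h,g}F_\rho^{h,g}B_{s_0}^1=0$. Factoring $F_\rho^{h,g}=L_\rho^hT_\rho^g$ via \eqref{F_segun_TL}, using invertibility of $L_\rho^h$, and exploiting the mutual orthogonality and nonvanishing of the projectors $T_\rho^gB_{s_0}^{k}$ shows that $\{F_\rho^{h,g}B_{s_0}^1\}_{h,g}$ is linearly independent, forcing $c_{h,g}=0$ whenever $h\nin N$. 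The condition $[F,B_{s_1}^N]=0$ then yields no additional constraint, since any $F_\rho^{h,g}$ with $h\in N$ commutes with $B_{s_1}^N$ by the computation from the first step.

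The principal obstacle is the $B_{s_0}^N$ analysis in the spanning step; in particular, one must carefully establish the linear-independence lemma for $\{F_\rho^{h,g}B_{s_0}^1\}_{h,g}$ used to extract $c_{h,g}=0$ for $h\nin N$. The remainder of the argument is essentially bookkeeping about $M$-orbits on $G\times G$ and the identification of $\Norm T$-conjugation orbits contained in $N$ with classes in $\Conj{N,\Norm T}$.
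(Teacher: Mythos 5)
Your proof is correct and follows essentially the same route as the paper's: reduce by commutation with $A_{v_i}^M$ at the two ends to the $M$-symmetrized combinations $\sum_{m\in M}F_\rho^{mh\inver m,\spc mgM}$, then by commutation with $B_{f_i}^N$ force $h\in N$, identify the resulting orbit sums with the $J_\rho^{CT}$, and obtain linear independence from \eqref{ortogonalidad_ribb_open}. The cancellation property you flag as the principal obstacle --- that $OB_{s_0}^1=0$ forces $O=0$ for $O\in\Alg_\rho$ when $\rho$ is open --- is exactly Lemma \ref{lema_cancelacion_ABLT}(ii), which the paper uses implicitly through the equivalences (\ref{conmutacion_con_Av}, \ref{conmutacion_con_Bf}), so no new argument is needed there.
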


\begin{proof}
Set $(v_i,f_i) = \partial_i \rho$ and let $\Ribb_\rho\prima\subset
\Ribb_\rho$ be the subalgebra of operators commuting with
$A_{v_i}^M$. From  \eqref{conmutacion_extremos_A} and
\eqref{conmutacion_con_Av} we get $\Ribb_\rho\prima=
\genset{\sum_{m\in M} F_{\rho}^{m h \inver m, m g M}} {h, g\in G}$.
$\Ribbopen_\rho\subset \Ribb_\rho\prima$ is the subalgebra of
operators commuting with $B_{f_i}^N$. From
(\ref{conmutacion_extremos_B}, \ref{conmutacion_con_Bf}) we get
$\Ribbopen_\rho=\genset{\sum_{m\in M} F_{\rho}^{m h \inver m, m g
M}} {h \in N, g\in G}$. Finally, if $h\in C\in \Conj {N,\Norm T}$
and $g\in T \in M\backslash G/M$ we have $\sum_{m,\in M} F_{\rho}^{m
h \inver m, m g M}=\frac{|\Norm T|}{|C|} J_\rho^{C T}$. The result
follows in view of \eqref{ortogonalidad_ribb_open}.
\end{proof}

From the previous proposition and (\ref{definicion_ribb_recursiva},
\ref{conmutacion_Ln_Tm_F_overlap}, \ref{conmutacion_Ln_Tm_F_proyeccion})
we get the following result, which is no longer true if the
condition of $N$ being abelian is removed.

 \begin{cor}\label{cor_ribbopen_LT}
Let $\rho$ be an open ribbon and $e$ an edge. If $N$ is abelian
$[J,L_e^N]=[J,T_e^M]=0$ for any $J\in\Ribbopen_\rho$.
 \end{cor}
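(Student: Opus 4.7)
The plan is to verify $[J,L_e^N]=[J,T_e^M]=0$ on the basis $\{J_\rho^{CT}\}$ of $\Ribbopen_\rho$ furnished by Proposition~\ref{prop_generadores_ribbopen}. The key structural observation is that every term $F_\rho^{h,g}$ appearing in
$J_\rho^{CT}=\sum_{q\in Q_T}\sum_{c\in C}F_\rho^{qc\bar q,\,qr_TM}$
has its first index $h=qc\bar q$ in $N$: since $C\subset N$ by the definition of $\Conj{N,\Norm T}$ and $N$ is normal in $G$, $qN\bar q=N$. In addition, the second-index support $qr_TM$ is invariant under left multiplication by any element of $N$, because $\bar r_T N r_T=N\subset M$, again a consequence of $N\triangleleft G$.

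I would first dispose of the easy edges: if $e=e_\tau$ for some triangle $\tau$ such that $\tau$ or $\bar\tau$ lies in $\rho$, or $\tau$ is disjoint from $\rho$, then \eqref{conmutacion_Ln_Tm_F_inside} forces $[L_e^N,F_\rho^{h,g}]=[T_e^M,F_\rho^{h,g}]=0$ termwise, so commutation with $J$ is immediate. The nontrivial case is when the relevant triangle shares a joint with $\rho$ without being in it, and thus falls into one of the four overlap types $\tau_1,\tau_2,\tau_3,\tau_4$ of \eqref{conmutacion_Ln_Tm_F_overlap} for some splitting $\rho=\rho_1\rho_2$.

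For the $T$-type overlaps (direct triangle $\tau_3$ or $\tau_4$), I would apply the corresponding identity of \eqref{conmutacion_Ln_Tm_F_overlap} and sum over $h\in M$. With $k\in N$ and $N\triangleleft G$ one has $\bar gkg\in N\subset M$, so $M\cdot\bar gkg=M$ and $\sum_{h\in M}T_{\tau_3}^{h\bar gkg}=T_e^M$. Combined with $\sum_g T_{\rho_1}^g=1$ from \eqref{unidades}, this delivers $T_e^MF_\rho^{k,l}=F_\rho^{k,l}T_e^M$ for every $l$ and hence $[T_e^M,J_\rho^{CT}]=0$. Note that this half needs only normality of $N$, not abelianness.

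For the $L$-type overlaps (dual triangle $\tau_1$ or $\tau_2$), the side hypothesis $ks\bar k=s$ for $s\in H=N$ required by \eqref{conmutacion_Ln_Tm_F_overlap} is precisely what $N$ abelian supplies, since $k\in N$. Applying $L_{\tau_1}^hF_\rho^{k,l}=\sum_g T_{\rho_1}^g F_\rho^{k,gh\bar gl}L_{\tau_1}^h$ and summing $l$ over $qr_TM$, the $N$-invariance of that set (combined with $gh\bar g\in N$) yields $\sum_l F_\rho^{k,gh\bar gl}=F_\rho^{k,qr_TM}$ independently of $g,h$; summing $h$ over $N$ next, the substitution $n=gh\bar g$ (bijective on $N$ by normality) lets the residual factor $\sum_h L_{\tau_1}^h$ on the right reassemble as $|N|L_e^N$, and collapsing $\sum_g T_{\rho_1}^g=1$ gives $L_e^N F_\rho^{qc\bar q,qr_TM}=F_\rho^{qc\bar q,qr_TM}L_e^N$. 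Summation over $q\in Q_T$ and $c\in C$ then produces $[L_e^N,J_\rho^{CT}]=0$. The main bookkeeping subtlety is coordinating the two reindexings — the $N$-invariance of $qr_TM$ and the substitution $n=gh\bar g$ — but both are immediate consequences of $N\triangleleft G$.
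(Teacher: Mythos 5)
Your proof is correct and follows essentially the same route as the paper, which gives no written-out argument but simply derives the corollary from Proposition~\ref{prop_generadores_ribbopen} together with (\ref{definicion_ribb_recursiva}, \ref{conmutacion_Ln_Tm_F_overlap}, \ref{conmutacion_Ln_Tm_F_proyeccion}); you have filled in exactly those details, correctly isolating abelianness of $N$ as the hypothesis needed for the $L$-type overlap identity (and noting it is not needed for the $T_e^M$ half).
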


For any open ribbon $\rho$ and $T \in M\backslash G/M$, consider the
subalgebra $\Ribbopen^T_\rho\subset\Ribbopen_\rho$ with basis
$\set{J_\rho^{C T}}{C\in\Conj{N,\Norm T}}$. The point is that in
view of (\ref{propiedades_e_C}, \ref{reglas_open_ribbons}) we have
$\Ribbopen_C\simeq\mathcal Z_{N,\Norm T}$. In particular the
isomorphism identifies $J^{CT}$ with $e^M_C$. Note that the
isomorphism preserves adjoints as defined in \eqref{adjunto}. This
suggests the introduction of a different basis for $\Ribbopen_\rho$.
We define in analogy with \eqref{cambio_base_C_a_R_HG},
\begin{equation}\label{base_ribbopen_2}
J_\rho^{R T} := \frac{n_R\spc |\tilde R|\spc |N|}{|\Norm T|^2}
\sum_{C \in \Conj{N,\Norm T}} \bar \chi_R(C) \spc J_\rho^{C T},
\end{equation}
where $R \in \Irr {N,\Norm T}$. Due to \eqref{cambio_base_R_a_C_HG},
the reverse change of basis is:
\begin{equation}
J_\rho^{C T} = \sum_{R\in\Irr {N,\Norm T}} \frac{|C|}{n_R}
\chi_R(C) \spc J_\rho^{R T}
\end{equation}
And due to \eqref{propiedades_e_prima_R}, the elements of the new basis are orthogonal
projectors summing up to the identity:
\begin{align}\label{reglas_open_ribbons_2}
{J_\rho^{R T}}^\dagger &= J_\rho^{R T}, \nonumber \\
J_\rho^{R T} J_\rho^{R\prima T\prima} &=
\delta_{R,R\prima}\delta_{T,T\prima} J_\rho^{R T}, \nonumber\\
\sum_{R,T} J_\rho^{R T} &= 1.
\end{align}

Two comments should be made here. First, in the particular case of
$M$ normal in $G$, $M\backslash G/M=G/M$ and for $T\in G/M$ we have
$\Norm T = M$, so that the two labels for the basis of
$\Ribbopen_\rho$ are not related anymore. Secondly, although
definition \ref{defn_open_ribbon_algebra} only applies to open
ribbons, the algebra $\Ribbopen_\rho$ can be extended to any $\rho$
taking proposition \ref{prop_generadores_ribbopen} as a definition.
As long as $\rho$ is proper, the properties (\ref{reglas_open_ribbons}-\ref{reglas_open_ribbons_2}) will still hold.

The special case of $M$ normal and $N$ central in $M$ deserves
special attention. Instead of
(\ref{base_ribbopen},\ref{base_ribbopen_2}) we can write
\begin{equation}\label{bases_ribbopen_abeliano}
J_\rho^{n, t} :=F^{n,tM},\quad J_\rho^{\chi, t} := \frac{1}{|N|}
\sum_{n \in N} \bar \chi (n) \spc J_\rho^{n, t},
\end{equation}
where $n\in N$, $\tilde t\in G/M$ and $\chi\in(N)_{\mathrm{ch}}$,
with $(N)_{\mathrm{ch}}$ the character group of $N$. Then, if
$\rho=\rho_1\rho_2$ is an open ribbon from
\eqref{definicion_ribb_recursiva} we get
\eqref{domain_wall_coproducto}.

\subsection{The algebra $\Ribbclosed_\sigma\prima$}\label{apendice_ribbclosed_sigma_prima}

Here we discuss the algebra of operators that gives the projectors
onto states of different charge, confined and topological, in
systems with Hamiltonian $H_G^{NM}$ \eqref{Hamiltoniano_NM}, where
$N\subset M\subset G$ are normal subgroups in $G$ with $N$ central
in $M$.

\begin{defn}
Let $\sigma$ be a closed ribbon. The closed ribbon operator algebra
$\Ribbclosed\prima_\sigma\subset\Ribb_\sigma$ consists of those
operators $K\in\Ribb_\sigma$ such that
$[K,A_v^M]=[K,B_f^N]=[K,T_e^M]=[K,L_e^N]=0$ for any vertex $v$, face
$f$ and edge $e$.
\end{defn}

Note that if $N=1$ and $M=G$ then
$\Ribbclosed\prima_\sigma=\Ribbclosed_\sigma$. We extend our
previous notation and set $\Conj {A,B}:=\set{\set{ba\inver b}{b\in B}}{a\in A}$ for two subgroups $A,B$ of some other group. For each class
$C\in \Conj {G/N,M/N}$, we choose a representative $r_C\in G$. Let
$\Norm C\prima := \set{m\in M}{mr_C\inver m \inver r_C\in N}$ and
choose a set $Q_C\subset M$ of representatives of $M/\Norm C\prima$.
For any closed ribbon $\sigma$ we define the operators
\begin{equation}\label{base_ribbclosed_prima}
K_\sigma^{DC} := \sum_{q\in Q_C} \sum_{d\in D} \sum_{n\in N}
F_\sigma^{qd\inver q ,q r_C \inver  q n},
\end{equation}
where $C\in\Conj {G/N,M/N}$ and $D\in\Conj {\Norm C\prima}$. With
this notation, the results \eqref{reglas_closed_ribbons} remain
true, and (\ref{valor_traza_closed},
\ref{ortogonalidad_ribb_closed_prima}) only need a slight
modification:
\begin{align}
 |G| \traza(K_\sigma^{D C})&= |C||N|
 \delta_{D,1}\traza(1), \label{valor_traza_closed_prima}\\%
 \traza({K_\sigma^{D C}}^\dagger K_\sigma^{D\prima C\prima})&= \frac {|D||C||N|}{|G|}
 \delta_{D,D\prima}\delta_{C,C\prima}\traza(1). \label{ortogonalidad_ribb_closed_prima}
\end{align}

\begin{prop}\label{prop_generadores_ribbclosed_prima}
Let $\sigma$ be a proper closed ribbon. The operators
$K_\sigma^{DC}$, $C\in\Conj {G/N,M/N}$, $D\in\Conj {\Norm C\prima}$,
form a basis of $\Ribbclosed_\sigma\prima$.
\end{prop}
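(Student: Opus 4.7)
The plan is to mimic the proofs of propositions \ref{teorema_generadores_ribb} and \ref{prop_generadores_ribbopen}: start from the basis $\{F_\sigma^{h,g}\}_{h,g\in G}$ of $\Ribb_\sigma$ (available since $\sigma$ is proper) and successively impose the additional commutation conditions defining $\Ribbclosed_\sigma\prima$, thereby cutting $\Ribb_\sigma$ down to the desired subspace. Linear independence of the resulting family will then follow from the trace orthogonality \eqref{ortogonalidad_ribb_closed_prima}, obtained from \eqref{valor_traza} in the same way that \eqref{ortogonalidad_ribb_closed} was.

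Writing $K=\sum_{h,g}c_{h,g}F_\sigma^{h,g}\in\Ribbclosed_\sigma\prima$, I would first use \eqref{conmutacion_extremos_closed} together with the averaging trick of \eqref{conmutacion_con_Av} to translate $[K,A_{v_{\partial\sigma}}^M]=0$ into the symmetry $c_{h,g}=c_{mhm\inv,mgm\inv}$ for every $m\in M$; likewise \eqref{conmutacion_extremos_closed} combined with \eqref{conmutacion_con_Bf} turns $[K,B_{f_{\partial\sigma}}^N]=0$ into the support condition $c_{h,g}=0$ unless $[g,h]:=g\inv h\inv gh\in N$. The more delicate step is to encode the edge constraints: using \eqref{conmutacion_Ln_Tm_F_overlap} at every triangle of $\sigma$, together with the normality of $M$ in $G$ and the centrality of $N$ in $M$, commutation with all the $L_e^N$ and $T_e^M$ must be shown to force $h,g\in M$ and to identify $c_{h,g}$ with $c_{h,gn}$ for every $n\in N$. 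This is the closed-ribbon analogue of \eqref{conmutacion_Ln_Tm_F_proyeccion} and is what distinguishes $\Ribbclosed_\sigma\prima$ from the naive extension of $\Ribbopen_\sigma$ to a closed ribbon.

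It then remains to count orbits: the constrained coefficient space consists of pairs $(h,g)\in M\times M$ with $[g,h]\in N$, modulo simultaneous $M$-conjugation and modulo $g\mapsto gn$ with $n\in N$. Passing to the image $\bar g:=gN\in M/N$ and using the normality of $M$ in $G$, the $M/N$-orbit of $\bar g$ is exactly a class $C\in\Conj{G/N,M/N}$; choosing $r_C$ and a transversal $Q_C$ for $M/\Norm C\prima$ moves $\bar g$ to $\bar r_C$, after which $[g,h]\in N$ becomes $h\in\Norm C\prima$ and the residual symmetry is $\Norm C\prima$-conjugation of $h$, whose orbits are indexed by $D\in\Conj{\Norm C\prima}$. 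This identifies the constrained subspace with the span of the operators $K_\sigma^{DC}$ of \eqref{base_ribbclosed_prima}, and \eqref{ortogonalidad_ribb_closed_prima} finishes the proof. The main obstacle is the edge step: one must propagate \eqref{conmutacion_Ln_Tm_F_overlap} through every pair of consecutive triangles of $\sigma$ and verify that centrality of $N$ in $M$ causes all the resulting identifications to collapse exactly to $h,g\in M$ together with the $N$-ambiguity in $g$, without over-constraining $h$.
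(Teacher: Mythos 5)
Your overall strategy is exactly the paper's: start from the basis $\sset{F_\sigma^{h,g}}_{h,g\in G}$ of $\Ribb_\sigma$, impose the extra commutation conditions in stages ($A_v^M$ and $B_f^N$ at the end site first, then the edge terms), and finish with the trace orthogonality \eqref{ortogonalidad_ribb_closed_prima}. The first two stages are right. The error is in the edge stage: commutation with all $L_e^N$ and $T_e^M$ does \emph{not} force $g\in M$. The $T_e^M$ constraint enters only through the dual triangles of $\sigma$, whose factors obey $L_\tau^h T_e^M = T_e^{hM}L_\tau^h$; this forces the first label (and, $M$ being normal, all its conjugates) into $M$. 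The $L_e^N$ constraint enters only through the direct triangles and merely smears the second label over the coset $Ng$ --- compare \eqref{conmutacion_Ln_Tm_F_overlap} and \eqref{conmutacion_Ln_Tm_F_proyeccion}, where the factor $\delta_{hM,M}$ constrains $h$ alone while $g$ is averaged over $Ng$. Nothing restricts the class of $g$ modulo $M$; the correct intermediate generating set is $\genset{\sum_{m\in M}\sum_{n\in N} F_\sigma^{mh\inver m,\, mg\inver m n}}{h\in M,\ g\in G,\ hg\inver h\inver g\in N}$ with $g$ ranging over all of $G$.

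This propagates to your orbit count: if $g$ were confined to $M$, the classes you reach would be only those $C\in\Conj{G/N,M/N}$ contained in $M/N$, and you would lose every projector $K_\sigma^{DC}$ with $C\not\subset M/N$ --- precisely the confined magnetic sectors that the main text is after. A concrete check: for $N=M=1$ the algebra $\Ribbclosed\prima_\sigma$ is spanned by the $|G|$ operators $T_\sigma^g=F_\sigma^{1,g}$, matching the $|G|$ singleton classes $C=\sset{g}$, $g\in G$, whereas your constraint $(h,g)\in M\times M$ would leave only $F_\sigma^{1,1}$. The remaining steps --- reducing $hg\inver h\inver g\in N$ to $h\in\Norm C\prima$ after moving $gN$ onto $r_CN$ by an element of $Q_C$, taking $\Norm C\prima$-conjugacy classes $D$ for the residual symmetry in $h$, and invoking \eqref{ortogonalidad_ribb_closed_prima} for linear independence --- are sound once $g$ is allowed to range over $G$.
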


\begin{proof}
Set $(v,f) = \partial\sigma$ and let $\Ribb_\sigma\prima\subset
\Ribb_\sigma$ be the subalgebra of operators commuting with $A_v^M$
and $B_f^N$. From (\ref{conmutacion_extremos_closed}, \ref{conmutacion_con_Av},
\ref{conmutacion_con_Bf}) we get
$\Ribb_\sigma\prima=\genset{\sum_{m\in M} F_{\sigma}^{\inver m h
m, \inver m g m}} {h, g\in G, hg\inver h \inver g \in
N}$. $\Ribbclosed_\sigma\prima\subset \Ribb_\sigma\prima$ is the
subalgebra of operators commuting with $L_e^N$ and $T_e^M$ for every
edge $e$. From (\ref{conmutacion_Ln_Tm_F_inside}, \ref{conmutacion_Ln_Tm_F_overlap},
\ref{conmutacion_con_Le}, \ref{conmutacion_con_Te}) it follows that
$\Ribbclosed_\sigma=\genset{\sum_{m\in M} \sum_{n\in N} F_\sigma^{m
h\inver m ,m g \inver m n}}{h\in M, g\in G, hg\inver h \inver g \in
N}$. But given such $h$ and $g$ there exists a class
$C\in\Conj{G/N,M/N}$ and $q\in Q_C$ with $\inver q g q \inver r_C
\in N$, and a class $D\in \Conj{\Norm C\prima}$ with $\inver q hq
\in D$, so that $\sum_{m\in M} \sum_{n\in N} F_\sigma^{m h\inver m
,m g \inver m n}=\frac{|\Norm C\prima|}{|D|} K_\sigma^{D
C}=\frac{|M|}{|C||D|} K_\sigma^{D C}$. The result follows in view of
\eqref{ortogonalidad_ribb_closed_prima}.
\end{proof}

The change of basis \eqref{base_ribbclosed_2} that leads to the
relations \eqref{propiedades_proyectores_ribbclosed} is possible for
$\Ribbclosed_\sigma\prima$ just as it was for $\Ribbclosed_\sigma$,
with the only difference that now the representations $R$ belong to
$\Irr{\Norm C\prima}$.

For any proper closed ribbon we have $\Ribbopen_\sigma\subset
\Ribbclosed_\sigma\prima$. In fact
\begin{equation}\label{ribbopen_en_ribbclosedprima1}
J_\sigma^{nt} = \sum_{C \subset tM} K_\sigma^{nC},
\end{equation}
where $n\in N$, $t\in G/M$, $C\in \Conj {G/N,M/N}$ and
$K_\sigma^{nC}=K_\sigma^{DC}$ with $D=\sset{n} \in \Conj {N, \Norm
{C}\prima}\subset \Conj {\Norm {C}\prima}$. From
(\ref{base_ribbclosed_12}, \ref{base_ribbopen_2},
\ref{ribbopen_en_ribbclosedprima1}) we get the following relation
between the corresponding projector bases
\begin{equation}\label{ribbopen_en_ribbclosedprima2}
J_\sigma^{\chi t} = \sum_{C \subset tM} \spc \sum_{R\in \Irr {\Norm
C\prima}} \frac {(\chi_{R}, \chi)_N}{n_{R}} K_\sigma^{R C},
\end{equation}
where $\chi\in (N)_{\mathrm{ch}}$, $t\in G/M$ and $C\in \Conj{G/N,
M/N}$.

\section{Ribbon transformations}
\label{appendix_C}

\begin{figure}
\includegraphics[width=7 cm]{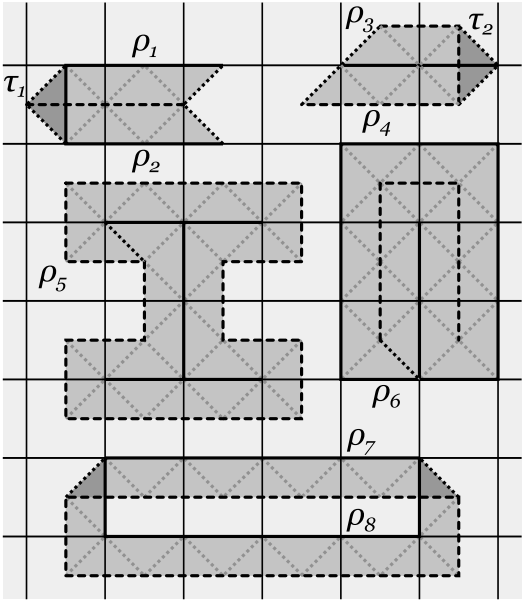}
\caption{
Several constructions with nice strips and ribbons. All elements are displayed as in Fig.~\ref{figura_strips}. The $\rho_i$ are ribbons, except $\rho_5$ and $\rho_6$ which are just nice strips. We have $(\rho_1,\tau_1,\rho_2)_\triangleleft$ and $(\rho_3,\tau_2,\rho_4)_\triangleright$. $\rho_5$ is a dual block and $\rho_6$ is a direct block. $\rho_7$ and $\rho_8$ form a simple deformation, $(\rho_7,\rho_8)_{=}$.
}\label{figura_deformations}
\end{figure}

In this appendix we discuss several transformations that can be
applied to ribbons. These transformations are interesting because
they leave invariant the action of certain ribbon operator algebras
on suitable subspaces of $\Hilb_G$. In order to proof the desired
properties, we need some preliminary results. We will find useful
the notation $\cdot =_\psi \cdot$ for $\cdot
\ket\psi=\cdot\ket\psi$. Also, for strips $\rho,\rho\prima$ and a
triangle $\tau$, we write $(\rho,\tau,\rho\prima)_\triangleleft$ if
$\tau$ is direct, $(\rho,\tau)_\prec$, $(\rho\prima,\tau)_\succ$ and
we write $(\rho,\tau,\rho\prima)_\triangleright$ if $\tau$ is dual,
$(\tau, \rho)_\prec$, $(\tau,\rho\prima)_\succ$, see Fig.~\ref{figura_deformations}.

\begin{lem}\label{lema_inversion}
Let $\rho$, $\rho\prima$ be ribbons, $\ket\psi\in \Hilb_G$ and $H,
H\prima\subset G$ normal subgroups with $hh\prima =h\prima h$ for
any $h\in H$, $h\prima\in H\prima$.

\noindent (i) If $(\rho,\rho\prima)_\bigtriangleup$, there exist a
direct triangle $\tau$ such that
$(\rho,\tau,\rho\prima)_\triangleleft$ and for any $f\in
F_\rho-\sset{f_{\partial_0\rho},f_{\partial_1\rho}}$ we have
$B_f^{H}=_\psi 1$ then for $h\prima\in H\prima$
\begin{equation}\label{lema_inversion_dual}
L_{\rho}^{h\prima}=_\psi \sum_{k,l\in G} F_{\rho\prima}^{l\inver k
\inver h\prima k \inver l,l} T_\tau^k.
\end{equation}
\noindent (ii) If $(\rho,\rho\prima)_\bigtriangledown$ then for
$g\in G$
\begin{equation}\label{lema_inversion_direct}
T_{\rho}^{g}=_\psi T_{\rho\prima}^{\inver g}
\end{equation}
\end{lem}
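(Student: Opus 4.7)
Part (ii) should be an operator identity, not merely a relation on $\ket\psi$. I would establish it by induction on the number of direct triangles in $\rho$. The operator $T_\rho^g = F_\rho^{1,g}$ acts only on the direct-triangle edges of $\rho$ and satisfies $T_\rho^g = \sum_{k\in G} T_{\rho_1}^k T_{\rho_2}^{\inver k g}$ by \eqref{definicion_ribb_recursiva} with $h=1$. Since $(\rho,\rho\prima)_\bigtriangledown$, the direct triangles of $\rho\prima$ are the orientation-reversed complements of those of $\rho$, traversed in reverse order. The base case is a single direct triangle $\tau$: reversing the orientation of $e_\tau$ exchanges cases (a) and (b) of \eqref{definicion_op_triangulo_directo}, and $T_{\bar\tau}^{\inver g} = T_\tau^g$ is immediate. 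For the inductive step I split $\rho = \rho_1\rho_2$, write $\rho\prima = \rho_2\prima\rho_1\prima$ with each $(\rho_i,\rho_i\prima)_\bigtriangledown$, apply the recursion on both sides, substitute $k\mapsto\inver k g$ in the sum, and invoke the inductive hypothesis on each piece.

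Part (i) is substantially deeper and genuinely uses the hypothesis on $\ket\psi$. Geometrically, $\rho$ and $\rho\prima$ share the same dual edges with opposite orientations; together with the bridging direct triangle $\tau$ they enclose a region whose interior faces are $F_\rho\setminus\{f_{\partial_0\rho},f_{\partial_1\rho}\}$. The operator $L_\rho^{h\prima}$ left-multiplies the dual edges of $\rho$ by $h\prima$ (up to the conjugations induced by the direct edges traversed along $\rho$), while the right-hand side performs the analogous action on the complementary dual edges of $\rho\prima$. The two operators differ only by parallel transport of $h\prima$ around each interior face $f$. The hypothesis $B_f^H =_\psi 1$ forces those face-fluxes to lie in $H$, and because $H\prima$ centralizes $H$, the transport leaves $h\prima$ invariant apart from the overall conjugation by the direct-edge product connecting the two complementary ribbons at $\tau$. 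This is precisely the pattern $l\inver k\inver{h\prima}k\inver l$: $k$ is the value on $e_\tau$ enforced by $T_\tau^k$, while $l$ is the accumulated direct-edge product along $\rho\prima$ recorded by the second index of $F_{\rho\prima}^{\cdot,l}$.

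To execute this, I would induct on the number of interior faces of the enclosed region. The base case has no interior faces: $\rho$, $\tau$ and $\rho\prima$ bound a single elementary plaquette and the identity reduces to an explicit computation using \eqref{definicion_ribb_recursiva}, the commutation relations (\ref{conmutacion_left_joint},\ref{conmutacion_right_joint}), and the completeness properties \eqref{propiedades_algebras_L_T}. For the inductive step I would peel off one dual triangle of $\rho$ bordering an interior face $f$, decompose $L_\rho^{h\prima}$ via the recursion, apply $B_f^H\ket\psi = \ket\psi$ to migrate the action across $f$ onto the corresponding segment of $\rho\prima$ via the elementary base identity, and then close with the inductive hypothesis on the shortened ribbon. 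The main obstacle is the bookkeeping: tracking the conjugations accumulated as $h\prima$ is pushed across successive faces and verifying that they assemble into the stated $l\inver k\inver{h\prima}k\inver l$ structure, with the sum over $(k,l)$ correctly reconstructing the unsummed left-hand side. The centrality of $H\prima$ over $H$ is essential at every step to discard the $H$-valued face fluxes without altering $h\prima$.
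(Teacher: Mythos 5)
Your strategy is sound, and the mechanism you isolate---transporting $h\prima$ across the band of faces between $\rho$ and $\rho\prima$, with $B_f^H=_\psi 1$ and the centralizing hypothesis on $H,H\prima$ absorbing the $H$-valued fluxes---is exactly the one at work; but your route for part (i) is genuinely different from the paper's. The paper does not induct on faces: it inserts a single global resolution of the identity, defining $\ket{\psi^{g,\vect h}}:=L_{e_1}^{H\prima}T_\tau^g L_{e_2}^{H\prima}B_{s_1}^{h_1}\cdots L_{e_r}^{H\prima}B_{s_{r-1}}^{h_{r-1}}\ket\psi$ so that every crossed edge is projected onto its $H\prima$-symmetrized sector, recovering $\ket\psi$ by summing over $g\in G$ and $\vect h\in H^{r-1}$ via the telescoping identities $\sum_{k\in G}T_\tau^kL_{e_1}^{H\prima}T_\tau^k=1$ and $\sum_{k\in H}B_{f_{i-1}}^kL_{e_i}^{H\prima}B_{s_{i-1}}^k=B_{f_{i-1}}^H=_\psi 1$; on each component both $L_\rho^{H\prima}$ and $L_{\rho\prima}^{H\prima}$ act as the identity, and \eqref{conmutacion_Ln_Tm_F_overlap} then yields the unsummed formula. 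This parallelizes all of your ``migration'' steps at once and spares the conjugation bookkeeping, at the price of obscuring where each hypothesis enters; your face-by-face peeling makes that explicit but must track the accumulated conjugations, which is where all the labor sits. For part (ii) you are right that it is an unconditional operator identity---the paper dismisses it as ``trivial in terms of Wilson loops''---and your induction via \eqref{definicion_ribb_recursiva} is a perfectly good way to say the same thing. One imprecision to repair in (i): the base case is not $\rho$, $\tau$, $\rho\prima$ bounding an elementary plaquette, but the configuration with no interior faces, where $e_\tau$ coincides up to orientation with the single edge crossed by the shared dual path; there the identity holds as an operator identity because $T_\tau^k$ pins exactly the conjugation converting left into right multiplication on that qudit. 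This matters, because the end faces $f_{\partial_0\rho}$, $f_{\partial_1\rho}$ are excluded from the hypothesis, so a base case that required controlling the flux of an enclosed plaquette could not get the induction started.
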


\begin{proof}
\noindent \emph{(i)} Let $p_\rho^\ast =
(f_0^\ast,\dots,e_r^\ast,f_r^\ast)$ and set $s_i=(\partial_0
e_{i+1}, f_i)$ for $i=1,\dots, r-1$. Consider the states
$\ket{\psi^{g, \vect h}}:= L_{e_1}^{H\prima} T_{\tau}^{g}
L_{e_2}^{H\prima} B_{s_{1}}^{h_1} \cdots L_{e_r}^{H\prima}
B_{s_{r-1}}^{h_{r-1}} \ket\psi$, $g\in G$, $\vect h\in H^{r-1}$.
Then $L_{e_i}^{H\prima} =_{\psi^{g, \vect h}} 1$ and thus
$L_\rho^{H\prima} =_{\psi^{g, \vect h}} L_{\rho\prima}^{H\prima}
=_{\psi^{g, \vect h}} 1$. But $\ket\psi=\sum_{g\in G}T_\tau^g
\sum_{\vect h\in H^{r-1}}\prod_{i=1}^{r-1} B_{s_{i-1}}^{h_i}
\ket{\psi^{\vect g}}$, and the result follows using
\eqref{conmutacion_Ln_Tm_F_overlap} and the fact that
$[L_\rho^{k},B_{s_i}^{h}]=[L_{\rho\prima}^{k},B_{s_i}^{h}]=0$ for
$k\in H\prima$, $h\in H$.

\noindent \emph{(ii)} The proof is dual to (i). Just note that one
must use $\sum_{k\in G} L_{\tau}^{\inver k} T_{e_1}^1 L_{\tau}^k=1$
instead of $\sum_{k\in G} T_{\tau}^{k} L_{e_1}^{H\prima}
T_{\tau}^k=1$ and $\sum_{k\in G} A_{v_{i-1}}^{\inver k} T_{e_i}^1
A_{v_{i-1}}^k=1$ instead of $\sum_{k\in H} B_{f_{i-1}}^{k}
L_{e_i}^{H\prima} B_{s_{i-1}}^k=B_{f_{i-1}}^H$. Alternatively, the
result is trivial in terms of Wilson loops.
\end{proof}

When working with ribbon deformations, it is useful to consider
triangle strips that are more general than ribbons but still allow
to introduce operators. We say that a strip $\rho$ is nice if no two of its triangles overlap or, equivalently, if
$\rho=\rho_1\cdots\rho_n$ with $\rho_i$ ribbons such
that$(\rho_i,\rho_j)_\oslash$ for $i\neq j$. Then ribbon operators
can be generalized for nice strips using
\eqref{definicion_ribb_recursiva}. Although such nice strip
operators still commute with all vertex operators $A_v$ and face
operators $B_f$ except those in their ends, they can no longer be
characterized by this property. A direct (dual) block $\rho$ is a
nice closed strip such that $(\rho,\rho)_{\bigtriangleup}$
($(\rho,\rho)_{\bigtriangledown}$), see Fig.~\ref{figura_deformations}.

\begin{lem}\label{lema_bloques}
Let $\rho$ be a nice closed strip, $\ket\psi\in \Hilb_G$ and
$H\subset G$ a normal subgroup.

\noindent (i) If $\rho$ is a dual block and for any $v\in V_\rho$ we
have $A_v^{H}=_\psi1$ then
\begin{equation}\label{lema_bloques_dual}
L_{\rho}^{H}=_\psi 1
\end{equation}

\noindent (ii) If $\rho$ is a direct block and for any $f\in F_\rho$
we have $B_f^{H}=_\psi 1$ then
\begin{equation}\label{lema_bloques_direct}
T_{\rho}^{H}=_\psi 1
\end{equation}
\end{lem}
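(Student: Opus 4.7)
The plan is to prove part (i) in detail by induction on the number of direct triangles in $\rho$; part (ii) then follows by the entirely parallel argument exchanging $L\leftrightarrow T$, $A\leftrightarrow B$, vertices $\leftrightarrow$ faces, and direct $\leftrightarrow$ dual triangles, in the same spirit as the dual half of the proof of Lemma~\ref{lema_inversion}.

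A preliminary observation I will use throughout is that $A_v^H=|H|\inv\sum_{h\in H}A_v^h$ is an orthogonal projector and that $h\mapsto A_v^h$ restricts to a unitary representation of $H$ on its range, so the hypothesis $A_v^H=_\psi 1$ is equivalent to $A_v^h=_\psi 1$ for every $h\in H$. Hence it suffices to show, for each fixed $h\in H$, that $L_\rho^h=_\psi 1$; averaging over $h$ then yields $L_\rho^H=_\psi 1$.

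For the base case, I take $\rho$ with no direct triangles. Niceness together with the fact that two composable dual triangles share the vertex of their common site forces all triangles of $\rho$ to share a single vertex $v$, whence $\rho=\alpha_v$, $V_\rho=\sset{v}$, and $L_\rho^h=A_v^h=_\psi 1$ directly from \eqref{definicion_AM_BN}. For the inductive step, $(\rho,\rho)_\bigtriangledown$ forces every direct edge of $\rho$ to occur with both orientations, so I may pick a complementary pair $\tau,\comp\tau\subset\rho$. Using a rotation of $\rho$ together with the commutation rules \eqref{conmutacion_left_joint}, \eqref{conmutacion_right_joint}, and \eqref{conmutacion_overlap} to move triangles past one another, I will bring $\rho$ into the form $\tau\rho_2\comp\tau\rho_3$. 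Expanding $F_\rho^{h,g}$ via the gluing formula \eqref{definicion_ribb_recursiva} and applying the one-line identity $T_\tau^k T_{\comp\tau}^{\inver k g}=\delta_{g,1}T_\tau^k$, which is immediate from the explicit action of triangle operators on inversely oriented edges, the $\comp\tau$ factor collapses the $g$-sum onto $g=1$; after reabsorbing the remaining $T_\tau^k$ and $T_{\rho_2}^m$ sums into identities via \eqref{unidades}, $L_\rho^h$ reduces on $\ket\psi$ to an average of terms $L_{\rho\prima}^{\inver khk}$, where $\rho\prima$ is the nice closed strip obtained by deleting $\tau$ and $\comp\tau$ and gluing the remainder. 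Since $H$ is normal, $\inver khk\in H$, so the inductive hypothesis gives $L_{\rho\prima}^{\inver khk}=_\psi 1$, closing the step.

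The main obstacle will be the bookkeeping of the conjugations and auxiliary indices produced by the gluing expansion when $\tau$ and $\comp\tau$ are separated by a nonempty $\rho_2$: the joints between $\tau$, $\rho_2$, $\comp\tau$, $\rho_3$ produce conjugated labels, and one must correctly reabsorb the resulting sums to recognise $L_{\rho\prima}$ applied to a conjugate of $h$. One must also verify combinatorially that $\rho\prima$ is nice, that $V_{\rho\prima}=V_\rho$, and that $\Edirect_{\rho\prima}=\Edirectcomp_{\rho\prima}$, so that the inductive hypothesis applies; all of these follow from the defining multiset equality $\Edirect_\rho=\Edirectcomp_\rho$. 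Normality of $H$ is used in exactly one place, to keep the conjugated labels $\inver khk$ inside $H$ when invoking the inductive hypothesis.
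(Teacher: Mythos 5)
Your strategy (induct by peeling off a complementary pair of direct triangles and using the projector collision $T_\tau^kT_{\comp\tau}^{n}=\delta_{n,\inver k}T_\tau^k$) is close in spirit to the paper's, but the inductive step has a genuine gap. The object $\rho\prima$ obtained by ``deleting $\tau$ and $\comp\tau$ and gluing the remainder'' does not exist as a strip for an arbitrary complementary pair: writing $\rho=\tau\rho_2\comp\tau\rho_3$, the sites $\partial_1\rho_2=\partial_0\comp\tau$ and $\partial_0\rho_3=\partial_1\comp\tau$ lie on the opposite side of the edge $e_\tau$ from $\partial_0\rho_2=\partial_1\tau$ and $\partial_1\rho_3=\partial_0\tau$, so neither $\rho_2\rho_3$ nor $\rho_3\rho_2$ is composable. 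Correspondingly the algebra does not close: expanding with \eqref{definicion_ribb_recursiva} and collapsing the $\comp\tau$ index leaves $\sum_{k}T_\tau^{k}\sum_{m}F_{\rho_2}^{\inver khk,\,m}L_{\rho_3}^{k\inver m\inver khkm\inver k}$, and the residual conjugation by $k$ on the $\rho_3$ factor means the inner sum is not $L_{\rho\prima}^{\inver khk}$ for any composed ribbon; nor can $\sum_kT_\tau^k$ be ``reabsorbed via \eqref{unidades}'' while the summand still depends on $k$. Note also that your claim $V_{\rho\prima}=V_\rho$ at every step is incompatible with ever reaching a one-vertex base case.

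The clearest symptom that something is wrong is that your inductive step never uses the hypothesis $A_v^H=_\psi1$; it is invoked only once, at the final base case. Already for a block whose direct path covers a single edge $v_0$--$v_1$ the conclusion requires the hypothesis at \emph{both} vertices, so it must be consumed at every step of the induction. That is exactly how the paper proceeds: it inducts on $|V_\rho|$, uses that $p_\rho$ is a closed walk covering each edge of a tree once in each direction to rotate $\rho$ so that a leaf vertex $v_0=v_{\partial_0\rho}$ can be excised together with its \emph{entire} dual loop $\alpha_{\partial_0\rho}=\rho_1\tau$ and the back-and-forth direct pair $\tau\prima,\comp{\tau\prima}$; the step then combines the inductive hypothesis on the genuinely smaller block $\rho_3\comp\tau$ with Lemma~\ref{lema_inversion}(i) (equation \eqref{lema_inversion_dual}, to trade $L_{\rho_3}$ for $L_{\comp\tau}$-type operators) and finally $L_{\alpha_{\partial_0\rho}}^{h}=A_{v_0}^{h}=_\psi1$. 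Some version of this leaf selection, and the per-step use of the vertex hypothesis, is indispensable; an arbitrary complementary pair of direct triangles will not do.
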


\begin{proof}
\noindent \emph{(i)} We proceed recursively on $|V_\rho|$. For
$|V_\rho|=0,1$ the result is trivial. So let $|V_\rho|>1$. Note that
if $(\rho,\rho\prima)_{\circ}$ then $L_{\rho}^{H}= L_{\rho\prima}^H$
due to \eqref{efecto_rotacion}. Also, the path
$p_{\rho}=(v_0,\dots,{e_q\prima},v_r)$ forms a tree. Thus, w.l.o.g.
we can choose $\rho$ such that $v_1= v_{r-1}$ and there exists a
dual triangle $\tau$ such that $\rho=\rho_1\rho_2$ with
$\alpha:=\alpha_{\partial_0 \rho}=\rho_1\tau$ and
$(\rho_2,\tau,\rho_2)_{\triangleleft}$. Set
$\rho_2=\tau\prima\rho_3\comp\tau\prima$, with $\tau\prima$ a
direct triangle. Then
$L_\rho^{h\prima}=L_{\rho_1}^{h\prima}L_{\rho_2}^{h\prima}=L_{\rho_1}^{h\prima}\sum_{k\in
G} T_{\tau\prima}^k L_{\rho_3}^{\inver k h\prima k }=_\psi
L_{\rho_1}^{h\prima}\sum_{k\in G} T_{\tau\prima}^k L_{\inver
\tau}^{\inver k h\prima k}=L_{\rho_1}^{h\prima}
L_{\tau}^{h\prima}=L_{\alpha}^{h\prima}=_\psi 1$, where we have used
the fact that $\rho_3\inver\tau$ is a block and
\eqref{lema_inversion_dual} for $\tau,\inver\tau$.

\noindent \emph{(ii)} Again the proof is dual to (i) or,
alternatively, trivial in terms of Wilson loops.
\end{proof}

\begin{cor}\label{cor_bloques}
Let $\rho=\rho_1\rho_2$ be a nice strip. Under the same conditions of the previous lemma we have,
respectively,

\noindent (i) for $h\in H$
\begin{equation}\label{coroloario_bloques_dual}
L_{\rho_1}^{h}=_\psi \sum_{g\in G} T_{\rho_1}^g L_{\rho_2}^{\inver g
\inver h g},
\end{equation}

\noindent (ii) for $g\in G$
\begin{equation}\label{coroloario_bloques_direct}
T_{\rho_1}^{gH}=_\psi T_{\rho_2}^{\inver gH}.
\end{equation}
\end{cor}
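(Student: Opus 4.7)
The proof of both parts starts from the recursive decomposition \eqref{definicion_ribb_recursiva} of $F_\rho^{h,g}$ along $\rho=\rho_1\rho_2$, rearranges using \eqref{F_segun_TL}, and invokes Lemma \ref{lema_bloques}. The key asymmetry is that $L_\rho^H$ in (i) is a representation-averaging projector, whereas $T_\rho^H=\sum_{h\in H}T_\rho^h$ in (ii) is a sum of mutually orthogonal projectors; consequently $L_\rho^H=_\psi 1$ upgrades automatically to $L_\rho^h=_\psi 1$ for every $h\in H$ (since $L_\rho^h L_\rho^H=L_\rho^H$ as $hH=H$), while $T_\rho^H=_\psi 1$ yields no analogous individual identity. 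This dictates a clean one-line rearrangement in (i) and a two-sided symmetry argument in (ii).

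For part (i), summing the recursion \eqref{definicion_ribb_recursiva} over the second index and reindexing gives
\begin{equation*}
L_\rho^h \;=\; \sum_k F_{\rho_1}^{h,k}\, L_{\rho_2}^{\bar k h k} \;=\; L_{\rho_1}^h\sum_k T_{\rho_1}^k\, L_{\rho_2}^{\bar k h k},
\end{equation*}
after pulling $L_{\rho_1}^h$ out; it commutes with $T_{\rho_1}^k$ by \eqref{F_segun_TL} and with every operator supported on $\rho_2$ because $\rho$ is nice (disjoint edge supports). Using the strengthened $L_\rho^h=_\psi 1$ and left-multiplying by the unitary inverse $L_{\rho_1}^{\bar h}=(L_{\rho_1}^h)^{-1}$ produces $\sum_k T_{\rho_1}^k L_{\rho_2}^{\bar k h k}=_\psi L_{\rho_1}^{\bar h}$, and replacing $h$ by $\bar h\in H$ yields \eqref{coroloario_bloques_dual}.

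For part (ii), the same recursion with $h=1$ in the first slot gives $T_\rho^h=\sum_k T_{\rho_1}^k T_{\rho_2}^{\bar k h}$, and orthogonality of the $T_{\rho_1}^k$ collapses this to $T_{\rho_1}^g T_\rho^h=T_{\rho_1}^g T_{\rho_2}^{\bar g h}$. Summing over $h\in H$ and then over $g$ in a single right coset $g_0H$, using normality of $H$ to identify $\bar gH=\bar{g_0}H$ for every $g\in g_0H$, I obtain $T_{\rho_1}^{g_0 H}T_\rho^H=T_{\rho_1}^{g_0 H}T_{\rho_2}^{\bar{g_0}H}$, whence $T_\rho^H=_\psi 1$ gives $T_{\rho_1}^{g_0 H}=_\psi T_{\rho_1}^{g_0 H}T_{\rho_2}^{\bar{g_0}H}$. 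Reindexing \eqref{definicion_ribb_recursiva} by the substitution $k=h k'$ produces the mirror factorization $T_\rho^h=\sum_{k'}T_{\rho_1}^{hk'}T_{\rho_2}^{\bar{k'}}$, and the parallel computation from the right yields $T_{\rho_2}^{\bar{g_0}H}=_\psi T_{\rho_1}^{g_0 H}T_{\rho_2}^{\bar{g_0}H}$. The two identities share a common right-hand side, so $T_{\rho_1}^{g_0 H}=_\psi T_{\rho_2}^{\bar{g_0}H}$, which is \eqref{coroloario_bloques_direct}. The delicate point is precisely this double use of normality: without it, neither the inner $\bar kH$ nor the outer $Hg$ would collapse to a single coset, and the two mirror factorizations could not be forced to agree.
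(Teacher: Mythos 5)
Your proof is correct and follows the same route as the paper, whose proof consists precisely of the instruction to apply the recursive composition rule \eqref{definicion_ribb_recursiva} and the product relations \eqref{producto_ribbons} to the conclusions \eqref{lema_bloques_dual} and \eqref{lema_bloques_direct} of Lemma \ref{lema_bloques}; you have simply filled in the details (the upgrade from $L_\rho^H=_\psi 1$ to $L_\rho^h=_\psi 1$, the orthogonality collapse of the $T_{\rho_1}^k$, and the use of normality of $H$ and of $(\rho_1,\rho_2)_\oslash$ to run the mirror computation). One cosmetic caveat: the ``computation from the right'' in (ii) must be read as left-multiplication by $T_{\rho_2}^{\bar g_0'}$, which passes through the $\rho_1$ factors by \eqref{conmutacion_overlap} — right-multiplying a relation $\cdot=_\psi\cdot$ would not be legitimate — and that is indeed what your displayed computation does.
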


\begin{proof}
Apply (\ref{definicion_ribb_recursiva}, \ref{producto_ribbons}) to (i) (\ref{lema_bloques_dual}) and (ii) (\ref{lema_bloques_direct}).
\end{proof}

For a region $R$ we will understand a collection of faces $f$. We
also consider dual regions $R^\ast$, collections of dual faces
$v^\ast$.

\subsection{Transformation rules}

\subsubsection{Deformations}

Before we define general ribbon deformations, such as the one in Fig.~\ref{figura_deformacion}, we have to introduce certain simpler ones which are easier to manage in proofs, as the one depicted in Fig.~\ref{figura_deformations}. Then simple deformations can be combined together to give the general ones. We say that the ribbons $\rho$, $\rho\prima$ form a simple
deformation, denoted $(\rho, \rho\prima)_{=}$, if \emph{(i)} they
are open, \emph{(ii)} they share no triangles, \emph{(iii)} $(\rho,
\rho\prima)_{\prec\succ}$ and \emph{(iv)} for any $e\in \Edual_\rho$
we have $\partial_1 e\in V_\rho\prima$. The dual of \emph{(iv)} is
automatically true: for any $e \in \Edirect_{\rho\prima}$ we have
$f\in F_\rho$ for $f^\ast=\partial_1 e^\ast$. We will use the
notation $F_{\rho,\rho\prima}=F_\rho-\sset{f_{s_0},f_{s_1}}$ and
$V_{\rho,\rho\prima}=V_{\rho\prima}-\sset{v_{s_0},v_{s_1}}$, where
$s_i=\partial_i\rho=\partial_i\rho\prima$.

Let $R=(R_1,R_2^\ast)$ with $R_1$ a region and $R_2^\ast$ a region
of the dual lattice. We introduce the relation between ribbons
$\simeq_R$ as the minimal equivalence relation such that
$\rho_1\prima\simeq_R \rho_1\prima$ if the following conditions are
all true: $\rho_1=\rho_2\rho\rho_3$,
$\rho_1\prima=\rho_2\rho\prima\rho_3$, $(\rho, \rho\prima)_{=}$,
$F_{\rho,\rho\prima}\subset R_1$ and $V^\ast_{\rho,
\rho\prima}\subset R_2^\ast$. Thus, two ribbons are equivalent in the sense of $\simeq_R$ if they can be transformed one into the other through simple deformations within $R$. Given a state $\ket\psi\in \Hilb_G$
and subgroups $H,H\prima\subset G$, $H$ normal, we define
$R_\psi^{H.H\prima}=(R_1,R_2^\ast)$ with $R_1$ the region such that
$f \in R_1$ iff $B_f^H=_\psi 1$ and $R^\ast_2$ the dual region such
that $v^\ast\in R_2^\ast$ iff $A_v^{H\prima}=_\psi 1$. Then we write
$\simeq_\psi^{HH\prima}$ for $\simeq_{R_\psi^{HH\prima}}$.
\begin{prop}\label{thm_ribbon_deformation}
Let $\ket\psi\in \Hilb_G$ and $H, H\prima\subset G$ normal subgroups
with $hh\prima =h\prima h$ for any $h\in H$, $h\prima\in H\prima$.
If $\rho$, $\rho\prima$ are ribbons with $\rho
\simeq_\psi^{H,H\prima} \rho\prima$ then
\begin{equation}\label{teorema_ribbon_deformation}
F_{\rho}^{h\prima,S}=_\psi F_{\rho\prima}^{h\prima,S} ,
\end{equation}
where $h\prima\in H\prima$, $S\in G/H$.
\end{prop}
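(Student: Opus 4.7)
The plan is a two-step reduction: first from a general deformation to a single simple deformation, and then within a simple deformation to an application of the block lemmas of the previous subsection. Since $\simeq_\psi^{H,H\prima}$ is by definition the smallest equivalence relation generated by simple deformations inside the region $R_\psi^{H,H\prima}$, transitivity of $=_\psi$ reduces the claim to the case $\rho=\rho_2\sigma\rho_3$, $\rho\prima=\rho_2\sigma\prima\rho_3$ with $(\sigma,\sigma\prima)_{=}$. Unfolding the gluing recursion \eqref{definicion_ribb_recursiva} and using the commutation identities \eqref{conmutacion_overlap}, \eqref{conmutacion_left_joint} and \eqref{conmutacion_right_joint} to cancel the identical factors coming from $\rho_2$ and $\rho_3$ on both sides, the task becomes $F_\sigma^{h\prima,S}=_\psi F_{\sigma\prima}^{h\prima,S}$.

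For the simple-deformation step, I would exploit that condition (iv) together with its dual statement forces the dual edges of $\sigma$ and $\sigma\prima$ to coincide as sets of unoriented edges but with opposite orientations, i.e.\ $\Edual_\sigma=\Edualcomp_{\sigma\prima}$. Correspondingly, the dual triangles of $\sigma$ and of $\sigma\prima$ come in complementary pairs, and one can assemble the two ribbons into a nice closed strip $\eta$ that is simultaneously a dual block enclosing the vertices $V_{\sigma,\sigma\prima}\subset R^\ast_2$ and carries a direct-block substructure for the faces $F_{\sigma,\sigma\prima}\subset R_1$. The hypothesis then supplies $A_v^{H\prima}=_\psi 1$ throughout $V_\eta$ and $B_f^H=_\psi 1$ throughout $F_\eta$. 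Corollary~\ref{cor_bloques}(ii) applied to $\eta$ yields $T_\sigma^S=_\psi T_{\sigma\prima}^S$ for every $S\in G/H$, and Corollary~\ref{cor_bloques}(i) yields $L_\sigma^{h\prima}=_\psi\sum_{g\in G}T_\sigma^g L_{\sigma\prima}^{\bar g\bar{h\prima}g}$ for every $h\prima\in H\prima$.

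To combine these identities into the desired one, I would multiply the dual-block equation on the right by $T_\sigma^S$. By \eqref{producto_ribbons} the product $T_\sigma^g T_\sigma^S$ vanishes unless $g$ lies in the coset $S=sH$, and by \eqref{conmutacion_overlap} $T_\sigma^S$ commutes with $L_{\sigma\prima}$ on their non-overlapping middle portions. Writing $g=sh$ with $h\in H$, the hypothesis $hh\prima=h\prima h$ for $h\in H$, $h\prima\in H\prima$ enters crucially: since $H\prima$ is normal, $\bar s\bar{h\prima}s\in H\prima$, and commutativity with every $h\in H$ gives $\bar{(sh)}\bar{h\prima}(sh)=\bar s\bar{h\prima}s$, independent of $h$. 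The $h$-sum therefore recombines into $T_{\sigma\prima}^S$, producing $L_\sigma^{h\prima}T_\sigma^S=_\psi L_{\sigma\prima}^{\bar s\bar{h\prima}s}T_{\sigma\prima}^S$. The main obstacle is the final step, namely eliminating the residual conjugation $h\prima\mapsto\bar s\bar{h\prima}s$; this is absorbed by running the symmetric block argument at the opposite endpoint of the $(\prec\succ)$ joint (which produces the inverse conjugation on $L_{\sigma\prima}^{h\prima}$), or equivalently by applying the same reduction to the trivial simple deformation of $\sigma\prima$ against itself, which forces $L_{\sigma\prima}^{\bar s\bar{h\prima}s}T_{\sigma\prima}^S=_\psi L_{\sigma\prima}^{h\prima}T_{\sigma\prima}^S$. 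Combining yields $F_\sigma^{h\prima,S}=_\psi F_{\sigma\prima}^{h\prima,S}$, completing the reduction.
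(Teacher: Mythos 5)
Your first reduction (to a single simple deformation via the gluing formula \eqref{definicion_ribb_recursiva}) is correct and is exactly what the paper does. The simple\mbox{-}deformation step, however, rests on a false geometric premise. For $(\rho,\rho\prima)_{=}$ it is \emph{not} true that $\Edual_\rho=\Edualcomp_{\rho\prima}$: condition \emph{(iv)} only says that each edge in $\Edual_\rho$ is a ``rung'' ending on a vertex of $\rho\prima$; it does not make $\rho$ and $\rho\prima$ dual complementary ribbons of one another, and in fact the two ribbons are not even composable into a single nice closed strip $\eta$ (they run in the \emph{same} direction between the same endpoints). The correct structure, which the paper's proof uses, is mediated by auxiliary \emph{reversed} ribbons: writing $\rho=\tau_1\rho_1\tau_1\prima$, there is a ribbon $\rho_2$ with $(\rho_2,\rho\prima)_\bigtriangledown$ such that $\rho_1\rho_2$ is a direct block, and a ribbon $\rho_2\primas$ with $(\rho,\rho_2\primas)_\bigtriangleup$ completing a dual block with the middle of $\rho\prima$. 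Corollary \ref{cor_bloques} then relates $\rho$ to these \emph{complements}, and Lemma \ref{lema_inversion} is indispensable to convert the complements back into $\rho\prima$ itself. You never invoke Lemma \ref{lema_inversion}, so your two identities $T_\sigma^S=_\psi T_{\sigma\prima}^S$ and $L_\sigma^{h\prima}=_\psi\sum_g T_\sigma^g L_{\sigma\prima}^{\inver g\spc\inver h\prima g}$ are not justified by the corollary you cite (which would instead produce $T_{\rho_2}^{\inver S}$ and $L_{\rho_2}^{\inver g\spc\inver h\prima g}$ on the right-hand sides).

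This omission is precisely what generates the ``main obstacle'' you flag at the end, and your proposed fixes do not close it. Even granting your two identities, the combination gives $F_\sigma^{h\prima,S}=_\psi F_{\sigma\prima}^{\inver s\spc \inver h\prima s,\spc S}$ — the group label comes out both \emph{inverted} and conjugated, not merely conjugated as you state. A ribbon cannot form a simple deformation with itself (condition \emph{(ii)} forbids shared triangles), so there is no ``trivial simple deformation of $\sigma\prima$ against itself'' to appeal to, and the ``symmetric argument at the opposite endpoint'' is neither worked out nor capable of removing the inversion. In the paper's argument no such residue appears: Lemma \ref{lema_inversion} carries its own inversion and conjugation, which cancel against those produced by Corollary \ref{cor_bloques}, yielding $T_\rho^S=_\psi T_{\rho_2}^{\inver S}=T_{\rho\prima}^S$ and $L_\rho^{h\prima}=_\psi L_{\rho\prima}^{h\prima}$ directly. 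As written, your argument establishes at most $F_\rho^{h\prima,S}=_\psi F_{\rho\prima}^{k,S}$ with $k$ conjugate to the inverse of $h\prima$, which is strictly weaker than \eqref{teorema_ribbon_deformation}.
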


\begin{proof}
Using \eqref{definicion_ribb_recursiva} for
$\rho=\rho_1\rho_2\rho_3$ we can write
\begin{equation}
F_\rho^{h\prima,g H} = \sum_{l,m\in G}
F_{\rho_1}^{h\prima,l}F_{\rho_2}^{\inver l h\prima l,\inver lg m H}
F_{\rho_3}^{\inver m \inver g h\prima g m, \inver m},
\end{equation}
and thus it is enough to consider simple deformations $(\rho,
\rho\prima)_{=}$. In that case, we can set
$\rho=\tau_1\rho_1\tau_1\prima$ with $\tau_1,\tau_1\prima$ dual
triangles and there exists a ribbon $\rho_2$ such that
$\rho_1\rho_2$ is a block and the conditions of lemma
\ref{lema_bloques} (ii) are satisfied, so that
\eqref{coroloario_bloques_direct} applies. But
$(\rho_2,\rho\prima)_\bigtriangledown$, so that using
\eqref{lema_inversion_direct} we get $T_\rho^S= T_{\rho_1}^S=_\psi
T_{\rho_2}^{\inver S}= T_{\rho\prima}^S$. We can write
$\rho_2=\inver\tau_2\prima\rho_2\prima\inver\tau_2$ and
$\rho\prima=\tau_2\rho_3\tau_2\prima$ with
$\tau_2,\tau_2\prima$ direct triangles, and set
$\rho_2\primas = \comp \tau_1 \rho_2\prima\comp\tau_1\prima$. Then
$(\rho,\rho_2\primas)_{\bigtriangleup}$ and
\eqref{lema_inversion_dual} applies. Also, $\rho_2\primas\rho_3$ is
a block and the conditions of lemma \ref{lema_bloques} (i) are
satisfied, so that \eqref{coroloario_bloques_dual} applies (for
$H\prima$). Putting everything together we get
$L_\rho^{h\prima}=_\psi\sum_{k,l\in G}F_{\rho_2\primas}^{l\inver k
\inver h\prima k \inver l} T_{\tau_2}^k=_\psi\sum_{k\in G}
L_{\rho_2}^{\inver k h\prima k} T_{\tau_2}^k=_\psi
L_{\rho\prima}^{h\prima}$, where we have used also (\ref{definicion_ribb_recursiva}, \ref{unidades}).
\end{proof}

\subsubsection{Extensions, contractions and rotations}

We also want to consider deformations in which the ends of ribbons
are not fixed. Let $Q=(Q_1,Q_2^\ast)$ with $Q_1, Q_2\subset
\alledges$. We introduce the relation between ribbons $\asymp_{Q}$
as the minimal equivalence relation such that $\rho\asymp_{Q}
\rho\prima$ if $\rho=\rho_1\rho\prima\rho_2$, $\Edirect_{\rho_i}\subset
Q_1$ and $\Edual_{\rho_i}\subset Q_2$. Thus, two ribbons are equivalent in the sense of $\asymp_Q$ if they can be transformed one into the other through extensions or contractions within $Q$. We also introduce an
equivalence relation $\eqcirc_Q$ for closed ribbons, the minimal
such that $\sigma\eqcirc_Q \sigma\prima$ if
$(\sigma,\sigma\prima)_\circ$, $\Edirect_{\sigma \triangleright
\sigma\prima}\subset Q_1$ and $\Edual_{\sigma \triangleright
\sigma\prima}\subset Q_2$. Thus, two closed ribbons are equivalent in the sense of $\eqcirc_Q$ if they can be transformed one into the other through rotations within $Q$. Given a state $\ket\psi\in \Hilb_G$ and
subgroups $H,H\prima\subset G$, $H\prima$ normal, we set
$Q_\psi^{HH\prima}:=(Q_1,Q^\ast_2)$ with $Q_1$ the collection of
edges $e$ with $T_e^H\ket\psi=\ket\psi$ and $Q_2$ the collection of
edges $e\prima$ with $L_{e\prima}^{H\prima}\ket\psi=\ket\psi$. Then
we write $\asymp_\psi^{H,H\prima}$ for $\asymp_{Q_\psi^{HH\prima}}$
and similarly for $\eqcirc$.

\begin{prop}
Let $\ket\psi\in \Hilb_G$ and $H, H\prima\subset G$ subgroups with
$H\prima$ normal.

\noindent (i) If $\rho$, $\rho\prima$ are ribbons with $\rho
\asymp_\psi^{HH\prima} \rho\prima$ then
\begin{equation}\label{teorema_ribbon_extension} %
\sum_{k\in H}F_{\rho}^{k h\prima \inver k, k g H}
=_\psi\sum_{k\in H}F_{\rho\prima}^{k h\prima \inver k, k g H},
\end{equation}
where $h\prima\in H\prima$, $g\in G$.

\noindent (ii) If $\sigma$, $\sigma\prima$ are closed ribbons with
$\sigma \eqcirc_\psi^{HH\prima} \sigma\prima$ then
\begin{equation}\label{teorema_ribbon_rotation}
\sum_{k \in H}F_{\sigma}^{k h \inver k, k g \inver k}
=_\psi \sum_{k \in H}F_{\sigma\prima}^{k h \inver k, k g \inver
k}.
\end{equation}
where $h,g\in G$, $hg\inver h \inver g\in H\prima$.
\end{prop}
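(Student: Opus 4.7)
The plan is to reduce both assertions to the case where $\rho$ (resp.\ $\sigma$) differs from $\rho\prima$ (resp.\ $\sigma\prima$) by a single added (resp.\ cycled) triangle $\tau$ whose edge lies in the allowed region. This reduction is legitimate because $\asymp_\psi^{HH\prima}$ and $\eqcirc_\psi^{HH\prima}$ are minimal equivalence relations, and any extension ribbon or rotation ribbon can be built one triangle at a time while keeping the intermediate objects ribbons. The proof then rests on the recursive composition formula \eqref{definicion_ribb_recursiva}, simplified via \eqref{simplicidad_directo_dual}, together with two basic facts: for a direct triangle $\tau$ with $T_{e_\tau}^H\ket\psi=\ket\psi$ one has $T_\tau^l\ket\psi=0$ whenever $l\nin H$ (since $T_{e_\tau}^H=\sum_{l\in H}T_\tau^l$); and for a dual triangle $\tau$ with $L_{e_\tau}^{H\prima}\ket\psi=\ket\psi$ one has $L_\tau^{h\primas}\ket\psi=\ket\psi$ for every $h\primas\in H\prima$.

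For part~(i) there are four subcases according to whether $\tau$ is direct or dual and whether it is attached at the beginning or the end of $\rho\prima$. Expanding $F_\rho^{kh\prima\bar k, kgH}$ by \eqref{definicion_ribb_recursiva} and commuting the resulting triangle operator past $F_{\rho\prima}$ (disjoint supports, by the ribbon condition), one obtains for direct $\tau$ at the beginning the expression $\sum_{l\in H} F_{\rho\prima}^{\bar l kh\prima\bar k l,\bar l kgH}T_\tau^l\ket\psi$; the substitution $k=lm$, legitimate since $H$ is a subgroup, together with $\sum_{l\in H}T_\tau^l\ket\psi=\ket\psi$, collapses this to $\sum_{m\in H}F_{\rho\prima}^{mh\prima\bar m,mgH}\ket\psi$. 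For dual $\tau$ at the beginning, $F_\rho^{h\prima,g}=L_\tau^{h\prima}F_{\rho\prima}^{h\prima,g}$, and $kh\prima\bar k\in H\prima$ (by normality of $H\prima$) gives $L_\tau^{kh\prima\bar k}\ket\psi=\ket\psi$. The end-attachment cases are symmetric: for direct $\tau$ at the end, the coset structure of the second argument $kgH$ forces $\bar l kg\in H$, i.e.\ $l\in kgH$, playing the same role as the projector sum above; for dual $\tau$ at the end, $F_\rho^{h\prima,g}=F_{\rho\prima}^{h\prima,g}L_\tau^{\bar g h\prima g}$, and $k\bar g h\prima g\bar k\in H\prima$ again by normality.

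For part~(ii), $\tau$ is removed from one end of the closed ribbon and reattached at the other. In the direct case, $\sigma=\tau\rho_2$ and $\sigma\prima=\rho_2\tau$ yield, after moving $T_\tau$ past $F_{\rho_2}$ and restricting to $l\in H$, the two expressions $\sum_{k,l\in H}F_{\rho_2}^{\bar l kh\bar k l,\bar l kg\bar k}T_\tau^l\ket\psi$ and $\sum_{k,j\in H}F_{\rho_2}^{kh\bar k,kg\bar k\bar j}T_\tau^j\ket\psi$ (in the latter, the parametrisation $j=\bar l kg\bar k$ forces $j\in H$). The substitution $k=lm$ in the first sum matches these term by term. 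In the dual case, both $F_\sigma^{kh\bar k,kg\bar k}$ and $F_{\sigma\prima}^{kh\bar k,kg\bar k}$ reduce to $F_{\rho_2}^{kh\bar k,kg\bar k}$ multiplied by a single $L_\tau$ factor, namely $L_\tau^{kh\bar k}$ for $\sigma$ and $L_\tau^{k\bar g h g\bar k}$ for $\sigma\prima$.

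The main obstacle lies in this last dual rotation case. Equality on $\ket\psi$ requires $L_\tau^a\ket\psi=L_\tau^b\ket\psi$ with $a=kh\bar k$ and $b=k\bar g h g\bar k$, which holds provided $a\bar b=k(h\bar g\bar h g)\bar k\in H\prima$. Here the commutator hypothesis $hg\bar h\bar g\in H\prima$ is essential: it expresses that $h$ and $g$ commute in $G/H\prima$, hence so do $h$ and $\bar g$, giving $h\bar g\bar h g\in H\prima$. Normality of $H\prima$ then yields $k(h\bar g\bar h g)\bar k\in H\prima$, completing the argument.
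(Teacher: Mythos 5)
Your proof is correct and follows essentially the same route as the paper: the paper's own proof is a two-line sketch that reduces to single-triangle extensions via \eqref{definicion_ribb_recursiva} and to single-triangle rotations via \eqref{efecto_rotacion}, and your case analysis simply fills in those computations (your dual rotation case is exactly what \eqref{efecto_rotacion} packages, since $F_\tau^{h\bar g\bar h g,l}=\delta_{1,l}L_\tau^{h\bar g\bar hg}$ there). The only blemish is a harmless imprecision in the dual-triangle-at-the-end case of part (i), where the exponent of $L_\tau$ is $\overline{(kgm)}(kh\prima\bar k)(kgm)$ rather than the element you wrote, but it is still a conjugate of $h\prima$ and so lies in $H\prima$ by normality, which is all the argument needs.
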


\begin{proof}
\noindent (i) It is enough to consider $\rho = \rho\prima\tau$ or
$\rho = \tau\rho\prima$ with $\tau$ a triangle and then apply
\eqref{definicion_ribb_recursiva}.

\noindent (ii) It is enough to consider that
$\sigma\triangleright\sigma\prima = \tau$ and then apply
\eqref{efecto_rotacion}.
\end{proof}

\subsubsection{Inversions}

We finally consider other kind of ribbon transformations in which
basically ribbons are reversed. As in the other cases, we start
introducing suitable relations. For open ribbons $\rho$,
$\rho\prima$ and $R=(R_1,R_2^\ast)$ as above, we write $\rho
\risingdotseq_R \rho\prima$ if
$\partial_0\rho=\partial_1\rho\prima$,
$\partial_1\rho=\partial_0\rho\prima$ and either \emph{(i.a)}
$(\rho,\rho\prima)_{\bigtriangledown}$ and $V_\rho^\ast\subset
R_2^\ast$ or \emph{(i.b)} $(\rho,\rho\prima)_{\bigtriangleup}$ and
$F_\rho\subset R_1$. For closed ribbons $\sigma$, $\sigma\prima$ and
a triangle $\tau$, we write $\sigma \doteqdot_{R,\tau} \sigma\prima$
if either \emph{(ii.a)} $(\sigma,\sigma\prima)_{\bigtriangledown}$,
$\tau$ is dual, $(\sigma,\tau,\sigma\prima)_\triangleright$ and
$V_\rho^\ast\subset R_2^\ast$ or \emph{(ii.b)}
$(\sigma,\sigma\prima)_{\bigtriangleup}$, $\tau$ is direct,
$(\sigma,\tau,\sigma\prima)_\triangleleft$ and $F_\rho \subset R_1$.
For $\ket\psi\in\Hilb_G$, we write $\risingdotseq_\psi^{HH\prima}$
for $\risingdotseq_{R_\psi^{HH\prima}}$ and also
$\doteqdot_\psi^{HH\prima}$ for $\doteqdot_{R_\psi^{HH\prima},\tau}$
if either $\tau$ is dual and $L_\tau^{H} =_\psi 1$ or $\tau$ is
direct and $T_\tau^{H\prima} =_\psi 1$.

\begin{prop}
Let $\ket\psi\in \Hilb_G$ and $H, H\prima \subset G$ normal
subgroups with $hh\prima =h\prima h$ for any $h\in H$, $h\prima\in
H\prima$.

\noindent (i) If $\rho$, $\rho\prima$ are open ribbons with $\rho
\risingdotseq_\psi^{H,H\prima} \rho\prima$ then
\begin{equation}\label{teorema_ribbon_inversion_open}
F_{\rho}^{h\prima, S}=_{\psi} F_{\rho\prima}^{\inver s \inver
h\prima s, \inver S},
\end{equation}
where $h\prima\in H\prima$, $s\in S\in G/H$.

\noindent (ii) If $\sigma$, $\sigma\prima$ are closed ribbons with
$\sigma \doteqdot_\psi^{HH\prima} \sigma\prima$ then
\begin{equation}\label{teorema_ribbon_inversion_closed}
\sum_{k\in H\prima} F_{\rho}^{\inver k h\prima k, \inver k S
k}=_{\psi} \sum_{k\in H\prima} F_{\rho\prima}^{\inver k \inver s
\inver h\prima s k, \inver k \inver S k},
\end{equation}
where $h\prima\in H\prima$ and $s\in S\in G/H$ with $sg\inver
s\inver g\in H$.
\end{prop}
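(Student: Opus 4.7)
My plan is to reduce both (i) and (ii) to the elementary inversion identities of Lemma~\ref{lema_inversion} combined with the block manipulations of Lemma~\ref{lema_bloques} and Corollary~\ref{cor_bloques}, glued along the ribbon via the composition rule \eqref{definicion_ribb_recursiva}. As a preliminary remark, since $H\prima$ is normal and $hh\prima = h\prima h$ for all $h\in H$, $h\prima\in H\prima$, the element $\inver s\inver h\prima s$ depends only on the coset $S=sH$, so the right-hand sides of \eqref{teorema_ribbon_inversion_open} and \eqref{teorema_ribbon_inversion_closed} are well defined.

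For part (i), I would factor $F_\rho^{h\prima,S}=L_\rho^{h\prima}T_\rho^S$ using \eqref{F_segun_TL} and treat the two factors separately. In case (i.a), with $(\rho,\rho\prima)_\bigtriangledown$, the concatenation $\rho\rho\prima$ is a direct block, and \eqref{lema_inversion_direct} extended $\C$-linearly to the coset $S$ gives $T_\rho^S=_\psi T_{\rho\prima}^{\inver S}$ with no further hypothesis on $\psi$. For the $L$-factor, the vertex condition $V_\rho^\ast\subset R_2^\ast$ (i.e.\ $A_v^{H\prima}=_\psi 1$ for each interior vertex of $\rho$) lets me iteratively apply Corollary~\ref{cor_bloques}(i) to elementary dual sub-blocks bridging each dual triangle of $\rho$ with its counterpart in $\rho\prima$; this moves $L_\rho^{h\prima}$ onto $\rho\prima$ while introducing a conjugation by the intervening direct-edge holonomy, which the adjacent projector $T_\rho^S$ collapses to a single representative $s\in S$, producing $L_{\rho\prima}^{\inver s\inver h\prima s}T_{\rho\prima}^{\inver S}$. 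Case (i.b) is handled dually, using \eqref{lema_inversion_dual} with the face condition $F_\rho\subset R_1$ for the $L$-factor and Corollary~\ref{cor_bloques}(ii) for the $T$-factor.

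Part (ii) then reduces to part (i): the bridging triangle $\tau$ in the definition of $\doteqdot_\psi^{HH\prima}$ lets me cut $\sigma$ and $\sigma\prima$ at their common base site and splice in $\tau$ to create a pair of open ribbons with swapped endpoints satisfying the hypotheses of (i); the flatness hypothesis on $\tau$ ($L_\tau^H=_\psi 1$ or $T_\tau^{H\prima}=_\psi 1$) ensures that the inserted triangle contributes trivially. The sum $\sum_{k\in H\prima}$ on both sides absorbs the basepoint ambiguity intrinsic to closed ribbons: a rotation at the base site conjugates the labels, cf.\ \eqref{efecto_rotacion}, and averaging over $k\in H\prima$ yields a rotation-invariant identity.

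The step I anticipate as the main technical obstacle is the $L$-inversion in case (i.a) (and its dual in (i.b)): careful bookkeeping is needed to verify that the coset projection $T_\rho^S$ combined with the iterative block rewrites produces exactly $L_{\rho\prima}^{\inver s\inver h\prima s}$, and to pinpoint where the commutativity hypothesis between $H$ and $H\prima$ is indispensable in making the resulting conjugation depend only on the coset $S$ rather than on a specific representative $s\in S$.
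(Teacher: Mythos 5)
Your plan for part (i) is essentially the paper's: factor $F_\rho^{h\prima,S}=L_\rho^{h\prima}T_\rho^S$, invert the $T$-factor with Lemma~\ref{lema_inversion}(ii) and transport the $L$-factor with Corollary~\ref{cor_bloques}(i) in case (i.a), and dually in case (i.b); the collapse of the conjugation to the single representative $\inver s\inver h\prima s$ via the adjacent projector $T_\rho^S$ and the commutativity of $H$ with $H\prima$ is exactly the mechanism the paper relies on. (Two small remarks: with the paper's conventions the closed strip $\rho\rho\prima$ in case (i.a) is a \emph{dual} block, not a direct one, since it is the direct edges that pair into complements; and the genuinely delicate bookkeeping sits in case (i.b), where \eqref{lema_inversion_dual} only applies after writing $\rho=\rho_1\prima\rho_1\rho_1\primas$ with direct prefix and suffix and exhibiting the bridging direct triangle $\tau$ with $\beta_s=\tau\rho_2\primas\rho_1\prima$ — not in the $L$-inversion of case (i.a), which is a one-shot application of the corollary.)

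Part (ii) is where there is a genuine gap. You propose to cut $\sigma$ and $\sigma\prima$ at the joint triangle $\tau$ and invoke part (i) on the resulting "open" ribbons, but these pieces are never open in the paper's sense: since $\tau$ is dual in case (ii.a) (direct in (ii.b)), the two base sites $\partial\sigma$ and $\partial\sigma\prima$ share the same vertex (respectively the same face), so the condition $v_{s_0}\neq v_{s_1}$, $f_{s_0}\neq f_{s_1}$ defining open ribbons fails and the relation $\risingdotseq_\psi^{HH\prima}$ is simply not available. Relatedly, the $\sum_{k\in H\prima}$ in \eqref{teorema_ribbon_inversion_closed} does not arise from "averaging over rotations": \eqref{efecto_rotacion} conjugates labels by holonomies carried by $T$-projectors on the rotated-away piece, not by a free sum over $H\prima$, and the joint triangle does not "contribute trivially" — in the paper's argument the pair $\tau,\comp\tau$ enters essentially, via Lemma~\ref{lema_inversion}(i) applied to that pair inside the block $\sigma_0=\comp\tau\rho\tau\rho\prima$, to produce the key relation $L_\sigma^{h\prima}=_\psi\sum_{k\in G}T_\sigma^kL_{\sigma\prima}^{\inver k\inver h\prima k}$, which combined with $T_\sigma^g=_\psi T_{\sigma\prima}^{\inver g}$ and the $H\prima$-symmetrized $T$-projectors yields \eqref{teorema_ribbon_inversion_closed}. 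So the closed case needs its own computation along these lines rather than a reduction to the open case.
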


\begin{proof}
\noindent \emph{(i.a)} This case follows from
(\ref{lema_inversion_direct},\ref{coroloario_bloques_dual}).

\noindent \emph{(i.b)} There exists ribbons $\rho_i$ and direct
ribbons $\rho_i\prima,\rho_i\primas$, $i=1,2$,  such that
$\rho=\rho_1\prima\rho_1\rho_1\primas$ and
$\rho=\rho_2\prima\rho_2\rho_2\primas$. Then there exists a direct
triangle $\tau$ so that \eqref{lema_inversion_dual} applies to $\rho_1,\rho_2$. Moreover, for $s=\partial_0 \rho_1$
we have $\beta_{s} = \tau \rho_2\primas \rho_1\prima$ and
$B_s^{H}=_\psi 1$. Then using also \eqref{definicion_ribb_recursiva}
we have $L_{\rho}^{h\prima}=_{\psi} \sum_{m\in G}T_{\rho_1\prima}^m
L_{\rho_1}^{\inver m h\prima m}=_{\psi} \sum_{k,l\in G}
F_{\rho_2}^{l\inver k \inver h \prima k \inver l,l}
T_{\rho_1\prima\tau}^k =_{\psi}\sum_{l\in
G}F_{\rho_2\rho_2\primas}^{l \inver h \prima \inver l,l}
T_{\rho_1\prima\tau\rho_2\primas}^H =_{\psi}\sum_{l\in
G}F_{\rho\prima}^{l \inver h \prima \inver l,l} $. Together with
\eqref{coroloario_bloques_dual}, this gives
\eqref{teorema_ribbon_inversion_open}.

\noindent \emph{(ii.a)} From \eqref{lema_inversion_direct} we have
$T_\sigma^g=_\psi T_{\sigma\prima}^{\inver g}$. We can set
$\sigma=\tau\prima\rho$, $\sigma\prima=\rho\prima\comp\tau\prima$
with $\tau\prima$ a direct triangle. The strip $\sigma_0=\comp \tau
\rho\tau\rho\prima$ is a nice closed strip, and indeed a block. Then
(\ref{definicion_ribb_recursiva},\ref{coroloario_bloques_dual}) give
$L_{\sigma}^{h\prima}=_\psi \sum_{k\in G} T_{\sigma}^k
L_{\tau\rho\prima\comp\tau}^{\inver k \inver h\prima k}=_\psi
\sum_{k,l\in G} T_{\sigma}^k L_{\tau}^{\inver k \inver h\prima k }
F_{\rho\prima}^{\inver k \inver h\prima k ,l} L_{\comp\tau}^{\inver
l\inver k \inver h\prima k l}$. But \eqref{lema_inversion_dual}
implies $L_\tau^g=\sum_{k\in G} T_{\tau\prima}^k L_{\comp
\tau}^{\inver k \inver g k}$ for any $g\in G$, and then
$L_{\sigma}^{h\prima}=_\psi \sum_{k\in G} T_{\sigma}^k
L_{\sigma\prima}^{\inver k \inver h\prima k}$. The result follows.

\noindent \emph{(ii.b)} From \eqref{lema_inversion_dual} we have
$L_{\rho}^{h\prima}=_\psi \sum_{k,l\in G} F_{\rho\prima}^{l\inver k
\inver h\prima k \inver l,l} T_\tau^k$. We can set
$\sigma=\tau\prima\rho$, $\sigma\prima=\rho\prima\comp\tau\prima$
with $\tau\prima$ a dual triangle. The strip $\sigma_0=\comp \tau
\rho\tau\rho\prima$ is a nice closed strip, and indeed a block. Then
(\ref{definicion_ribb_recursiva},\ref{coroloario_bloques_direct})
give $T_{\sigma}^{gH}=_\psi T_{\rho}^{gH}=_\psi
T_{\tau\rho\prima\inver\tau}^{\inver gH} =_\psi \sum_{k\in H\prima}
T_{\tau}^k T_{\sigma\prima}^{\inver k\inver g k H}$. Using \eqref{F_segun_TL} the result
follows.
\end{proof}

\subsection{Deformations in $\Ribb_\rho$, $\Ribbclosed_\sigma$, $\Ribbopen_\rho$ and
$\Ribbclosed\prima_\sigma$.}\label{apendice_deformaciones_algebras}

We are now in position to discuss the transformation properties of
the ribbon operator algebras introduced in appendix
\ref{appendix_B}. We distinguish three cases, which depend on the
values of the subgroups $N,M$ that label the Hamiltonian
\eqref{Hamiltoniano_NM}.

\subsubsection{The original Kitaev model: $N=1$, $M=G$.}

In this case we are interested in the algebras $\Ribb_\rho$ and
$\Ribbclosed_\sigma$. As for the first, open ribbons can be deformed
so that if $\rho \simeq_\psi^{1G} \rho\prima$ then
$F_\rho^{h,g}=_\psi F_{\rho\prima}^{h,g}$. That is, the action of
$\Ribb_\rho$ is invariant as long as ribbons are deformed without
crossing any excitation. They can also be reversed: if $\rho
\risingdotseq_\psi^{1G} \rho\prima$ then $F_\rho^{h,g} =_\psi
F_{\rho\prima}^{\inver g \inver h g,\inver g}$. Regarding closed
ribbons, the action of $\Ribbclosed_\sigma$ is invariant under
deformations ($\simeq_\psi^{1G}$) or rotations
($\eqcirc_\psi^{G1}$). Closed ribbon inversions give charge
inversion: if $\sigma \doteqdot_\psi^{1G} \sigma\prima$ then
$K_\sigma^{RC} =_\psi K_{\sigma\prima}^{\inver R^C \inver C}$, where
$R^C:= R^{g}$ (as defined in section
\ref{apendice_induced_representations}) with $r_{\inver C}= \inver g
\inver r_C g$ for some $g\in G$.

\subsubsection{String tension: $N$ and $M$ normal, $N$ central in $M$.}

In this case we are interested in the algebras $\Ribbopen_\rho$,
which gives the domain wall fluxes, and $\Ribbclosed_\sigma\prima$,
which gives the charges. The action of $\Ribbopen_\rho$ is invariant
under deformations which do not cross confined excitations
($\simeq_\psi^{MN}$), even if ends change as long as they do not
cross a domain wall ($\asymp_\psi^{MN}$). Inversions
($\risingdotseq_\psi^{MN}$) give domain flux inversion: $(\chi,t)$
goes to $(\inver \chi^{t}, \inver t)$. The action of
$\Ribbclosed_\sigma\prima$ is invariant under deformations
($\simeq_\psi^{NM}$) or rotations in which the end of $\sigma$ does
not cross domain walls ($\eqcirc_\psi^{MN}$). Charge inversion
($\doteqdot_\psi^{NM}$) is as follows: $(R,C)$ goes to $(\inver
R^C,\inver C)$ where $R^C:= R^{m}$ with $r_{\inver C}= \inver m
\inver r_C m$ for some $m\in M$.

\subsubsection{Domain walls: $N$ normal and abelian.}

In this case we are only interested in domain wall fluxes, that is,
in $\Ribbopen_\rho$. Its action is invariant under deformations
(those allowed by $\simeq_\psi^{NN}$), even if ends change as long
as they do not cross a domain wall ($\asymp_\psi^{MN}$). Domain wall
flux inversion ($\risingdotseq_\psi^{NN}$) is as follows: $(R,T)$
goes to $(\inver R^{T}, \inver T)$, where $R^{T}:=R^{r_T m}$ with
$r_{\inver T}M=\inver m\inver r_TM$ for some $m\in M$, so that
$\Norm {\inver T}=\inver m\inver r_T \Norm T r_T m$.

\subsection{Charge types}
\label{apendice_charge_types}

The previous results about closed ribbon transformations must be
complemented with the following one, which relates proper closed
ribbon operators with local vertex and face operators. Let $N,
M\subset G$ be normal subgroups in $G$ with $N$ central in $M$, and
define for $R\in\Irr{\Norm C\prima}$ and $C\in \Conj{G/N}$
\begin{equation}\label{definicion_D_RC}
D_s^{RC}:= \frac {n_R} {|\Norm C\prima|} \sum_{D} \sum_{q\in Q_C}
\sum_{d\in D} \inver \chi_R(d) A_s^{qd\inver q}B_s^{q r_C \inver q}
\end{equation}
where $D$ runs over $\Conj{\Norm C\prima}$.

\begin{prop}\label{prop_local_global_closed}
Let $s$ be a site, $\sigma$ a closed ribbon and $\tau$ a dual
triangle with $\beta_s \doteqdot_{V_\sigma,\tau} \sigma$. If
$\ket\psi\in\Hilb_G$ is such that $A_v^M=_\psi 1$ for any vertex
$v\neq v_s$ in $f_s$ and $L_{\tau}^{N} =_\psi 1$ then
\begin{equation}
K_\sigma^{RC} = D_s^{RC},
\end{equation}
where $R\in \Irr{\Norm C\prima}$, $C\in \Conj {G/N}$.
\end{prop}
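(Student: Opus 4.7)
The plan is to show $K_\sigma^{RC}\ket\psi = D_s^{RC}\ket\psi$, using the closed-ribbon inversion theorem \eqref{teorema_ribbon_inversion_closed} to collapse the global projector on $\sigma$ to a local operator at $s$. Intuitively, the hypotheses concentrate all possible excitations at the single site $s$, so the topological charge inside $\sigma$ must coincide with the local charge at $s$; the actual proof makes this algebraically precise.

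I would first expand $K_\sigma^{RC}$ via \eqref{base_ribbclosed_2} followed by \eqref{base_ribbclosed_prima}, obtaining a character-weighted sum of $F_\sigma^{qd\bar q,\,qr_C\bar q n}$ over $q\in Q_C$, $d\in\Norm C\prima$, $n\in N$. Next I would apply the inversion identity \eqref{teorema_ribbon_inversion_closed} with $H=N$, $H'=M$. Case (ii.a) of the relation $\doteqdot$ applies since $\tau$ is dual, and the hypotheses $L_\tau^N=_\psi 1$ and $A_v^M=_\psi 1$ on the neighboring vertices of $f_s$ promote $\beta_s\doteqdot_{V_\sigma,\tau}\sigma$ to $\beta_s\doteqdot_\psi^{NM}\sigma$. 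The required commutation $sg\bar s\bar g\in N$ holds automatically: choosing $s=qr_C\bar q$ and $g=qd\bar q$ yields $sg\bar s\bar g=q(r_C d\bar r_C\bar d)\bar q\in N$, since $d\in\Norm C\prima$ forces $dr_C\bar d\bar r_C\in N$ and $N$ is normal in $G$. The inversion identity then converts averaged $F_\sigma$ sums into averaged $F_{\beta_s}$ sums, which by \eqref{simplicidad_directo_dual} reduce to $B_s^{\bar g}$-type operators since $\beta_s$ is a direct closed ribbon.

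To produce the electric factors $A_s^{qd\bar q}$ appearing in \eqref{definicion_D_RC}, I would rewrite the conjugation-symmetrized sum $\sum_{k\in M}F_\sigma^{\bar k h k,\,\bar k g k}$ as $\sum_{k\in M}A_s^{\bar k}F_\sigma^{h,g}A_s^k$ using \eqref{conmutacion_extremos_closed}; after inversion and direct-ribbon simplification, the flux metamorphosis \eqref{conmutacion AB} transports the $k$-conjugation onto the $B_s$ factor, leaving a structure of the form $A_s^{\bar k}B_s^{\cdot}A_s^k$. The coset decomposition $M=\bigsqcup_{q\in Q_C}\Norm C\prima\,q$ then splits the $k$-sum into the $(q,d)$-sum of \eqref{definicion_D_RC}, while the character change of basis supplies the weight $\bar\chi_R(d)$ and the normalization $n_R/|\Norm C\prima|$. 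The hardest step will be this last combinatorial bookkeeping: checking that the $M$-summation correctly matches the $(q,d)\in Q_C\times\Norm C\prima$ indexing, that the sum $\sum_{n\in N}$ in $K_\sigma^{DC}$ is cleanly absorbed via $N\subset\Norm C\prima$, and that the overall prefactor comes out exactly.
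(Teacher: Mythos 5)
There is a genuine gap, and it sits exactly where you locate the ``hardest step'': the electric factors $A_s^{qd\bar q}$ of \eqref{definicion_D_RC} cannot be produced by the route you describe. Your plan is to write the $M$-symmetrized sum as $\sum_{k\in M}A_s^{\bar k}F_\sigma^{h,g}A_s^k$, invert the middle factor, and end with $\sum_k A_s^{\bar k}B_s^{\cdot}A_s^k$. Two things go wrong. First, \eqref{teorema_ribbon_inversion_closed} is an identity for the \emph{whole} symmetrized sum $\sum_{k\in H\prima}F_\sigma^{\bar k h\prima k,\bar k S k}$ on a suitable state; it does not let you replace $F_\sigma^{h,g}$ by $F_{\beta_s}^{\cdot,\cdot}$ term by term inside the sandwich $A_s^{\bar k}(\cdot)A_s^k$. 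Second, even if it did, $\beta_s$ is direct, so by \eqref{simplicidad_directo_dual} the inverted operator is a pure $T_{\beta_s}^{\cdot}=B_s^{\cdot}$ that has forgotten the electric label entirely, and then flux metamorphosis \eqref{conmutacion AB} collapses $\sum_k A_s^{\bar k}B_s^{g}A_s^k=\sum_k B_s^{\bar k g k}$ with no surviving $A_s$ at all. Since $D_s^{RC}$ genuinely contains $A_s^{qd\bar q}$ with $d$ weighted by $\bar\chi_R(d)$, your expression cannot equal it except in trivial cases.

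The paper's proof supplies the missing mechanism: the electric label is recovered from the \emph{flux at the second site} $s\prima$ of $\sigma$, the one with $v_{s\prima}=v_s$ but $f_{s\prima}\neq f_s$. One decomposes $\ket\psi=\sum_g B_{s\prima}^g\ket{\psi^g}$ with $\ket{\psi^g}=A_{v_s}^MB_{s\prima}^g\ket\psi$, commutes $F_\sigma^{h\prima,\cdot}$ past $B_{s\prima}^g$ using \eqref{conmutacion_extremos_closed} so that the electric label $h\prima$ survives as a conjugation of the flux at $s\prima$, applies the inversion on each $\ket{\psi^g}$, and only then reinterprets the shifted flux projector $B_{s\prima}^{\bar kh\prima kg\bar k\bar h\prima k}$ acting on the flux-$g$ sector as $A_s^{\bar kh\prima k}$. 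Note also that the auxiliary projection by $A_{v_s}^M$ is not cosmetic: the hypothesis only gives $A_v^M=_\psi 1$ for $v\neq v_s$, so the region condition needed for $\doteqdot_\psi^{NM}$ fails at $v_s$ for $\ket\psi$ itself and holds only for the states $\ket{\psi^g}$. Your checks of case (ii.a) and of the commutator condition $q(r_Cd\bar r_C\bar d)\bar q\in N$ are fine, and the coset bookkeeping $\sum_{m\in M}\sum_{n\in N}F_\sigma^{mh\bar m,mg\bar mn}\propto K_\sigma^{DC}$ is the right way to match \eqref{base_ribbclosed_prima} to the inversion identity, but without the $s\prima$ device the proof does not close.
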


\begin{proof}
Let $s\prima$ be the second site of $\sigma$, so that
$v_{s\prima}=v_s$ and $e_\tau$ does not belong to $f_{s\prima}$. The
states $\ket{\psi^g}:=A_v^M B_{s\prima}^g$, $g\in G$, are such that
$\beta_s \doteqdot_{\psi^g}^{NM} \sigma$. Then with the notation of
\eqref{teorema_ribbon_inversion_closed} we have $\sum_{k\in H\prima}
F_{\sigma}^{\inver k h\prima k, \inver k S k} B_{s\prima}^g =
\sum_{k\in H\prima} B^{\inver k h\prima k g \inver k \inver h\prima
k}_{s\prima} F_{\sigma}^{\inver k h\prima k, \inver k S k}=_{\psi^g}
\sum_{k\in H\prima} B^{\inver k h\prima k g \inver k \inver h\prima
k}_{s\prima} B_{s}^{\inver k S k}=_{\psi^g} \sum_{k\in H\prima}
A_s^{\inver k h\prima k}B_{s}^{\inver k S k}B_{s\prima}^g$. Since
$\ket\psi=\sum_g B_{s\prima}^g \ket{\psi^g}$, the result follows.
\end{proof}

\section{Local degrees of freedom}
\label{appendix_D}

In this appendix we give the details of the results indicated in
section \eqref{seccion_kitaev_cargas}. Choose any $C\in\Conj G$ and
two indices $i,i\prima$ and define
\begin{equation}
\ket {n} := \ket {n; i,i\prima} := F_\rho^{\inver c_i, \inver q_i n
q_{i\prima}} \ket{\psi_G},
\end{equation}
where $\ket{\psi_G}$ is a ground state of $H_G$ \eqref{Hamiltoniano_kitaev}.
Let $V$ be the space with basis $\ket n$, $n\in \Norm C$. Then there
exists an evident isomorphism $\funcion p {\C[\Norm C]} {V}$. For
$n,n\prima\in \Norm C$ and $s=\partial_0 \rho$,
$s\prima=\partial_1\rho$, consider the operators
\begin{equation}
a_{n,n\prima} := A_s^{\inver q_i n q_i}A_{s\prima}^{\inver
q_{i\prima} n\prima q_{i\prima}}.
\end{equation}
They give a representation $\funcion {a} {\Norm C\times \Norm C}{\GL
V}$ because
\begin{equation}
a_{n_1,n_2} \ket n = \ket{n_1 n \inver n_2}
\end{equation}
so that if $\funcion {\mathcal R} {\Norm C\times \Norm C}{\GL
{\C[\Norm C]}}$ is the representation of appendix A, we have
$a_{n_1,n_2} \spc p = p \spc \mathcal R_{n_1,n_2}$. This has several
consequences. First, if we define in accordance with
\eqref{definicion_E_R_ij} a basis for $V$ with elements
\begin{equation}
\ket {R;j j\prima} := \sum_{n\in \Norm C} \bar \Gamma_R^{jj\prima}
(n) \ket{n}
\end{equation}
then in the new basis
\begin{equation}
a_{n,n\prima} \ket {R;j j\prima} = \sum_{k,k\prima = 1}^{n_R}
\Gamma_R^{kj}(n) \comp \Gamma_{R}^{k\prima j\prima}(n\prima) \ket
{R; k k\prima}.
\end{equation}
In $\C[\Norm C]$ from \eqref{propiedades_e_R_ij} we get
\begin{align}
e_R^{uv} e_{R\prima}^{jj\prima} e_R^{v\prima u\prima} =
\delta_{R,R\prima} \delta_{v,j}\delta_{v\prima,j\prima}
e_R^{uu\prima}, \\%
e_R e_{R\prima}^{jj\prima} = e_{R\prima}^{jj\prima} e_R
=\delta_{R,R\prima} e_R^{jj\prima}
\end{align}
which through the isomorphism $p$ give
\begin{align}\label{cambio_arbitrario_1}
a_R^{uv} {a\prima_{\bar R}}^{u\prima v\prima} \ket {R\prima;j
j\prima} = \delta_{R,R\prima} \delta_{v,j} \delta_{v\prima,j\prima}
\ket {R;uu\prima}, \\%
a_R \ket {R\prima;j j\prima} = a\prima_{\bar R} \ket {R\prima;j
j\prima} = \delta_{R,R\prima} \ket {R;jj\prima},
\end{align}
where
\begin{align}
a_R^{uv} := \frac {n_R}{|\Norm C|} \sum_{n\in \Norm C} \bar
\Gamma_R^{uv} (n) \spc a_{n,1}, \\%
{a\prima_{R}}^{uv} := \frac {n_R}{|\Norm C|} \sum_{n\in \Norm C}
\bar \Gamma_R^{uv} (n) \spc a_{1,n}\\%
a_R = \sum_{u=1}^{n_R} a_R^{uu}, \qquad a\prima_R = \sum_{u=1}^{n_R}
{a\prima_R}^{uu}.
\end{align}
Note that $a_R^{uv},a_R\in \D_s$, ${a\prima_R}^{uv},a\prima_R\in
\D_{s\prima}$.

Finally, from \eqref{conmutacion_extremos_B} we have
\begin{equation}\label{cambio_arbitrario_2}
B_s^{c_k} B_{s\prima}^{\inver c_{k\prima}} \ket {n; i,i\prima} =
\delta_{k,i} \delta_{k\prima,i\prima} \ket {n; i,i\prima}
\end{equation}
and from \eqref{conmutacion_extremos_A}
\begin{equation}\label{cambio_arbitrario_3}
A_s^{\inver q_k q_i} A_{s\prima}^{\inver q_{k\prima} q_{i\prima}}
\ket {n; i,i\prima} = \ket {n; k,k\prima}.
\end{equation}
Note that $\ket{R; jj\prima}$ is just a shorthand for
\eqref{definicion_estado_RCiijj}. Finally, these results must be
complemented with proposition \ref{prop_local_global_closed}.

\section{Single-quasiparticle states}
\label{apendice_excitaciones_solitarias}

\begin{figure}
\includegraphics[width=5 cm]{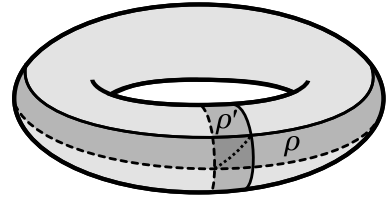}
\caption{
In a torus we can find a pair of closed ribbons $\sigma,\sigma\prima$ such that they form a crossed joint, $(\sigma,\sigma\prima)_+$. This is not possible in a sphere.}\label{figura_crossed_joint}
\end{figure}

Only in a surface of non-trivial topology can we find two closed
ribbons $\sigma,\sigma\prima$ such that $(\sigma,\sigma\prima)_+$, see Fig.~\ref{figura_crossed_joint}.
When such ribbons exist, we can construct for any $h,g\in G$ the
state
\begin{equation}
\ket{\psi_{hg}} := F_{\sigma}^{hg}L_{\sigma\prima}^{\inver g}\prod_v
A_v \vket 1.
\end{equation}
The state $\ket\psi$ is not zero, because (\ref{conmutacion
crossed_joint},\ref{conmutacion_extremos_closed})
\begin{equation}
L_{\sigma\prima}^{g}L^{\inver h}_\sigma \ket{\psi_{hg}} =\prod_v
A_v\vket 1.
\end{equation}
At most, it can have an excitation at $(v,f)=\partial
\sigma=\partial \sigma\prima$. In fact
\eqref{conmutacion_extremos_closed}
\begin{equation}
B_f \ket{\psi_{hg}} = \delta_{gh,hg} \ket{\psi_{hg}},
\end{equation}
showing that for non-abelian groups single-quasiparticle excitations
exist.

\section{Condensation}
\label{apendice_condensacion}

In this appendix we give the details of the calculations of certain
expected values for ribbon operators $\langle F \rangle$ for a
ground state of the Hamiltonian \eqref{Hamiltoniano_NM} for
$N\subset M\subset G$ subgroups of $G$, $N$ normal. Such ground states are characterized by the conditions \eqref{condiciones_GS_locales}.
For $S\subset G$, $g\in G$ we introduce the notation
\begin{equation}
\delta_{g,S}:=\delta_{gS,S}.
\end{equation}

\begin{prop}
Let $h,g\in G$, $n\in N$ and $\ket\psi,\ket\psi\prima\in \Hilb_G$ satisfy \eqref{condiciones_GS_locales}.

\noindent(i) For an arbitrary ribbon $\rho$
\begin{align}
    F_\rho^{h,g}\ket\psi &= \delta_{g,M}F_\rho^{hn,g}\ket\psi,\label{GS_any_ribbon} \\
    \bra{\psi\prima} F_\rho^{h,g} \ket\psi &= \delta_{h,M} \bra{\psi\prima} F_\rho^{h,gn}\ket\psi.\label{GS_xpctd_any_ribbon}
\end{align}

\noindent(ii) If $\rho$ is an open ribbon
\begin{align}
    F_\rho^{NM}\ket\psi &= \ket\psi,\label{GS_open_ribbon} \\
    \bra{\psi\prima} F_\rho^{h,g}\ket\psi&=\delta_{h,N}\delta_{g,M} \frac 1 {|M|} \braket {\psi\prima}{\psi}.\label{GS_xpctd_open_ribbon}
\end{align}

\noindent(iii) If $\sigma$ is a boundary ribbon
\begin{align}
    F_\sigma^{MN}\ket\psi &= \ket\psi,\label{GS_boundary_ribbon} \\
    \bra{\psi\prima} F_\sigma^{h,g}\ket\psi &= \delta_{h,M}\delta_{g,N} \frac 1 {|N|} \braket {\psi\prima}{\psi}.\label{GS_xpctd_boundary_ribbon}
\end{align}

\end{prop}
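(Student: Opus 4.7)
The plan is to reduce everything to two propagation claims about the ground-state projectors and combine them via the commuting factorization $F_\rho^{h,g}=L_\rho^h T_\rho^g$ from \eqref{F_segun_TL}. First, I would show (A) $L_\rho^n\ket\psi=\ket\psi$ for every ribbon $\rho$ and every $n\in N$, and (B) $T_\rho^g\ket\psi=0$ whenever $g\notin M$, both by induction on the length of $\rho$ using the recursion \eqref{definicion_ribb_recursiva}. The base cases of (A) are $L_\tau^h=1$ on a direct triangle (by \eqref{unidades}) and the absorption $L_\tau^n L_e^N=L_e^N$ on a dual triangle, which combined with \eqref{condiciones_GS_locales} gives $L_\tau^n\ket\psi=\ket\psi$; the inductive step expands $L_\rho^n=\sum_k F_{\rho_1}^{n,k}L_{\rho_2}^{\inver k n k}$ and uses $\inver k n k\in N$ by normality to apply the hypothesis to $\rho_2$ and collapse the sum to $L_{\rho_1}^n\ket\psi=\ket\psi$. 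For (B), the direct-triangle base case is $T_\tau^g T_e^M=\delta_{g,M}T_\tau^g$, and the inductive step uses $T_\rho^g=\sum_k T_{\rho_1}^k T_{\rho_2}^{\inver k g}$ on disjoint edges to force both $k$ and $\inver k g$ into $M$, hence $g\in M$. Part (i) then follows immediately: (B) yields the factor $\delta_{g,M}$, and (A) combined with $[L_\rho^h,T_\rho^g]=0$ allows me to insert $L_\rho^n$ to convert $L_\rho^h\ket\psi$ into $L_\rho^{hn}\ket\psi$, giving \eqref{GS_any_ribbon}; the bra version \eqref{GS_xpctd_any_ribbon} is obtained by the analogous bra-side argument, noting that $L_e^N$ and $T_e^M$ are Hermitian projectors and using the adjoint relation $(F_\rho^{h,g})^\dagger=F_\rho^{\inver h,g}$ from \eqref{producto_ribbons}.

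For part (ii), decompose $F_\rho^{NM}=L_\rho^N T_\rho^M$ with $L_\rho^N=|N|^{-1}\sum_{n\in N}L_\rho^n$ and $T_\rho^M=\sum_{m\in M}T_\rho^m$. Claim (A) immediately gives $L_\rho^N\ket\psi=\ket\psi$, and claim (B) together with $\sum_{g\in G}T_\rho^g=1$ from \eqref{unidades} gives $T_\rho^M\ket\psi=\sum_{g\in G}T_\rho^g\ket\psi=\ket\psi$, proving \eqref{GS_open_ribbon}. For the matrix-element identity \eqref{GS_xpctd_open_ribbon}, I would take the expectation of $F_\rho^{NM}\ket\psi=\ket\psi$ against $\bra{\psi\prima}$: part (i) shows that $\bra{\psi\prima}F_\rho^{h,g}\ket\psi$ vanishes unless $h\in N$ and $g\in M$ and assumes a single constant value $X$ throughout that region, so the identity $\braket{\psi\prima}{\psi}=|M|X$ fixes $X=\braket{\psi\prima}{\psi}/|M|$.

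Part (iii) is the delicate step, where the boundary hypothesis is essential. Writing $F_\sigma^{MN}=L_\sigma^M T_\sigma^N$, I need both $L_\sigma^M\ket\psi=\ket\psi$ and $T_\sigma^N\ket\psi=\ket\psi$, with $M$ and $N$ playing roles swapped with respect to (ii); these are non-Abelian Stokes-type identities that depend on $\sigma$ bounding a simply connected region $R$. The main obstacle is making them rigorous. For $T_\sigma^N\ket\psi=\ket\psi$ I would argue that the holonomy of $\sigma$ around $R$ is a product of conjugated plaquette fluxes $B_f^N$ for $f\in R$, which lies in $N$ by normality, so $T_\sigma^g\ket\psi=0$ for $g\notin N$ and the sum collapses as in (ii); this is made precise by completing $\sigma$ with its direct complement into a direct block and invoking lemma \ref{lema_bloques}(ii) with $H=N$. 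Dually, $L_\sigma^M\ket\psi=\ket\psi$ follows by completing $\sigma$ with its dual complement into a dual block and invoking lemma \ref{lema_bloques}(i) with $H=M$, using $A_v^M\ket\psi=\ket\psi$ for every $v\in R$; corollary \ref{cor_bloques} and the inversion identities of lemma \ref{lema_inversion} may be needed to reduce the general boundary ribbon to the block form required by the lemma. Once both identities are in place, \eqref{GS_xpctd_boundary_ribbon} is extracted by the same averaging argument as in (ii), with $N$ and $M$ interchanged.
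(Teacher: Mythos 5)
Your decomposition into the one-sided claims (A) $L_\rho^n\ket\psi=\ket\psi$ and (B) $T_\rho^g\ket\psi=0$ for $g\nin M$ is sound, and combined with $F_\rho^{h,g}=L_\rho^hT_\rho^g$ it does yield the ket identity \eqref{GS_any_ribbon}; your base cases are exactly two of the four triangle identities the paper's proof starts from. But the matrix-element identity \eqref{GS_xpctd_any_ribbon} does \emph{not} follow by "the analogous bra-side argument." Since $(F_\rho^{h,g})^\dagger=F_\rho^{\inver h,g}$ preserves the label $g$, adjointing \eqref{GS_any_ribbon} applied to $\ket{\psi\prima}$ only gives $\bra{\psi\prima}F_\rho^{h,g}\ket\psi=\delta_{g,M}\bra{\psi\prima}F_\rho^{\inver n h,g}\ket\psi$: the constraint stays on $g$ and the $N$-shift stays on $h$, whereas \eqref{GS_xpctd_any_ribbon} puts the constraint on $h$ and the $N$-shift on $g$. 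That swap is genuinely new content; it comes from the two \emph{sandwiched} triangle identities $T_\tau^M L_\tau^h T_\tau^M=\delta_{h,M}T_\tau^M L_\tau^h T_\tau^M$ (dual triangles) and $L_\tau^N T_\tau^g L_\tau^N=L_\tau^N T_\tau^{gn} L_\tau^N$ (direct triangles), which use the edge conditions on the bra and the ket simultaneously and must then be propagated through the recursion \eqref{definicion_ribb_recursiva}. You have to prove these separately; nothing in (A), (B) or their adjoints implies them.

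The second and more decisive gap is your derivation of \eqref{GS_xpctd_open_ribbon}. Part (i) gives only $\delta_{g,M}$, $\delta_{h,M}$ and invariance under $h\mapsto \inver n h$, $g\mapsto gn$; it gives neither $\delta_{h,N}$ nor constancy of the matrix element over all of $g\in M$ (only constancy on cosets $gN$), so the count $\braket{\psi\prima}{\psi}=|M|X$ is unjustified. The cleanest way to see this cannot work: part (i) holds for \emph{every} ribbon, yet (ii) and (iii) assign different values, $\delta_{h,N}\delta_{g,M}/|M|$ versus $\delta_{h,M}\delta_{g,N}/|N|$, to open and boundary ribbons; no argument using (i) alone can tell them apart. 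The missing input for (ii) is precisely where openness enters: the conditions $B_f^N\ket\psi=\ket\psi$ at the end face and $A_v^M\ket\psi=\ket\psi$ at the end vertex, pushed through the end commutation relations \eqref{conmutacion_extremos_A} and \eqref{conmutacion_extremos_B}, which is what forces $\delta_{h,N}$ and the invariance under $g\mapsto gm$, $m\in M$. Your sketch of (iii) does identify the two correct geometric ingredients ($T_\sigma^g\ket\psi=0$ for $g\nin N$ from the enclosed plaquettes, $L_\sigma^m\ket\psi=\ket\psi$ from the enclosed vertices, the latter being what the paper's explicit comb of vertex operators achieves; note that lemma \ref{lema_bloques}(i) as stated wants $H$ normal, and $M$ is not assumed normal here), but the final "averaging as in (ii)" inherits both gaps above, since it needs the $\delta_{h,M}$ from the unproven \eqref{GS_xpctd_any_ribbon}.
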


\begin{proof}
\noindent(i) If $\rho$ is a triangle this is a direct consequence of
the identities $L_\tau^nL_\tau^N=L_\tau^N$, $T_\tau^g T_\tau^M =
\delta_{g,M} T_\tau^g T_\tau^M$, $T_\tau^M L_\tau^h
T_\tau^M=\delta_{h,M} T_\tau^M L_\tau^h T_\tau^M$ and $L_\tau^N
T_\tau^g L_\tau^N= L_\tau^N T_\tau^{gn} L_\tau^N$. For general
ribbons, just apply \eqref{definicion_ribb_recursiva}.

\noindent(ii) From (i) we get $F_\sigma^{NM}\ket\psi =
F_{\sigma}^{1G} \ket\psi = \ket\psi$ using \eqref{unidades}. Let $s_i =
\partial_i \rho$ and set $\xpctd \cdot:=\bra{\psi\prima}\cdot \ket\psi$. Then from (\ref{conmutacion_extremos_B}, \ref{condiciones_GS_locales}) we have $\xpctd {F_\rho^{h,g}} = \xpctd {B_{s_0}^N
F_\rho^{h,g}} = \xpctd {F_\rho^{h,gm}B_{s_0}^{Nh}} =\delta_{h,N}
\xpctd {F_\rho^{h,gm}}$ and for $m\in M$ from (\ref{conmutacion_extremos_A}, \ref{condiciones_GS_locales}) we have $\xpctd
{F_\rho^{h,g}} = \xpctd {F_\rho^{h,g} A_{s_1}^m} =  \xpctd
{A_{s_1}^m F_\rho^{h,gm}} = \xpctd {F_\rho^{h,gm}}$. Thus $ \xpctd
{F_\sigma^{h,g}} = \delta_{g,M}\delta_{h,N} \xpctd {F_\sigma^{1,1}}$
and the result follows since $\xpctd {F_\rho^{1,M}}=\xpctd
{F_\rho^{1,G}}=\xpctd 1$.

\noindent(iii) Using the notation of appendix \ref{appendix_C}, $p_\sigma$ encloses a disc $R\subset R_\psi^N$ so
that $F_\sigma^{MN}\ket\psi = \ket\psi$. Also,
$L_\sigma^m\ket\psi=\ket\psi$ for any $m\in M$. To check this,
suppose for example that the edges $\Edual_\rho$ lie outside $R$ and
choose for each vertex $v$ in $R$ a ribbon $\rho_v$ with
$p_{\rho_v}$ a path inside $R$ from $v_0=v_{\partial_0\rho}$ to $v$.
If we set $A^m_\rho= A_{v_0}^m \prod_{v\neq v_0} \sum_k
 T_{\rho_v}^k A_v^{\inver kmk}$, with the product running over
all vertices in $R$, one can check that $\ket\psi=A^m \ket\psi =
L_\sigma^m\ket\psi$. The other case is similar. Thus, for $m\in M$
we get $\xpctd {F_\sigma^{h,g}} = \delta_{g,N} \xpctd
{F_\sigma^{hm,g}}$, so that $ \xpctd {F_\sigma^{h,g}} =
\delta_{g,N}\delta_{h,M} \xpctd {F_\sigma^{1,1}}$. The result
follows since $\xpctd {F_\rho^{1,N}}=\xpctd {F_\rho^{1,G}}=\xpctd
1$.
\end{proof}

A state satisfying (\ref{GS_open_ribbon},\ref{GS_boundary_ribbon})
for all open ribbons $\rho$ and boundary ribbons $\sigma$ cannot
contain vertex, face or edge excitations. Therefore, these
conditions characterize ground states.

We proceed to check \eqref{condensacion}, the derivation of
\eqref{condensacion2} is similar. From (\ref{base_ribbclosed},
\ref{GS_xpctd_boundary_ribbon}) we get
\begin{equation}\label{esperado1}
    \xpctd {K_\sigma^{DC}} = \frac{|C\cap N|}{|N||G|} \sum_{g\in G} |D\cap \inver gMg|,
\end{equation}
where $D\in\Conj {\Norm C}$, $C\in \Conj G$. This together with
\eqref{base_ribbclosed_2} gives \eqref{condensacion} because if
${e_M \!\uparrow}$ is the induced representation in $G$ of the
identity representation in $M$
\begin{equation}
    \chi_{e_M \!\uparrow} (g)= \frac 1 {|M|} \sum_{k\in G} \delta_{g,\inver k M k}.
\end{equation}

As for \eqref{condensacion2auxiliar}, from
\eqref{GS_xpctd_open_ribbon} we have
\begin{equation}\label{subcond1}
|M|\xpctd {\sum_{n\in \Norm C} \Gamma_R^{jj\prima} (n)F_\rho^{\inver
c_i, \inver q_i n q_i\prima}} = \delta_{c_i,N} \sum_{n\in
M_c^{i,i\prima}} \Gamma_R^{jj\prima}(n),
\end{equation}
where $M_c^{i,i\prima}:=\Norm C\cap q_i M \inver q_i\prima$. If
$M_c^{i,i\prima}$ is empty, we are done. Else,
$M_c^{i,i\prima}=M_c^{i,i}s$ for some $s\in \Norm C$, so that
$\sum_{n\in M_c^{i,i\prima}}\Gamma_R (n) = \sum_{n\in
M_c^{i,i}}\Gamma_R (n) \Gamma_R(s)$. But\cite{serre} $\sum_{n\in
M_c^{i,i}}\Gamma_R (n)=0$ if $(\chi_R,1)_{M_C^{i,i}}=0$.

\end{document}